\definecolor{orange}{rgb}{0.698,0.133,0.133} 
\definecolor{green}{rgb}{0.33,0.42,0.18} 
\definecolor{greenf}{rgb}{0.13,0.55,0.13} 
\definecolor{newcolor}{rgb}{0.556364, 0.367273, 0.0763636}
\newcommand{\bdots}{\mathinner{\mkern1mu\raise\p@\vbox{\kern8\p@\hbox{.}}\mkern2mu\raise4\p@\hbox{.}\mkern2mu\raise7\p@\hbox{.}\mkern1mu}}
\newtheorem{theorem}{Theorem}[section]
\newtheorem{definition}[theorem]{Definition}
\newtheorem{proposition}[theorem]{Proposition}
\newtheorem{corollary}[theorem]{Corollary}
\newtheorem{lemma}[theorem]{Lemma}
\newtheorem{ur:remark}[theorem]{Remark}
\newenvironment{remark}{\begin{ur:remark}\rm}{\end{ur:remark}}
\newtheorem{notation}[theorem]{Notation}
\newtheorem{ur:remarks}[theorem]{Remarks}
\newenvironment{remarks}{\begin{ur:remarks}\rm}{\end{ur:remarks}}
\newtheorem{ur:exemple}[theorem]{Example}
\newenvironment{example}{\begin{ur:exemple}\rm}{\end{ur:exemple}}
\newtheorem{ur:examples}[theorem]{Examples}
\newenvironment{examples}{\begin{ur:examples}\rm}{\end{ur:examples}}
\newtheorem{pr}{}
\newcommand{\bpr}{\begin{pr} \begin{rm}}
\newcommand{\epr}{\end{rm} \end{pr}}
\newcommand{\nc}{{\mathbb C}}
\newcommand{\fraks}{{\mathfrak S}}
\newcommand{\nz}{{\mathbb Z}}
\newcommand{\nn}{{\mathbb N}}
\def\ri{{\mathrm i}}
\newcommand{\p}[1]{\left({#1}\right)}
\newcommand{\stirlingtwo}[2]{\genfrac{\{}{\}}{0pt}{}{#1}{#2}}
\newcommand{\av}[1]{\left|{#1}\right|}
\newcommand{\qu}[1]{\left[{#1}\right]}
\newcommand{\bm}[1]{\mbox{\boldmath${#1}$\unboldmath}}
\newcommand{\mb}[1]{\mathbf{{#1}}}
\newcommand{\brr}[1]{\left\{{#1}\right\}}
\def\gl{{\mathrm{GL}}}
\def\Mat{{\mathrm{Mat}}}
\def\Id{{\mathrm{Id}}}
\def\Hom{{\mathrm{Hom}}}
\def\Sym{{\mathrm{Sym}}}
\def\sym{{\mathrm{sym}}}
\def\expodot{\exp_{\odot}}
\def\odots{\odot\cdots\odot}
\newlength{\Oldarrayrulewidth}
\newcommand{\Cline}[2]{%
  \noalign{\global\setlength{\Oldarrayrulewidth}{\arrayrulewidth}}%
  \noalign{\global\setlength{\arrayrulewidth}{#1}}\cline{#2}%
  \noalign{\global\setlength{\arrayrulewidth}{\Oldarrayrulewidth}}}
\providecommand{\bysame}{\leavevmode\hbox to3em{\hrulefill}\thinspace}
\begin{document}
\title{{\huge{Linearised Higher Variational Equations}}}
\author{{{\Large{Sergi Simon}}} \\
Department of Mathematics\\
University of Portsmouth \\
Lion Gate Bldg, Lion Terrace\\
Portsmouth PO1 3HF, UK}

\maketitle

\begin{abstract}
This work explores the tensor and combinatorial constructs underlying the linearised higher-order variational equations $\mathrm{LVE}_{\psi}^k$ of
a generic autonomous system along a particular solution $\bm\psi$. The main result of this paper is a compact yet explicit and computationally amenable form for said variational systems and their monodromy matrices. Alternatively, the same methods are useful to retrieve, and sometimes simplify, systems satisfied by the coefficients of the Taylor expansion of a formal first integral for a given dynamical system. This is done in preparation for further results within Ziglin-Morales-Ramis theory, specifically those of a constructive nature.
\end{abstract}


\section{Motivation and first definitions}
\subsection{Introduction}
\emph{Integrability}, an informal word for reasonably simple solvability, is an important problem in Dynamical Systems. Its opposite phenomenon, and specifically low predictability with respect to time, is usually summarised under the term \emph{chaos}. If the system is Hamiltonian, as are most problems in Mechanics, the ``chaos vs solvability" disjunctive is doubly advantageous. On one hand, it is amenable to the techniques of Symplectic Geometry. On the other, theory and empirics yield the specific, thus observable integrability condition described in \S \ref{moralesramisziglin}.

The introduction of the algebraic approach by Ziglin, Morales-Ruiz and Ramis produced hallmark contributions to the study of the integrability of Hamiltonian systems \cite{Au01a,Mo99a,MoralesRamis,Ziglin}, essentially couched on a study of the invariants of a given matrix group, associated to a linear system: the \emph{first-order variational equations} introduced in \ref{intro1}.
A second step forward was carried out by Morales-Ruiz, Ramis and Sim\'o (\cite{MoRaSi07a}) in order to extend the preceding Galoisian framework to the groups of the higher-order variational equations along a particular solution.

The second step described above is the driving force behind this paper. A constructive version of the Morales-Ramis-Sim\'o theorem was already started in \cite{Ap10a} and tangentially tackled from another viewpoint in \cite{ABSW} (see \S \ref{firstintegrals}) and the present work aims at expanding this effort by offering a closed-form expression for the linearised higher variationals. May the reader bear in mind that \emph{nowhere from \S \ref{symsection} onwards, except for \S \ref{samexample}, is the system required to be Hamiltonian}.

\subsection{Dynamical systems and variational equations}\label{intro1}

In accordance with results described in \S \ref{moralesramisziglin} and thereafter, we need to observe the following convention outside of Sections \ref{symsection} and \ref{syminfsection}: dependent and independent variables for all dynamical systems will be allowed to be \emph{complex}. Any open set $T\subseteq \mathbb{P}^1_\nc$ is an admissible domain for the time variable, embedded into the Riemann sphere to include $t=\infty$ as a valid singularity.
Consider an autonomous holomorphic dynamical system:
\begin{equation} \label{DS}\tag{\textrm{DS}}
\dot{\bm{z}} = X\left( \bm{z}\right) ,\qquad \mbox{where } X:U\subseteq \nc^n\to \nc^n.
\end{equation}
Conserved quantities and solution curve foliations are defined similarly to their real-valued counterparts. Indeed, a \textbf{first integral} of \eqref{DS} is a function $F:U\to \nc$ constant along every solution of \eqref{DS}.
And for every $\bm{z}\in U$, the unique solution $\varphi\p{t,\bm{z}}$ of \eqref{DS} such that
$\varphi\p{0,\bm{z}}=\bm{z}$ allows us to define a function $\varphi\p{\cdot,\cdot}$ in $n+1$ variables called the \textbf{flow} of \eqref{DS}.
Clarifying preliminary comments are in order whenever a particular solution $\bm{\psi}\p{t}$ is considered:
\begin{enumerate}
\item partial derivatives $\frac{\partial^k}{\partial\bm{z}^k}\varphi\p{t,\bm{\psi}}$  are multilinear functions of increasing order (or multidimensional matrices, see e.g. \cite{GelfandKapranovZelevinsky}) and appear in the Taylor series of the flow along $\bm\psi$:
\begin{equation}\label{taylorflow}
\varphi\p{t,\bm{z}} = \varphi\p{t,\bm{\psi}} + \frac{\partial \varphi\p{t,\bm{\psi}}}{\partial\bm{z}} \brr{\bm{z}-\bm{\psi}}+\frac{1}{2!}\frac{\partial^2\varphi\p{t,\bm{\psi}}}{\partial\bm{z}^2} \brr{\bm{z}-\bm{\psi}}^2+ \dots;
\end{equation}
\item each of these derivatives $\frac{\partial^k}{\partial\bm{z}^k}\varphi\p{t,\bm{\psi}}$ satisfies an echeloned set of differential systems, depending on the previous $k-1$ partial derivatives and customarily called \textbf{variational equations or systems}. They are explicitly called \textbf{higher-order} whenever $k\ge 2$.
\item variational system for $k=1$ is \emph{linear} and satisfied by the linear part of the flow along $\psi$:
\begin{equation}
\label{VE}
\tag{$\mathrm{VE}_{\psi}$}\dot{Y_1} =  A_1 Y_1, \qquad A_1\p{t}  :=  X'\p{\bm{\psi}}\in \mathrm{Mat}_n\p{K},
\end{equation}
$K=\nc\p{\psi}$ being the smallest differential field containing $\nc\p{t}$ and the solution.
\item For $k\ge 2$, however, the system is not linear, yet a linearised version may be found. \emph{The aim of the present paper is to do so with explicit formulae.}
\end{enumerate}

\subsection{Morales-Ramis-Ziglin theory and extensions} \label{moralesramisziglin}

Heuristics of all results within the Ziglin-Morales-Ramis-Sim\'o theoretical framework are firmly rooted in the following principle, expected to affect a widespread class of systems:
\begin{quote}
\emph{If general system \eqref{DS} is ``integrable" in some reasonable sense, then the  system satisfied by each of the partial derivatives of the flow at every particular solution $\bm{\psi}$ of \eqref{DS} must be also integrable in an accordingly reasonable sense.}
\end{quote}
Any attempt at ad-hoc formulations of this heuristic principle has an asset and a drawback:
\begin{itemize}
\item there \emph{is} a valid integrability axiom for linear systems, e.g. \eqref{VE}: the solvability of the Zariski identity component of the (linear algebraic) \emph{differential Galois group} \cite{Mo99a,SingerVanderput};
\item an explicit incarnation of this principle requires a clear notion of ``integrability" for \eqref{DS}.
\end{itemize}
The latter item is cleared in the Hamiltonian case by the \emph{Liouville-Arnold Theorem} establishing a sufficient condition for
a system to admit, at least locally, a new set of variables rendering it integrable by quadratures. Said condition is the hypothesis on $H$ in the following:
\begin{theorem}[Morales-Ruiz, Ramis, 2001] \label{moralesramis} Let $X_H$ be an $n$-degree-of-freedom Hamiltonian system
having $n$ independent first integrals in pairwise involution, defined on a neighborhood of an integral curve
$\psi $. Then, Galoisian identity component $\mathrm{Gal}\p{\mathrm{VE}_\psi}^\circ$ is an
abelian group. $\hfill\square$
\end{theorem}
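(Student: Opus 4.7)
My plan is to leverage the Hamiltonian structure of $\mathrm{VE}_\psi$ together with the $n$ independent first integrals in involution supplied by Liouville integrability, and then invoke a symplectic-algebraic lemma that forces the identity component of the Galois group to be abelian. First, since $X_H$ is Hamiltonian, its Jacobian $X_H'(\bm{\psi})$ lies in the symplectic Lie algebra $\mathfrak{sp}(2n,\nc)$ at every point of $\bm{\psi}$; hence every fundamental matrix of $\mathrm{VE}_\psi$ takes values in $\mathrm{Sp}(2n,\nc)$ and, by the standard Picard--Vessiot correspondence, $\mathrm{Gal}(\mathrm{VE}_\psi)\subseteq\mathrm{Sp}(2n,\nc)$.

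Next, for each first integral $F_i$ of $X_H$, I would linearise it along $\bm{\psi}$ by setting $\phi_i(Y):=dF_i(\bm{\psi}(t))\cdot Y$. Differentiating the flow-invariance identity $F_i(\varphi(t,\bm{z}))=F_i(\bm{z})$ with respect to $\bm{z}$ and evaluating along $\bm{\psi}$ shows that $\phi_i$ is constant along every solution of $\mathrm{VE}_\psi$. Thus $\phi_1,\ldots,\phi_n$ are linear first integrals of $\mathrm{VE}_\psi$, and by the Picard--Vessiot correspondence they correspond to $\mathrm{Gal}(\mathrm{VE}_\psi)^\circ$-invariant elements of the dual $V^\ast$ of the solution space $V$; functional independence of the $F_i$ on a neighbourhood of $\bm{\psi}$ descends to linear independence of $\phi_1,\ldots,\phi_n$ in $V^\ast$ at a generic point of the trajectory.

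Now I would use the symplectic form $\omega$ to identify each $\phi_i$ with a vector $w_i\in V$ via $\phi_i(Y)=\omega(w_i,Y)$. The involution relations $\{F_i,F_j\}\equiv 0$ along $\bm{\psi}$ translate, through $\omega$, into $\omega(w_i,w_j)=0$, so the span $W:=\langle w_1,\ldots,w_n\rangle$ is an $n$-dimensional isotropic, hence Lagrangian, subspace of $V$. Because $\mathrm{Gal}(\mathrm{VE}_\psi)^\circ$ acts symplectically and fixes every $\phi_i$, it must fix every $w_i$ and therefore stabilises $W$ pointwise. The conclusion then hinges on the algebraic fact that the pointwise stabiliser of a Lagrangian in $\mathrm{Sp}(2n,\nc)$ is abelian: in a Darboux basis adapted to $W$ every such element is a block matrix with identity diagonal blocks and a symmetric off-diagonal block, and matrices of this form commute.

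The main obstacle I foresee is the rigorous justification that the $\phi_i$ are genuinely fixed by the entire identity component rather than merely stabilised up to characters or permuted under monodromy, together with the control of their linear independence near possible singularities of the trajectory $\bm{\psi}$ (which is why the statement only claims abelianness for the identity component $\mathrm{Gal}(\mathrm{VE}_\psi)^\circ$ rather than for the full group). Once these technical points are pinned down, the concluding symplectic linear algebra is routine and yields the desired abelian character of $\mathrm{Gal}(\mathrm{VE}_\psi)^\circ$.
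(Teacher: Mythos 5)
The paper does not prove this theorem: it is quoted from Morales-Ruiz and Ramis, with the reader referred to \cite{MoralesRamis} and \cite{Mo99a} for the proof. Measured against that proof, your sketch captures the right overall philosophy (symplectic Galois group, invariants coming from the first integrals, a final linear-algebra step), but it contains a genuine gap precisely at the point you flag as a ``technical point to be pinned down'', and the gap is more serious than linear independence ``near possible singularities''. Functional independence of $F_1,\dots,F_n$ on a neighbourhood of the integral curve does \emph{not} imply that the linear forms $\phi_i(Y)=dF_i(\bm{\psi}(t))\cdot Y$ are non-zero, let alone independent, along $\bm{\psi}$: the curve may lie in a critical set of some $F_i$, and in the motivating examples (e.g.\ straight-line solutions of homogeneous potentials) this actually happens. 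When $dF_i(\bm\psi)\equiv 0$ your $\phi_i$ is identically zero and contributes nothing, so the Lagrangian subspace $W$ need not exist and the ``pointwise stabiliser of a Lagrangian'' argument collapses.

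The actual proof circumvents this with Ziglin's lemma in the form used by Morales-Ruiz and Ramis: a meromorphic first integral of \eqref{DS} yields, via the \emph{lowest-order non-vanishing homogeneous term} of its Taylor expansion along $\bm\psi$, a non-trivial \emph{rational} (homogeneous polynomial, of a priori arbitrary degree) invariant of $\mathrm{Gal}\p{\mathrm{VE}_\psi}$ --- not a linear one. The $n$ integrals in involution then produce $n$ independent rational first integrals in involution for the linear Hamiltonian system, and the conclusion rests on a genuinely harder Lie-algebraic statement: a Lie algebra of linear Hamiltonian vector fields on $\nc^{2n}$ admitting $n$ independent rational first integrals in pairwise involution is abelian. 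Your concluding computation (unipotent block matrices with symmetric off-diagonal block commute) is correct as linear algebra, but it only covers the special case in which all the invariants happen to be linear; to repair the argument you would need to replace steps two and three by the jet/rational-invariant machinery above.
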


See \cite[Cor. 8]{MoralesRamis} or \cite[Th. 4.1]{Mo99a} for a precise statement and a proof.

\begin{theorem} [Morales-Ruiz, Ramis, Sim\'o, 2005, \protect{\cite[Th. 5]{MoRaSi07a}}] \label{moralesramissimo}
Let $H$ be as in the previous theorem. Let $G_k$ be the differential Galois group of the $k$-th variational equations $\mathrm{VE}^k_\psi$, $k\ge 1$, and $G:=\varprojlim G_k$
the formal differential Galois group (inverse limit of the groups) of $X_H$ along $\psi$.
Then, the identity components of the Galois groups $G_k$ and $G$ are abelian. $\hfill\square$
\end{theorem}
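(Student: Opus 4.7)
The plan is to induct on $k$, with the $k=1$ case furnished by Theorem \ref{moralesramis}. For each $k$, the higher-order variational system $\mathrm{VE}^k_\psi$ is non-linear but admits a linearisation $\mathrm{LVE}^k_\psi$ acting on an extended phase space (symmetric tensors of $\mathrm{VE}_\psi$-solutions up to order $k$, essentially the object this paper aims to describe explicitly). The differential Galois group of $\mathrm{LVE}^k_\psi$ coincides with $G_k$, so we may reason at the level of a linear system over $K=\nc(\psi)$. The first technical point is that $\mathrm{LVE}^k_\psi$ contains $\mathrm{LVE}^{k-1}_\psi$ as a subsystem, yielding surjective restriction morphisms $G_k\twoheadrightarrow G_{k-1}$; from this one gets a tower of linear algebraic groups with $G=\varprojlim G_k$ a pro-algebraic group whose identity component is $G^\circ = \varprojlim G_k^\circ$.

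The core of the argument is to promote the hypothesis of Liouville integrability to commuting invariants of $G_k^\circ$. Let $F_1,\ldots,F_n$ be the independent Poisson-commuting first integrals of $H$. Along $\bm\psi$, the Taylor expansion at order $\le k$ of each $F_i$ decomposes into multilinear forms on the phase space of $\mathrm{LVE}^k_\psi$; the lemma recalled in the commented paragraph (cf.\ \cite[Lemma 9]{MoralesRamis}) ensures each such coefficient is a rational $G_k$-invariant. The involution relations $\{F_i,F_j\}=0$ propagate, order by order, to Poisson-type brackets on the linearised ambient space induced by the symmetric-tensor lift of the canonical symplectic form on $\nc^{2n}$. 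Consequently one obtains $n$ independent pairwise-commuting rational first integrals of $G_k^\circ$ defined on the phase space of $\mathrm{LVE}^k_\psi$, preserving a suitable symplectic structure in the sense used in the proof of Theorem \ref{moralesramis}.

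From here the argument for $G_k^\circ$ being abelian is formally the same as in the first-order case: a connected linear algebraic subgroup of the linear symplectic group of the lifted phase space admitting a maximal (Lagrangian) family of algebraically independent commuting rational invariants is necessarily abelian, since its Lie algebra is forced to lie in the centraliser of a Lagrangian involutive family of Hamiltonians. Finally, since each $G_k^\circ$ is abelian, so is the inverse limit $G^\circ=\varprojlim G_k^\circ$, closing the proof.

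The step I expect to be the main obstacle is the second one: one has to make sure that the ``symplectic lift'' to $\mathrm{LVE}^k_\psi$ actually pairs the Taylor coefficients of the $F_i$ into commuting quantities, and that independence of $F_1,\ldots,F_n$ at the non-linearised level transfers to algebraic independence of the invariants at every order $k$. This is precisely the point where the combinatorial/tensorial description of $\mathrm{LVE}^k_\psi$ developed in the rest of this paper is most needed, and where one must work hardest to rule out degeneracies at points where $\bm\psi$ is special.
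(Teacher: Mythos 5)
The paper does not prove this theorem: it is quoted from Morales-Ruiz, Ramis and Sim\'o \cite{MoRaSi07a} (the terminal $\square$ marks a citation, not an in-text proof), so there is no internal argument to compare yours against. Measured against the published proof, your sketch follows the right broad lines --- linearise $\mathrm{VE}^k_\psi$, turn the Taylor coefficients of the commuting first integrals into rational invariants of $G_k$, argue that the Lie algebra of $G_k^\circ$ consists of Hamiltonians constant along a Lagrangian involutive family and hence commutes, then pass to the inverse limit. Two of your steps are, however, themselves the substantive theorems of \cite{MoRaSi07a} rather than routine observations. First, the identification of $\mathrm{Gal}\p{\mathrm{LVE}^k_\psi}$ with $G_k$ --- equivalently, the independence of the Galois group from the choice of linearised completion --- is a genuine theorem, alluded to in the paragraph following the statement. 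Second, and more importantly, $G_k$ is \emph{not} naturally a subgroup of the linear symplectic group of the lifted phase space: the matrices $A_{\mathrm{LVE}^k_\psi}$ are block-triangular of size $D_{n,k}$ and do not sit inside any $\mathfrak{sp}$, so the abelianity criterion cannot be stated for ``connected algebraic subgroups of the symplectic group of the lifted space''. The correct ambient object is the group of $k$-jets of symplectic diffeomorphisms fixing a point, whose Lie algebra is that of $k$-jets of Hamiltonian vector fields; the Poisson bracket, the invariants coming from the $F_i$, and the Lagrangian involution argument must all be carried out in that jet category. That is precisely the ``symplectic lift'' you flag as the main obstacle, and it is where essentially all of the content of the cited proof lives; as written, your abelianity step would need to be reformulated in the jet framework before it could be made precise.
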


Theorem \ref{moralesramissimo} makes use of  the language of jets,
after proving non-linear $\mathrm{VE}^k_\psi$ equivalent to \emph{any} consistent linearised completion. Efforts towards a constructive version of this main Theorem, as well as the line of study described in \S \ref{firstintegrals}, are hampered by a lack of consensus on the explicit block structure of this completion. The present work, summarised in its main result (Proposition \ref{LVEprop}) aims at contributing to fill in this gap. Hence, outcomes will be restricted to symbolic calculus
and bear no new results in the above theoretical framework.

\begin{notation}\rm \label{notalex} Part of the conventions listed below were already introduced in \cite{ABSW}.
\begin{enumerate}
\item[\textbf{1.}] The \emph{modulus} $i=\left|\mathbf{i}\right|$ of a multi-index $\mathbf{i}=\left(i_1 ,\ldots , i_n\right)\in\nz^n$ is the sum of its entries.
Multi-index addition and subtraction are defined entrywise as usual.
\item[\textbf{2.}] \emph{Multi-index order}:  $\p{i_1,\ldots,i_n}\le \p{j_1,\ldots,j_n}$ means
$ i_k \le j_k$ for every $k\ge 1$.
\item[\textbf{3.}] \emph{Standard lexicographic order}:  $\p{i_1,\ldots,i_n}<_{\mathrm{lex}}\p{j_1,\ldots,j_n}$ if
$ i_1 = j_1,\ldots,i_{k-1}=j_{k-1}$ and $i_k < j_k$ for some $k\ge 1$. \item[\textbf{4.}]
Given complex analytic $F\,: \, U\subset\nc^{n}\,\rightarrow\, \nc$ we define the
	{\em lexicographically sifted differential of $F$ of order $m$}   as the row vector
$
F^{(m)}\left(\bm x\right):=\mathrm{lex}\left(\frac{\partial^{m} F}{\partial x^{i_1}_1 \ldots \partial x^{i_n}_n }\left(\bm x\right)\right), $
where $ \av{\mb{i}} = m$ and entries are ordered as per $<_{\mathrm{lex}}$ on multi-indices.
\item[\textbf{5.}] We define
$
d_{n,k} := \binom{n+k-1}{n-1} , \; D_{n,k} := \sum_{i=1}^k d_{n,i}.
$
It is easy to check there are $d_{n,k}$ $k$-ples of integers in $\brr{1,\dots,n}$, and just as many
homogeneous monomials of degree $n$ in $k$ variables. \end{enumerate}
\end{notation}

\begin{notation}\rm Given integers $k_1,\dots,k_n\ge 0$, we define the usual multinomial coefficient as
\vspace*{4pt}
{\small\[
\binom{k_1+\dots+k_n}{k_1,\dots,k_n}:=\binom{k_1+\dots+k_n}{\mb{k}}:=\frac{\p{k_1+\dots+k_n}!}{k_1!k_2!\cdots k_n!}.
\]}

\vspace*{4pt}
\noindent
For a multi-index $\mb{k}\in\nz_{\ge 0}^n$,  define $\mb{k!}:=k_1!\cdots k_n!$. For any two such $\mb{k},\mb{j}$,
we define
\vspace*{4pt}
{\small\begin{equation} \label{newbinom} \binom{\mathbf{k}}{\mathbf{p}} := \frac{k_1!k_2!\cdots k_n!}{p_1!p_2!\cdots p_n!\p{k_1-p_1}!\p{k_2-p_2}!\cdots \p{k_n-p_n}!}
=  \binom{k_1}{p_1} \binom{k_2}{p_2} \cdots \binom{k_n}{p_n},
\end{equation}}

\vspace*{4pt}
\noindent and the multi-index counterpart to the multinomial,
$
\binom{\mb{k}_1+\dots+\mb{k}_m}{\mb{k}_1,\dots,\mb{k}_m}:=\frac{\p{\mb{k}_1+\dots+\mb{k}_n}!}{\mb{k}_1!\mb{k}_2!\cdots \mb{k}_n!}.
$
\end{notation}

\section{Symmetric products and powers of finite matrices}\label{symsection}

\subsection{Definition and properties}
The compact formulation called for by \eqref{taylorflow} and Notation \ref{notalex} (\textbf{3}) will be achieved through a product $\odot$ that was already defined by other means by U. Bekbaev (e.g.  \cite{bekbaevSept2009,bekbaevJan2010,bekbaevOct2010,bekbaevMar2012}) and will be systematised using basic categorical properties of the tensor product.
Let $K$ be a field and $V$ a $K$-vector space. See \cite{BlokhuisSeidel,Cartan,Lang} for details.
\begin{definition}
An \textbf{$r^{\mathrm{th}}$ symmetric tensor power} of $V$ is a vector space $S$, together with a symmetric multilinear map $\varphi:V^r:= V\times \stackrel{r}{\dots} \times V \to S$ satisfying the following \emph{universal property}: for every vector space $W$ and every symmetric multilinear map
$ f: V^r \to W $ there is a unique linear map
$ f_{\odot}: S \to W $
such that the following diagram commutes:
\[
\xymatrix{
V\times V\times \stackrel{r}{\dots} \times V \ar[d]_{\varphi} \ar[r]^{\mbox{\phantom{hhhhhh}}f} & W \\
S \ar@{-->}[ru]_{f_{\odot}} }
\]
In other words, $\mathrm{Hom}_K\p{S,W}\cong S\p{V^n,W}$ holds between the vector space of linear maps $S \to W$ and the vector space of symmetric multilinear maps $V^n\to W$.
\end{definition}
\begin{proposition} \label{symprop} Given any $K$-vector space $V$ and any $r\in \nn$,
\begin{enumerate}
\item a symmetric power $\p{\mathrm{Sym}^rV,\varphi}$ exists, unique up to isomorphism. We write $\bm{v}_1\odots \bm{v}_r:=\varphi\p{\bm{v}_1,\dots,\bm{v}_r}$,
$\bm{v}^{\odot k}:=\bm{v}\,\odot\stackrel{k}{\cdots}\odot \,\bm{v}$ for any  $\bm{v}\in V$, and
$\bm{v}^{\odot\mb{p}}:=\bm{v}^{\odot p_1}_1\odot\cdots\odot \bm{v}_n^{\odot p_n}$, for any $\bm{v}_1,\dots,\bm{v}_n\in V$ and multi-index $\mb{p}\in\nz^n_{\ge 0}$.
\item For any multilinear map $f:V^r\to W$, the linear map $f_{\odot}$ induced by the universal property is defined
on the generators of $\Sym^r V$ as
$ f_{\odot} \p{ \bm{v}_1\odot \cdots \odot \bm{v}_r} = f\p{ \bm{v}_1, \cdots , \bm{v}_r}. $
\item If $\dim_K V =n<\infty$ then every basis $\brr{\bm{e}_1,\dots , \bm{e}_n}$ of $V$ induces a basis for $\Sym^rV$:
\begin{equation} \label{Symbasis}
 \brr{  \p{\bm{e}_1\,\odot \stackrel{r_1}{\dots}\odot\, \bm{e}_1} \odot  \p{\bm{e}_2\,\odot \stackrel{r_2}{\dots}\odot \,\bm{e}_2} \,\odot \cdots \odot \, \p{\bm{e}_n\,\odot \stackrel{r_n}{\dots}\odot \,\bm{e}_n}  :r_i\ge 0,\,  \av{\mathbf{r}}=r } ;
 \end{equation}
hence, $ \dim_K \Sym^{r}V =  d_{n,r}$. Conventions $\Sym^1V=V$ and $\Sym^0V=K$ arise naturally.
\end{enumerate}
Hence,  product $\odot$ operates exactly like products of homogeneous polynomials in several variables.

\end{proposition}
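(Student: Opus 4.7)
The plan is to realise $\Sym^r V$ explicitly as a quotient of the $r$-fold tensor power, derive the universal property from this construction, and then read off a basis.

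First I would set $T^r V := V^{\otimes r}$ and let $N\subseteq T^r V$ be the subspace generated by all differences $\bm{v}_1\otimes\cdots\otimes \bm{v}_r -\bm{v}_{\sigma\p{1}}\otimes\cdots\otimes \bm{v}_{\sigma\p{r}}$ with $\p{\bm{v}_1,\dots,\bm{v}_r}\in V^r$ and $\sigma\in\fraks_r$. Setting $\Sym^r V := T^r V / N$, the composition $\varphi:V^r\to T^r V\to \Sym^r V$ is by construction symmetric and multilinear. For item (1), any symmetric multilinear $f:V^r\to W$ lifts to a unique $\tilde f:T^r V\to W$ via the universal property of the tensor product, and $\tilde f$ vanishes on $N$ precisely because $f$ is symmetric; it therefore descends to a unique linear $f_\odot:\Sym^r V \to W$ satisfying $f_\odot\p{\bm{v}_1\odots\bm{v}_r}=f\p{\bm{v}_1,\dots,\bm{v}_r}$, which simultaneously establishes item (2). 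Uniqueness up to isomorphism of $\p{\Sym^r V,\varphi}$ is then the usual two-way categorical argument between any pair of objects satisfying the same universal property.

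For item (3), the spanning half is immediate: the tensor monomials $\bm{e}_{i_1}\otimes\cdots\otimes \bm{e}_{i_r}$ span $T^r V$, and modding out by $N$ collapses all reorderings, so only the unordered products listed in \eqref{Symbasis} survive. A direct combinatorial count of multi-indices $\mb{r}\in\nz_{\ge 0}^n$ with $\av{\mb{r}}=r$ yields exactly $d_{n,r}$ such elements. The main obstacle, as in every existence-of-basis statement, is linear independence. The cleanest route is to test the candidate basis against a well-chosen family of symmetric multilinear forms on $V^r$, obtained by symmetrising coordinate projections keyed to each multi-index $\mb{p}$ with $\av{\mb{p}}=r$; the linear maps induced on $\Sym^r V$ by the universal property then separate the proposed basis elements. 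Equivalently, and more conceptually, the degree-$r$ homogeneous component of the polynomial ring $K\qu{X_1,\dots,X_n}$, together with the map $\p{\bm{v}_1,\dots,\bm{v}_r}\mapsto \prod_{j=1}^r\sum_i v_{j,i}X_i$, satisfies the same universal property and is therefore canonically isomorphic to $\Sym^r V$, under an isomorphism sending the set in \eqref{Symbasis} to the standard monomial basis. This single identification yields linear independence, the dimension formula $\dim_K\Sym^r V=d_{n,r}$, the conventions $\Sym^0V=K$ and $\Sym^1V=V$, and the closing remark that $\odot$ behaves like ordinary polynomial multiplication, all in one stroke.
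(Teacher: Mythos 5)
Your proof is correct and is the standard argument; the paper itself gives no proof of this Proposition, deferring to \cite{BlokhuisSeidel,Cartan,Lang}, and the Remark immediately following it records precisely the tensor-power quotient $\bigotimes^rV/\sim$ that you take as your starting point. Your identification of $\Sym^rV$ with the degree-$r$ homogeneous component of $K\qu{X_1,\dots,X_n}$ cleanly delivers the linear independence, the count $d_{n,r}$, and the closing observation that $\odot$ behaves like polynomial multiplication, so nothing is missing.
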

 \begin{remark} $\Sym^r$ may also be defined in terms of the tensor power by $\Sym^rV=\bigotimes^rV/\sim$ modulo the relation
$\bm{v}_1\otimes\cdots\otimes\bm{v}_r\sim \bm{v}_{\sigma\p{1}}\otimes\cdots\otimes\bm{v}_{\sigma\p{r}}$, $ \sigma\in\fraks_r .$
\end{remark}
Given any $K$-vector space $W$ and two linear maps $f,g:V\to W$, define
\begin{equation} \label{universalh2}
h:V\times V \to \Sym^2 W, \qquad h\p{\bm{v}_1,\bm{v}_2} := \frac{1}{2}\qu{f\p{\bm{v}_1} \odot g\p{\bm{v}_2}+f\p{\bm{v}_2} \odot g\p{\bm{v}_1}}.
\end{equation} Immediately bilinear and symmetric, it is granted a unique linear
$ h_{\odot}: \Sym^2V \to \Sym^2 W $,
$ h_\odot \p{\bm{v}_1\odot \bm{v}_2} := h \p{\bm{v}_1, \bm{v}_2} $, by the universal property. Write $f \odot g:=h_{\odot}$. Then
$f \odot g = g \odot f$ and $ \p{f_1\circ f} \odot \p{g_1\circ g} = \p{f_1\odot g_1} \circ \p{f\odot g} $ for any linear maps $f_1,g_1:W\to W_1$.
A similar construction applies to the symmetric product of $m\ge 3$ linear maps $f_i:V\to W$:
\begin{equation}\label{universalhm}
\xymatrixrowsep{.1cm}
\xymatrix{\Sym^mV\ar[rr]^{f_1\odot \cdots \odot f_m} & &\Sym^m W \\
  \bm{v}_1\odots \bm{v}_m \ar@{|->}[rr] & & \frac1{m!}\sum_{\sigma\in\fraks_m}f_1\p{\bm{v}_{\sigma\p{1}}}\odots f_m\p{\bm{v}_{\sigma\p{m}}}.}
\end{equation}
Let us generalise the above symmetric product into one involving any two linear maps
\[ f: \Sym^{j_1} V \to \Sym^{i_1} W, \quad g: \Sym^{j_2} V \to \Sym^{i_2} W, \qquad j_1,j_2,i_1,i_2\geq 0. \]
Assume $V$ and $W$ finite-dimensional, $V$ having basis $\brr{\bm{e}_1,\dots,\bm{e}_n}$.
Defining the bilinear map
$\varphi \p{\bm{u}_1,\bm{u}_2} :=  \bm{u}_1\odot \bm{u}_2, \bm{u}_i\in\Sym^{j_i}V,
$
we are interested in finding a bilinear function $h$ in terms of $f$ and $g$ generalising \eqref{universalh2}, for which there is a unique linear $h_\odot$ completing the diagram
\begin{equation} \label{universalh}
\begin{tabular}{c}
\xymatrix{
 \Sym^{j_1} V\times  \Sym^{j_2} V \ar[d]_{\varphi} \ar[r]^{\mbox{\phantom{hhl}}h} &  \Sym^{i_1+i_2} W\\
 \Sym^{j_1+j_2} V \ar@{-->}[ru]_{h_\odot} }
 \end{tabular}
\end{equation}
We want $h$ to yield coefficient $1$ for all-round repeated vectors as in
\eqref{universalh2}. Symmetric, multilinear $\tilde h:V^{\times j_1+j_2} \to \Sym^{i_1+i_2} W$ is easier to define, generalising \eqref{universalh2}
and the example in \cite[p. 155]{Mo99a}: for any $\bm{u}_1,\dots, \bm{u}_{j_1+j_2}\in V$,
{\small\begin{equation}\label{htildefunc} \tilde{h}\p{\bm{u}_1,\dots, \bm{u}_{j_1+j_2}} \!:=\! \alpha_{j_1,j_2}
\sum f\p{\bm{u}_{\sigma\p{1}}\odot\!\cdots \!\odot\bm{u}_{\sigma\p{j_1}}\!} \! \odot  g\!\p{\bm{u}_{\sigma\p{j_1+1}}\odot\!\cdots\! \odot\bm{u}_{\sigma\p{j_1+j_2}}\!}\!, \end{equation}}where $ \alpha_{j_1,j_2}=\frac{1}{\binom{j_1+j_2}{j_1}}$ and the sum is taken over ${\sigma\in S_{j_1,j_2}}$ with
{\small\begin{equation}\label{Sj1j2}S_{j_1,j_2}:= \brr{\sigma\in\fraks_{j_1+j_2}:\sigma\p{1}<\dots<\sigma\p{j_1}\mbox{ and } \sigma\p{j_1+1}<\dots<\sigma\p{j_1+j_2}.}
\end{equation}}Define
$\p{\varphi_1\times\varphi_2}\p{\bm{u}_1,\dots, \bm{u}_{j_1+j_2}}=\p{\bm{u}_{i_1}\odots\bm{u}_{i_{j_1}},\bm{u}_{i_{j_1+1}}\odots\bm{u}_{i_{j_1+j_2}}},$
$\varphi_i$ being the universal map of $\Sym^{j_i}V$; we intend the diagram of functions involving the Cartesian product
\begin{equation}\label{mustcommute}
\begin{tabular}{c}\xymatrix{
V^{\times j_1+j_2} \ar[d]_{\varphi_1\times\varphi_2} \ar[dr]^{\tilde h} \\
 \Sym^{j_1} V\times  \Sym^{j_2} V  \ar[r]^{\mbox{\phantom{hhl}}h} &  \Sym^{i_1+i_2} W}
 \end{tabular}
\end{equation}
to commute. Let $\bm{u}_{i_1},\dots,\bm{u}_{i_{j_1+j_2}}\in \brr{\bm{e}_1,\dots,\bm{e}_n}$. Split into copies of separate basis\break vectors:
$ \brr{\bm{u}_{i_1},\dots,\bm{u}_{i_{j_1}}} =\brr{\bm{e}_1,\stackrel{p_1}{\dots},\bm{e}_1,\dots,\bm{e}_n,\stackrel{p_n}{\dots},\bm{e}_n}$,
$\brr{\bm{u}_{i_{j_1+1}},\dots,\bm{u}_{i_{j_1+j_2}}} =\break \brr{\bm{e}_1^{\times q_1},\dots,\bm{e}_n^{\times q_n}},$  with
$\av{\mb{p}}=j_1$ and $\av{\mb{q}}=j_2$, and define $\mb{k}=\mb{p}+\mb{q}$. The expression of \eqref{htildefunc} in these basis elements is now
an immediate consequence of basic combinatorics:
\[
\tilde h\p{\bm{e}_1\stackrel{k_1}{\dots},\bm{e}_1,\dots,\bm{e}_n,\stackrel{k_n}{\dots},\bm{e}_n}=\frac{1}{\binom{j_1+j_2}{j_1}}\sum_{\av{\mb{P}}=j_1,\mb{P}\le \mb{k}} \qu{\prod_{i=1}^n\binom{k_i}{P_i}}   f\p{\bm{e}^{\odot\mb{P}}} \odot g\p{\bm{e}^{\odot\mb{k}-\mb{P}}},
\]
leaving no option for \eqref{mustcommute} to commute but
\[
h\p{\bm{e}^{\odot\mb{p}},\bm{e}^{\odot\mb{q}}}=\frac{1}{\binom{j_1+j_2}{j_1}}\sum_{\av{\mb{P}}=j_1,\mb{P}\le \mb{p+q}}
\qu{\prod_{i=1}^n\binom{p_i+q_i}{P_i}  } f\p{\bm{e}^{\odot\mb{P}}} \odot g\p{\bm{e}^{\odot\mb{p}+\mb{q}-\mb{P}}}.
\]
Finally, the universal property on $\p{\Sym^{j_1+j_2} V,\widetilde\varphi}$ yields a unique $h_{\odot}$ such that
$h_\odot\circ \widetilde\varphi \equiv \tilde h$,
\begin{equation}\label{otherdiagram}
\begin{tabular}{c}
\xymatrix{
V^{\times j_1+j_2} \ar[d]_{\varphi_1\times\varphi_2} \ar[dr]^{\tilde h} \ar@{->}@/^-6pc/[dd]^{\widetilde\varphi} \\
 \Sym^{j_1} V\times  \Sym^{j_2} V \ar[r]^{\mbox{\phantom{hhl}}h}  \ar[d]^{\varphi} &  \Sym^{i_1+i_2} W\\
 \Sym^{j_1+j_2} V \ar@{-->}[ru]^{h_\odot} }
 \end{tabular}
\end{equation}
and $\varphi\circ\p{\varphi_1\times\varphi_2}\equiv \widetilde\varphi$. Fixing $\varphi$ (and $h$) the uniqueness of $h_\odot$ follows
from construction: any other $h_\bullet$ rendering \eqref{universalh} commutative would require the commutativity of the outer perimeter
of \eqref{otherdiagram}, hence $h_\bullet\equiv h_\odot$.
Hence all we need to do is express $f\odot g:=h_\odot$ in terms of its action on base elements \eqref{Symbasis} to obtain a simple, explicit form.
\begin{notation}When dealing with matrix sets, we will use super-indices and subind\-ices:
\begin{enumerate}
\item[\textbf{1.}] The space of \textbf{$\p{i,j}$-matrices} $\Mat_{m,n}^{i,j}\p{K}$ is either defined by its underlying set,\break i.e. all $d_{m,i}\times d_{n,j}$ matrices having entries in $K$, or as vector space\break $\Hom_K\p{\Sym^jK^m;\Sym^iK^n}$.
\item[\textbf{2.}] It is clear from the above that $\Mat_{n}^{0,0}\p{K}$ is the set of all scalars $\alpha\in K$ and $\Mat_{n}^{0,k}\p{K}$ (resp. $\Mat_{n}^{k,0}\p{K}$) is made up of all row (resp. column) vectors whose entries are indexed by $d_{n,k}$ lexicographically ordered $k$-tuples.
\item[\textbf{2.}] Reference to $K$ may be dropped and notation may be abridged if dimensions are repeated or trivial, e.g. $\Mat_{n}^{i,j}:=\Mat_{n,n}^{i,j}$, $\Mat_{m,n}^{i}:=\Mat_{m,n}^{i,i}$, $\Mat_n:=\Mat_n^{1}$, etcetera.
\end{enumerate}
\end{notation}
\noindent Checking product $\odot$ defined below renders diagrams \eqref{universalh} and \eqref{otherdiagram} commutative is immediate.
\begin{definition}[Symmetric product of finite matrices]\label{defSymProd}
Let $A\in \Mat_{m,n}^{i_1,j_1}\p{K}$, $B\in \Mat_{m,n}^{i_2,j_2}\p{K}$, i.e. linear maps
$A:\Sym^{j_1}K^{n}\to \Sym^{i_1}K^{m}$ and $B:\Sym^{j_2}K^{n}\to \Sym^{i_2}K^{m}$.
Given any multi-index $\mathbf{k}=\p{k_1,\dots,k_n}\in\nz^n_{\ge 0}$ such that $\av{\mathbf{k}} = k_1+\dots + k_n = j_1+j_2$,
define $C:= A\odot B  \in \Mat_{m,n}^{i_1+i_2,j_1+j_2}$ by
{\small\begin{equation} \label{SymProd}
C \p{\bm{e}_1^{\odot k_1}\cdots \bm{e}_n^{\odot k_n}} = \frac{1}{\binom{j_1+j_2}{j_1}}\sum_{ \mb{p}} \binom{\mathbf{k}}{\mathbf{p}} A\p{ \bm{e}_1^{\odot p_1}\cdots \bm{e}_n^{\odot p_n}} \!\odot \!
B\p{ \bm{e}_1^{\odot k_1-p_1}\cdots \bm{e}_n^{\odot k_n-p_n}},
\end{equation}}notation abused by removing $\odot$ to reduce space within basis elements \eqref{Symbasis}, binomials as in \eqref{newbinom}
and summation taking place for specific multi-indices $\mathbf{p}$, namely those such that
\[ \av{\mathbf{p}} = j_1 \qquad \mbox{ and \qquad } 0 \le p_i\le k_i, \quad i=1, \dots , n. \]
\end{definition}

The following is a mere exercise in induction:
\begin{lemma} Defining $\bigodot_{i=1}^rA_i$ recursively by $\p{\bigodot_{i=1}^{r-1}A_i}\odot A_r$ with
$A_i\in\Mat_{m,n}^{k_i,j_i}$,
\begin{equation}  \label{multipleproduct}
\p{A_1\odots A_r}\bm{e}^{\odot\mb{k}} = \frac{1}{\binom{j_1+\dots+j_r}{j_1,j_2,\dots,j_r}}\sum_{\mb{p}_1,\dots,\mb{p}_r}\binom{\mb{k}}{\mb{p}_1,\dots,\mb{p}_r}\bigodot_{i=1}^r A_i \bm{e}^{\odot\mb{p}_i},
\end{equation}
if $ \av{\mb{k}}=j_1+\dots+j_r$, sums obviously taken for $\mb{p}_1+\dots+\mb{p}_r=\mb{k}$ and $\av{\mb{p}_i}=j_i$, for every $i=1,\dots,r$. $\hfill\square$
\end{lemma}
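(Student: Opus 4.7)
The plan is to prove the formula by induction on $r$. For the base case $r=2$, equation \eqref{multipleproduct} reduces to Definition \ref{defSymProd}: the single-index sum over $\mb{p}$ with $|\mb{p}|=j_1$ and $\mb{p}\le\mb{k}$ becomes the pair $(\mb{p}_1,\mb{p}_2)=(\mb{p},\mb{k}-\mb{p})$, the multinomial $\binom{\mb{k}}{\mb{p}_1,\mb{p}_2}$ coincides with $\binom{\mb{k}}{\mb{p}}$, and the scalar factor $1/\binom{j_1+j_2}{j_1}$ matches the prefactor in \eqref{SymProd}.

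For the inductive step, assume \eqref{multipleproduct} holds for $r-1$. By the recursive definition $\bigodot_{i=1}^r A_i = \bigl(\bigodot_{i=1}^{r-1} A_i\bigr)\odot A_r$, I would apply Definition \ref{defSymProd} with $A=\bigodot_{i=1}^{r-1}A_i\in\Mat_{m,n}^{k_1+\cdots+k_{r-1},\,J_{r-1}}$ (writing $J_s:=j_1+\cdots+j_s$) and $B=A_r$, yielding
\[
\Bigl(\bigodot_{i=1}^{r} A_i\Bigr)\bm{e}^{\odot\mb{k}}=\frac{1}{\binom{J_r}{J_{r-1}}}\sum_{|\mb{q}|=J_{r-1},\,\mb{q}\le\mb{k}}\binom{\mb{k}}{\mb{q}}\Bigl(\bigodot_{i=1}^{r-1}A_i\Bigr)\bm{e}^{\odot\mb{q}}\odot A_r\bm{e}^{\odot\mb{k}-\mb{q}}.
\]
Substituting the inductive hypothesis for the inner $(r-1)$-fold product and relabelling $\mb{p}_r:=\mb{k}-\mb{q}$ collapses the double sum into a single sum over $\mb{p}_1+\cdots+\mb{p}_r=\mb{k}$ with $|\mb{p}_i|=j_i$.

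What remains is to absorb the two prefactors and the two nested binomials into the desired single multinomial. The calculation relies on two elementary identities: the scalar chain rule $\binom{J_r}{J_{r-1}}\binom{J_{r-1}}{j_1,\dots,j_{r-1}}=\binom{J_r}{j_1,\dots,j_r}$, and its multi-index analogue
\[
\binom{\mb{k}}{\mb{p}_1+\cdots+\mb{p}_{r-1}}\binom{\mb{p}_1+\cdots+\mb{p}_{r-1}}{\mb{p}_1,\dots,\mb{p}_{r-1}}=\binom{\mb{k}}{\mb{p}_1,\dots,\mb{p}_r},
\]
with $\mb{p}_r=\mb{k}-(\mb{p}_1+\cdots+\mb{p}_{r-1})$. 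Both are immediate from the factorial definitions, the multi-index version being an entrywise product of $n$ scalar instances via \eqref{newbinom}. Combining them produces exactly the right-hand side of \eqref{multipleproduct} at level $r$.

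The only mild obstacle is bookkeeping: one must verify that the admissible ranges of summation match after reindexing, i.e.\ that $\mb{p}_r\ge\mb{0}$ and $|\mb{p}_r|=j_r$ are equivalent to the constraint $\mb{q}\le\mb{k}$ with $|\mb{q}|=J_{r-1}$ once $\mb{p}_1+\cdots+\mb{p}_{r-1}=\mb{q}$ is imposed. No deeper structural argument is needed; the result is purely a combinatorial consistency check for the recursive application of Definition \ref{defSymProd}.
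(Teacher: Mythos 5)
Your proof is correct and is precisely the induction on $r$ that the paper alludes to when it calls the lemma ``a mere exercise in induction'' (the paper itself supplies no details). The base case, the application of Definition \ref{defSymProd} to $\bigl(\bigodot_{i=1}^{r-1}A_i\bigr)\odot A_r$, and the two chain-rule identities for the scalar and multi-index multinomials are exactly the right ingredients, and your reindexing check is sound.
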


\begin{remarks}
\begin{enumerate}
\item[\phantom{hola}]
\item[\textbf{1.}] For an equivalent ``non-monic'' formulation of \eqref{SymProd}  (i.e. one for which entry ${}_{1,1}$ need not have coefficient $1$) using multi-indices in both columns \emph{and} rows,
see e.g. \cite{bekbaevSept2009,bekbaevJan2010,bekbaevOct2010,bekbaevMar2012}.
\item[\textbf{2.}] Notation in Proposition \ref{symprop} extends to matrices: $\Sym^rA:=A^{\odot r} := A\stackrel{r}{\odots} A$.
\item[\textbf{3.}] For square $A\in\Mat^{1,1}_n$, powers ${}^{\odot r}$ according to \eqref{SymProd} and
\eqref{multipleproduct} are obviously consistent with multiple product \eqref{universalhm}, hence equal to established definitions for group morphism $\Sym^r:\gl_n\p{V}\to\gl_n\p{\Sym^r\p{V}}$ in multilinear algebra textbooks such as
the expression in terms of the \emph{permanent} of $A$ (e.g. \cite[Th. 9.2]{BlokhuisSeidel}),
or  $\frac1{r!}A\,\circledS\stackrel{r}{\cdots}\circledS\,A$ in \cite{Ap10a,ABSW,Ba99a}.
\end{enumerate}
\end{remarks}
\begin{example} Given matrices $A\in \Mat_{2}^{1,1}\p{K}$ and $B\in\Mat_{2}^{3,2}\p{K}$, we may write them as
{\small\[
A = \p{\begin{tabular}{c|c} $A\bm{e}_1$ &  $A\bm{e}_2$ \end{tabular}} =\p{ a_{ij}}_{i\le 2, j\le 2}, \quad B=
\p{\begin{tabular}{c|c|c} $B\bm{e}_1^{\odot 2}$ & $ B\bm{e}_1\odot\bm{e}_2$ & $  B\bm{e}_2^{\odot 2}$\end{tabular}}=\p{ b_{ij}}_{i\le 4, j\le 3},
\]}and it is immediate to check that the $\p{4,3}$ (hence four-column, five-row) matrix product
{\small $$ A\odot B = \p{\!\begin{tabular}{c|c|c|c} \!$\p{A\odot B}\p{\bm{e}_1^{\odot 3}}$
\! & \!$\p{A\odot B}\!\p{\bm{e}_1^{\odot 2}\odot\bm{e}_2}$ \! & \!$\p{A\odot B}\!\p{\bm{e}_1\odot\bm{e}_2^{\odot 2}}$ & $\p{A\odot B}\!\p{\bm{e}_2^{\odot 3}}$\!\end{tabular}\!} , $$}is equal to
$$\left(\!
\begin{array}{cccc}
 a_{11} b_{11} \!&\! \frac{a_{12} b_{11}+2 a_{11} b_{12}}{3}  \!&\! \frac{2 a_{12} b_{12}+a_{11} b_{13}}{3}  & a_{12} b_{13} \\
 a_{21} b_{11}+a_{11} b_{21} \!&\!
M_2  \!&\!
N_2  & a_{22} b_{13}+a_{12} b_{23} \\
 a_{21} b_{21}+a_{11} b_{31} \! &\!
M_3 \!&\!
N_3  & a_{22} b_{23}+a_{12} b_{33} \\
 a_{21} b_{31}+a_{11} b_{41} \! &\!
M_4  \!&\!
N_4  & a_{22} b_{33}+a_{12} b_{43} \\
 a_{21} b_{41} \! &\! \frac{a_{22} b_{41}+2 a_{21} b_{42}}{3}  \!&\! \frac{2 a_{22} b_{42}+a_{21} b_{43}}{3}  & a_{22} b_{43}
\end{array}
\!\right),$$ where $M_i,N_i$ are defined by
\begin{eqnarray*}
M_{i} &=& \frac{2 (a_{21} b_{{i-1},2}+a_{11} b_{i,2})+a_{22} b_{i-1,1}+a_{12} b_{i,1}}{3}, \\
N_{i} &=&  \frac{2 (a_{22} b_{i-1,2}+a_{12} b_{i,2})+a_{21} b_{i-1,3}+a_{11} b_{i,3}}{3}.
\end{eqnarray*}
\end{example}

The following is straightforward to prove from either direct application of the universal property or the techniques used in \cite{bekbaevSept2009,bekbaevOct2010}, and will not be delved into here:
\begin{proposition}\label{odotproperties}
For any $A$, $B$, $C$, and whenever products make sense,
\begin{enumerate}
\item $A\odot B = B\odot A$.
\item $\p{A+B} \odot C = A \odot C + B \odot C$.
\item $\p{A\odot B} \odot C = A\odot \p{B\odot C}$.
\item $\p{\alpha A}\odot B = \alpha \p{A\odot B}$ for every $\alpha \in K$.
\item If $A$ is square and invertible, then $\p{A^{-1}}^{\odot k} = \p{A^{\odot k}}^{-1}$.
\item $A\odot B =0$ if and only if $A=0$ or $B=0$.
\item If $A$ is a square $(1,1)$-matrix, then
$ A\bm{v}_1 \odot A\bm{v}_2 \odot \cdots \odot A\bm{v}_m = A^{\odot m} \bm{v}_1 \odot \cdots \odot \bm{v}_m . $
\item If $\bm{v}$ is a column vector, then
$\p{A\odot B} \bm{v}^{\odot \p{p+q}} =  \p{A\bm{v}^{\odot p}} \odot \p{B\bm{v}^{\odot q}}$, $p,q\in \nz_{\ge 0} . $
\end{enumerate}
\end{proposition}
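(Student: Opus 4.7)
My plan is to exploit the uniqueness clause in the universal property throughout: every identity in the list asserts the equality of two linear maps on some $\Sym^{j}V$, and by polarisation (valid in characteristic zero) this space is spanned by pure powers $\bm v^{\odot j}$; equivalently, via diagram \eqref{mustcommute}, it suffices to check equality of the corresponding bilinear $\tilde h$'s on tuples of the form $\p{\bm v,\dots,\bm v}\in V^{\times j}$. On such constant tuples the sum in \eqref{htildefunc} collapses and the prefactor $\alpha_{j_1,j_2}$ cancels exactly, giving a very handy test that I would invoke repeatedly.

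Items (1), (2), (4), (7), (8) I would dispatch directly. Commutativity is immediate because the involution $S_{j_1,j_2}\to S_{j_2,j_1}$ sending a permutation to the one whose two monotone blocks have been swapped matches corresponding summands of \eqref{htildefunc} after $f\leftrightarrow g$, using commutativity of $\odot$ on $\Sym^{i_1+i_2}W$ (already granted by Proposition \ref{symprop}); distributivity (2) and scalar homogeneity (4) follow from the manifest linearity of \eqref{htildefunc} in each of $f,g$ separately; item (7) is the specialisation of \eqref{universalhm} with $f_1=\cdots=f_m=A$, as the $m!$ summands all coincide by commutativity of $\odot$; and item (8) is precisely the collapse on a constant tuple $\p{\bm v,\dots,\bm v}$ pushed through \eqref{mustcommute}. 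Associativity (3) then follows by a two-step application of (8) on $\bm v^{\odot(j_1+j_2+j_3)}$, both bracketings reducing to $\p{A\bm v^{\odot j_1}}\odot \p{B\bm v^{\odot j_2}}\odot \p{C\bm v^{\odot j_3}}$ by associativity of $\odot$ on $\Sym W$ itself; density of pure powers then closes the argument. Item (5) is immediate from (7): $A^{\odot k}\circ \p{A^{-1}}^{\odot k}$ applied to $\bm v_1\odots \bm v_k$ returns $\bm v_1\odots \bm v_k$ via two applications of (7), and likewise in the opposite order.

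The one genuinely delicate point will be non-degeneracy (6): the averaged sum in \eqref{SymProd} could in principle conspire to vanish even when $A,B$ are both non-zero. My plan here is to combine item (8) with the fact that the polynomial maps $\bm v\mapsto A\bm v^{\odot j_1}$ and $\bm v\mapsto B\bm v^{\odot j_2}$ are each not identically zero, hence simultaneously non-vanishing on a Zariski-dense open subset of $V$. For any $\bm v$ in that set, item (8) gives $\p{A\odot B}\bm v^{\odot(j_1+j_2)}=\p{A\bm v^{\odot j_1}}\odot \p{B\bm v^{\odot j_2}}$, which is non-zero because the symmetric algebra $\bigoplus_{r\ge 0}\Sym^r W$ is a polynomial ring in $\dim W$ indeterminates, and therefore an integral domain.
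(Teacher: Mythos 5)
Your proposal is correct, and it follows precisely the route the paper itself indicates: the paper gives no proof of Proposition \ref{odotproperties}, remarking only that it is ``straightforward to prove from either direct application of the universal property or the techniques used in'' Bekbaev's work, and your argument is exactly that direct application (reduction to pure powers $\bm v^{\odot j}$ by polarisation in characteristic zero, then the collapse of \eqref{htildefunc} through \eqref{mustcommute}, with (8) feeding (3), (5) and (6)). The only item with genuine content is the non-degeneracy statement (6), and your treatment --- picking $\bm v$ in the non-empty Zariski-open set where $A\bm v^{\odot j_1}$ and $B\bm v^{\odot j_2}$ are simultaneously non-zero, then applying item (8) and the fact that $\bigoplus_{r\ge 0}\Sym^r W$ is an integral domain --- closes it correctly.
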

Universal property on \eqref{htildefunc} and diagram \eqref{otherdiagram} with different notation yields:
\begin{lemma} \label{factorABvwlem}
For any two matrices $A\in\Mat^{i,j}_n$ and $B\in\Mat^{p,q}_{n}$ and $\bm{v}_1,\dots,\bm{v}_{j+q}\in V$,
$\p{A\odot B} \p{\bm{v}_1\odots\bm{v}_{j+q}}$ is equal to
\begin{equation}\label{factorABvw}
\frac{1}{\binom{j+q}{q}}\sum_{\sigma\in S_{j,q}}  A\p{\bm{v}_{\sigma\p{1}}\odots\bm{v}_{\sigma\p{j}}}
\odot B\p{\bm{v}_{\sigma\p{j+1}}\odots\bm{v}_{\sigma\p{j+q}}}, \quad
\end{equation}
$S_{j,q}$ defined as in \eqref{Sj1j2}. $\hfill\square$
\end{lemma}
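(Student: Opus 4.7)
The plan is to observe that this lemma merely reformulates the very construction through which $A\odot B$ was obtained. In the setup preceding Definition \ref{defSymProd}, the symmetric multilinear map $\tilde h:V^{\times j+q}\to \Sym^{i+p}W$ was defined via \eqref{htildefunc} (with $(j_1,j_2,i_1,i_2)$ there playing the role of $(j,q,i,p)$ here), and then the unique linear $h_\odot$, which we now call $A\odot B$, was introduced as the map rendering diagram \eqref{otherdiagram} commutative, i.e.\ satisfying $h_\odot\circ\widetilde\varphi\equiv \tilde h$.

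First I would apply this identity to a generic tuple $(\bm v_1,\dots,\bm v_{j+q})$, which immediately yields
\[
(A\odot B)\bigl(\bm v_1\odots\bm v_{j+q}\bigr)=(h_\odot\circ\widetilde\varphi)(\bm v_1,\dots,\bm v_{j+q})=\tilde h(\bm v_1,\dots,\bm v_{j+q}),
\]
and the last expression is, after substituting $\alpha_{j,q}=1/\binom{j+q}{q}$, exactly the right-hand side of \eqref{factorABvw}. In other words, \eqref{factorABvw} is not a new assertion about $A\odot B$ but rather a verbatim transcription of how $\tilde h$ was defined before being pushed down through the universal property.

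To close the loop, I would verify that this agrees with the coordinate-level Definition \ref{defSymProd}. Since both sides of \eqref{factorABvw} are symmetric and multilinear in $(\bm v_1,\dots,\bm v_{j+q})$, it suffices to specialise each $\bm v_\ell$ to a basis vector $\bm e_{k_\ell}$ and group repetitions into a multi-index $\mathbf{k}$ with $\av{\mathbf{k}}=j+q$. Under this specialisation the sum over $\sigma\in S_{j,q}$ is indexed by ordered bipartitions of the $j+q$ slots, and the bipartitions realising a given multi-index split $\mathbf{k}=\mathbf{p}+(\mathbf{k}-\mathbf{p})$ with $\av{\mathbf{p}}=j$ are counted by $\prod_{i=1}^n\binom{k_i}{p_i}=\binom{\mathbf{k}}{\mathbf{p}}$, reproducing formula \eqref{SymProd} exactly.

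The main, and really only, obstacle is the combinatorial bookkeeping in this last step, namely showing that the $\binom{j+q}{j}$ elements of $S_{j,q}$ distribute among the multi-index partitions of $\mathbf{k}$ with multiplicity $\binom{\mathbf{k}}{\mathbf{p}}$. This is a standard count and is, moreover, identical to the one already carried out in the paragraph between \eqref{mustcommute} and \eqref{SymProd}, so no new calculation is required and the lemma follows from the universal property alone.
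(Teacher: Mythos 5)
Your proposal is correct and is essentially the paper's own argument: the lemma is presented there as an immediate restatement of $h_\odot\circ\widetilde\varphi\equiv\tilde h$ from the universal-property construction \eqref{htildefunc}--\eqref{otherdiagram}, with $\alpha_{j,q}=1/\binom{j+q}{j}=1/\binom{j+q}{q}$. Your additional consistency check against Definition \ref{defSymProd} on basis vectors is the same multiplicity count already carried out between \eqref{mustcommute} and \eqref{SymProd}, so nothing new is needed.
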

\begin{lemma} \label{eLemma} \begin{enumerate}
\item (\cite{bekbaevSept2009,bekbaevOct2010}) For any $A\in \Mat_n^{p,q}$, $B\in \Mat_n^{q,r}$, and $\bm{v}\in\Sym^jK^n$,
\begin{equation}\label{vAB}
\p{\bm{v}\odot A} B = \bm{v}\odot \p{AB}.
\end{equation}
\item If $\bm{e_i}$ are the columns of $\Id_n$,
$ \sum_{m=1}^n \p{\bm{e}_m \odot \Id_n^{\odot k-1}} \p{\bm{e}_m^T \odot \Id_n^{\odot k-1}} = \Id_n^{\odot k+1}. $
\end{enumerate}
\end{lemma}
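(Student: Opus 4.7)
For part (1), I would apply both sides to an arbitrary $\bm{w} \in \Sym^r K^n$ and unpack everything via Definition \ref{defSymProd}. The crucial simplification is that $\bm{v} \in \Sym^j K^n$, viewed as a matrix, lies in $\Mat_n^{j,0}$, so in the notation of Definition \ref{defSymProd} the first factor has $j_1 = 0$. The sum over $\mb{p}$ with $|\mb{p}|=0$ then collapses to the single term $\mb{p}=\mb{0}$, with $\binom{j_1+j_2}{j_1} = \binom{q}{0} = 1$ and $\binom{\mb{k}}{\mb{0}} = 1$, yielding the clean identity
\[
(\bm{v} \odot A)(\bm{u}) \;=\; \bm{v} \odot A(\bm{u}) \qquad \text{for every } \bm{u} \in \Sym^q K^n.
\]
Applying this with $\bm{u} = B\bm{w}$ gives the chain
\[
((\bm{v} \odot A)\, B)(\bm{w}) \;=\; (\bm{v} \odot A)(B\bm{w}) \;=\; \bm{v} \odot (AB\,\bm{w}) \;=\; (\bm{v} \odot (AB))(\bm{w}),
\]
the last step being the same factored formula applied to $AB$ in place of $A$. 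Since $\bm{w}$ is arbitrary, this proves part (1).

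For part (2), I would again verify the identity on a basis element $\bm{e}^{\odot \mb{k}}$ of the appropriate modulus. First, I apply the inner factor $\bm{e}_m^T \odot \Id_n^{\odot k-1}$ and expand via Definition \ref{defSymProd} with $j_1 = 1$: only $\mb{p} = \bm{e}_m$ among all multi-indices of modulus $1$ gives a nonzero contribution, since $\bm{e}_m^T \bm{e}_j = \delta_{mj}$. Together with the factor $\binom{\mb{k}}{\bm{e}_m} = k_m$ this yields a scalar multiple of $\bm{e}^{\odot \mb{k}-\bm{e}_m}$, with coefficient $k_m/|\mb{k}|$. Next, I apply the outer factor $\bm{e}_m \odot \Id_n^{\odot k-1}$: being led by a column vector, its sum collapses exactly as in part (1), producing $\bm{e}_m \odot \bm{e}^{\odot \mb{k}-\bm{e}_m} = \bm{e}^{\odot \mb{k}}$. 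Summation over $m$ then collects the scalar $\sum_{m=1}^{n} k_m/|\mb{k}| = 1$, identifying the left-hand side as the identity on its domain, as required.

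The main obstacle is purely bookkeeping: one must correctly interpret column and row vectors as elements of $\Mat_n^{j,0}$ and $\Mat_n^{0,j}$ respectively, and carefully track the multinomial and binomial factors produced by Definition \ref{defSymProd}. No deeper structural argument is needed; the proofs of both parts rely on the same key observation, namely that the multi-index sum defining a symmetric product collapses drastically whenever one of the factors is a column or row vector.
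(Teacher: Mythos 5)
Your part (1) is exactly the paper's argument, just written out in full: the paper dismisses it as ``immediate from \eqref{factorABvw} or \eqref{SymProd}'', and your observation that the sum in Definition \ref{defSymProd} collapses to the single term $\mb{p}=\mb{0}$ when the left factor lies in $\Mat_n^{j,0}$ is precisely the reason why. For part (2) you take a genuinely different, more computational route. The paper instead reuses part (1) (with $A=\Id_n^{\odot k-1}$ and $B=\bm{e}_m^T\odot\Id_n^{\odot k-1}$) to pull $\bm{e}_m$ out of the matrix product, then invokes associativity of $\odot$ to regroup the sum as $\bigl[\sum_m \bm{e}_m\odot\bm{e}_m^T\bigr]\odot\Id_n^{\odot k-1}$ and finishes with the rank-one decomposition of the identity; your version verifies the identity entry-by-entry on a basis element $\bm{e}^{\odot\mb{k}}$, and the bookkeeping ($k_m/\av{\mb{k}}$ from the inner factor, collapse of the outer factor, $\sum_m k_m=\av{\mb{k}}$) is all correct. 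The paper's route is shorter and reuses (1); yours makes the cancellation of the binomial coefficients completely explicit, which is arguably more convincing given how delicate the normalisations in \eqref{SymProd} are.

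One point worth flagging: your computation shows the sum acts as the identity on its domain $\Sym^kK^n$, i.e.\ it equals $\Id_n^{\odot k}$, whereas the statement (and the final line of the paper's own proof, $\Id_n^{\odot 2}\odot\Id_n^{\odot k-1}$) reads $\Id_n^{\odot k+1}$. A dimension count ($\bm{e}_m\odot\Id_n^{\odot k-1}\in\Mat_n^{k,k-1}$ composed with $\bm{e}_m^T\odot\Id_n^{\odot k-1}\in\Mat_n^{k-1,k}$ lands in $\Mat_n^{k,k}$) and the way the lemma is applied to simplify $S_2$ in the proof of Proposition \ref{nonlinearprove2} both confirm that $\Id_n^{\odot k}$ is the intended right-hand side; since $\sum_m\bm{e}_m\odot\bm{e}_m^T=\Id_n\in\Mat_n^{1,1}$, the exponents in the printed statement are off by one. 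So your proof establishes the corrected (and actually used) version of the identity, not the literal one --- which is the right thing to have done, but you should say so rather than silently identifying ``the identity on its domain'' with the printed $\Id_n^{\odot k+1}$.
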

\begin{proof}
\begin{enumerate}
\item It suffices to prove it for basis elements of $\Sym^r$: for any $\mb{k}$ such that $\av{\mb{k}}=r$,
\begin{equation} \label{vId} \p{\bm{v} \odot A} B\bm{e}^{\odot\mb{k}} = \bm{v} \odot \p{AB}\bm{e}^{\odot\mb{k}}.
\end{equation}
But this is immediate from equation \eqref{factorABvw} or the definition \eqref{SymProd} of $\odot$ itself.
\item Using the previous item and the associative property in Proposition \ref{odotproperties},
and the fact that $\bm{e}_m\odot\bm{e}_m^T\in \Mat^{2,2}_n$ is zero save for a $1$ in position ${}_{m,m}$,
{\small$$ \sum_{m=1}^n \p{\bm{e}_m \odot \Id_n^{\odot k-1}} \p{\bm{e}_m^T \odot \Id_n^{\odot k-1}}
=\qu{\sum_{m=1}^n \p{\bm{e}_m \odot \bm{e}_m^T}} \odot \Id_n^{\odot k-1} = \Id_n^{\odot 2}\odot \Id_n^{\odot k-1}. $$}
\end{enumerate}
\end{proof}

\subsection{More properties of $\odot$}
We need to generalise some of the properties in Proposition \ref{odotproperties} for later purposes.
Applying the universal property on \eqref{universalh2} (with $V:=\Sym^kK^n$) or
\eqref{htildefunc} (with $j_1=j_2=k$), followed by \eqref{SymProd} and \eqref{anyprods} for $m=2$,
as well as the universal property on \eqref{universalhm} (with $\bm{v}_i:=\bigodot_{i=1}^m A_i \bm{e}^{\odot\mb{p}_i}$) and prepending
{\small${\binom{\av{\mathbf{j}}}{\mathbf{j}}}^{-1}\sum_{\mb{p}_1,\dots,\mb{p}_m}\binom{\mb{k}}{\mb{p}_1,\dots,\mb{p}_m}$} as in \eqref{multipleproduct} for arbitrary $m$, we obtain
\begin{lemma}\label{anyprodslem}
Given square $A,B\in\Mat_n^{k,k}$ and matrices $X_i\in\Mat_n^{k,j_i},\,i=1,2$,
\begin{equation}\label{anyprods}
\p{A\odot B}\p{X_1\odot X_2} = \frac{1}{2} \p{AX_1\odot BX_2 + BX_1\odot AX_2},
\end{equation}
and in general for any square $A_1,\dots,A_m\in\Mat_n^{k,k}$ and any $X_i\in\Mat_n^{k,j_i}$, $i=1,\dots,m$,
\begin{equation}\label{anyprodsgeneral}
\p{\bigodot_{i=1}^m A_i}\p{\bigodot_{i=1}^m X_i} = \frac{1}{m!} \sum_{\sigma\in\fraks_k}\bigodot_{i=1}^mA_{\sigma\p{i}}X_i. \qquad \square
\end{equation}
\end{lemma}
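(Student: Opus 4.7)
The plan is to verify both identities on the basis elements $\bm{e}^{\odot\mb{k}}$ of $\Sym^{|\mb{j}|}K^n$ with $\av{\mb{k}}=j_1+\dots+j_m$, since every map in sight is linear. The heart of the argument is a careful bookkeeping exercise that combines two already-proven formulas: \eqref{multipleproduct}, which evaluates an $\odot$-product of matrices on a basis vector, and the universal-property descriptions \eqref{universalh2} and \eqref{universalhm}, which evaluate a symmetric product of linear maps on a decomposable tensor of the form $\bm{u}_1\odots\bm{u}_m$.

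For $m=2$, I would first expand $\p{X_1\odot X_2}\bm{e}^{\odot\mb{k}}$ via \eqref{SymProd} as
$\binom{j_1+j_2}{j_1}^{-1}\sum_{\mb{p}}\binom{\mb{k}}{\mb{p}}\p{X_1\bm{e}^{\odot\mb{p}}}\odot\p{X_2\bm{e}^{\odot\mb{k}-\mb{p}}}$, each summand lying in $\Sym^kK^n\odot\Sym^kK^n$. Then, applying $A\odot B$ to each summand via the universal property on \eqref{universalh2} (with $V:=\Sym^kK^n$) returns $\tfrac{1}{2}\bigl[A\p{X_1\bm{e}^{\odot\mb{p}}}\odot B\p{X_2\bm{e}^{\odot\mb{k}-\mb{p}}}+A\p{X_2\bm{e}^{\odot\mb{k}-\mb{p}}}\odot B\p{X_1\bm{e}^{\odot\mb{p}}}\bigr]$. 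On the other side, directly expanding $\tfrac{1}{2}\p{AX_1\odot BX_2+BX_1\odot AX_2}\bm{e}^{\odot\mb{k}}$ via \eqref{SymProd} produces the same indexed sum but with the cross-term written as $B\p{X_1\bm{e}^{\odot\mb{p}}}\odot A\p{X_2\bm{e}^{\odot\mb{k}-\mb{p}}}$; commutativity of $\odot$ (Proposition \ref{odotproperties}(1)) identifies this with the corresponding term from the LHS, completing the case $m=2$.

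For arbitrary $m$ the strategy is identical: \eqref{multipleproduct} turns $\p{\bigodot_i X_i}\bm{e}^{\odot\mb{k}}$ into $\binom{\av{\mb{j}}}{\mb{j}}^{-1}\sum_{\mb{p}_1,\dots,\mb{p}_m}\binom{\mb{k}}{\mb{p}_1,\dots,\mb{p}_m}\bigodot_i X_i\bm{e}^{\odot\mb{p}_i}$, and applying $\bigodot_i A_i$ to each summand through \eqref{universalhm}, with $\bm{v}_i:=X_i\bm{e}^{\odot\mb{p}_i}$, yields $\tfrac{1}{m!}\sum_{\sigma\in\fraks_m}\bigodot_i A_i\p{X_{\sigma(i)}\bm{e}^{\odot\mb{p}_{\sigma(i)}}}$; re-indexing $\sigma\mapsto\sigma^{-1}$ and rearranging factors by commutativity of $\odot$ identifies this with the same double sum produced by expanding $\tfrac{1}{m!}\sum_\sigma\bigodot_i A_{\sigma(i)}X_i$ on $\bm{e}^{\odot\mb{k}}$ via \eqref{multipleproduct}. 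The only genuine obstacle is the two-level indexing: one must keep track of how the single prefactor $\binom{\av{\mb{j}}}{\mb{j}}^{-1}$ arising from the $X_i$-product combines cleanly with the factor $\tfrac{1}{m!}$ from the $A_i$-product, and check that after reconciling the symmetrisations no extraneous terms remain on either side. Once that bookkeeping is handled, the identity is a direct consequence of the universal property.
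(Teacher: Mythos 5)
Your proposal follows the paper's own one-line justification, but the pivotal step is not valid as stated: evaluating $A\odot B$ on the summands $\p{X_1\bm{e}^{\odot\mb{p}}}\odot\p{X_2\bm{e}^{\odot\mb{k}-\mb{p}}}$ by invoking the universal property on \eqref{universalh2} with $V:=\Sym^kK^n$ produces a map defined on $\Sym^2\p{\Sym^kK^n}$, whereas the $A\odot B$ appearing in the composition $\p{A\odot B}\p{X_1\odot X_2}$ is the matrix of Definition \ref{defSymProd}, acting on $\Sym^{2k}K^n$. For $k\ge 2$ the canonical surjection $\Sym^2\p{\Sym^kK^n}\to\Sym^{2k}K^n$ has nontrivial kernel and the universal-property map does not descend through it, so the identity $\p{A\odot B}\p{\bm u\odot \bm v}=\frac12\p{A\bm u\odot B\bm v+A\bm v\odot B\bm u}$ fails for general $\bm u,\bm v\in\Sym^kK^n$. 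Concretely, with $n=k=2$, $A=\Id_{3}$, $B\p{\bm e_1^{\odot 2}}=\bm e_1\odot\bm e_2$ and $B\p{\bm e_1\odot\bm e_2}=B\p{\bm e_2^{\odot 2}}=0$, formula \eqref{SymProd} gives $\p{A\odot B}\p{\bm e_1^{\odot 2}\odot\bm e_2^{\odot 2}}=\frac16\,\bm e_1\odot\bm e_2^{\odot 3}$, while $\frac12\p{A\bm e_1^{\odot 2}\odot B\bm e_2^{\odot 2}+A\bm e_2^{\odot 2}\odot B\bm e_1^{\odot 2}}=\frac12\,\bm e_1\odot\bm e_2^{\odot 3}$. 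Taking $X_1,X_2\in\Mat_2^{2,1}$ with $X_1\bm e_1=\bm e_1^{\odot 2}$ and $X_2\bm e_1=\bm e_2^{\odot 2}$ converts this into a counterexample to \eqref{anyprods} itself; no amount of index bookkeeping can repair the argument, because the statement being proved is false for $k\ge 2$.

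For $k=1$ your argument is correct and complete: the intermediate vectors $X_i\bm e^{\odot\mb p_i}$ then lie in $K^n$, and the identity you need is exactly Lemma \ref{factorABvwlem} with $j=q=1$ (respectively \eqref{universalhm} for general $m$, together with the re-indexing $\sigma\mapsto\sigma^{-1}$ you describe and the correction of $\fraks_k$ to $\fraks_m$ in \eqref{anyprodsgeneral}). That is the only case invoked later in the paper --- the ensuing Corollary and Lemma \ref{odeprods} take their square factors in $\Mat_n^{1,1}$ --- and the paper's own justification, which your proposal mirrors, is open to the same objection for $k\ge 2$. So either restrict the lemma to $k=1$, or reinterpret $A\odot B$ as the universal-property map on $\Sym^2\p{\Sym^kK^n}$ (which is not what Definition \ref{defSymProd} yields); under either reading your derivation coincides with the paper's and goes through.
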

\noindent Defining $B:=\Id^{\odot m-j}_{n}$ and $\bm{v}_i:=X_i\bm{e}^{\odot\mb{p}_i}$ where $\av{\mb{p}_i}=q_i$ in equation \eqref{factorABvw}, we have:
\begin{lemma} \label{prodsmoregenerallem}
Given $A\in\Mat_n^{1,j}$ and $X_1,\dots,X_m$ such that $X_i\in\Mat_n^{1,q_i}$, $1\le j \le m$,
{\small\begin{equation}\label{prodsmoregeneral}
\binom{m}{j} \!\p{A\odot\Id^{\odot m-j}_{n}}\!\bigodot_{i=1}^m X_i = \sum_{1\le i_1<\dots<i_j\le m}\!\qu{A\p{X_{i_1}\odots X_{i_j}}}\odot \!\bigodot_{s\ne i_1,\dots,i_j} X_s. \quad \square
\end{equation} }
\end{lemma}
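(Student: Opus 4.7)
My plan is to verify the matrix identity by evaluating both sides at elements of the form $\bm{v}^{\odot Q}$ with $Q:=q_1+\cdots+q_m$ and $\bm{v}\in K^n$ arbitrary. In characteristic zero such pure symmetric powers span $\Sym^Q K^n$, so matching on them forces matrix equality. This route is the paper's hint of substituting $\bm{v}_i := X_i\bm{e}^{\odot\mb{p}_i}$ into \eqref{factorABvw}, but phrased so as to avoid tracking the multinomials from \eqref{multipleproduct} separately.

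First I would iterate Proposition \ref{odotproperties}(8) once per factor to get
\[ \p{\bigodot_{i=1}^m X_i}\p{\bm{v}^{\odot Q}} = \bigodot_{i=1}^m X_i\bm{v}^{\odot q_i} =: \bm{w}_1\odots\bm{w}_m, \]
so each $\bm{w}_i := X_i\bm{v}^{\odot q_i}\in K^n$ is an ordinary vector to which Lemma \ref{factorABvwlem} applies. Next I would invoke that lemma with $B := \Id_n^{\odot m-j}$ and note that $\binom{j+(m-j)}{m-j}=\binom{m}{j}$, while Proposition \ref{odotproperties}(7) applied to $A=\Id_n$ makes $\Id_n^{\odot m-j}$ act trivially on symmetric products of vectors. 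The left-hand side of the statement, evaluated at $\bm{v}^{\odot Q}$, therefore equals
\[ \binom{m}{j}(A\odot \Id_n^{\odot m-j})(\bm{w}_1\odots \bm{w}_m) = \sum_{\sigma\in S_{j,m-j}} A\p{\bm{w}_{\sigma(1)}\odots \bm{w}_{\sigma(j)}}\odot \bm{w}_{\sigma(j+1)}\odots \bm{w}_{\sigma(m)}, \]
and the bijection $\sigma\mapsto \{\sigma(1)<\cdots<\sigma(j)\}$ reindexes the sum over $1\le i_1<\cdots<i_j\le m$.

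For the right-hand side I would treat each summand $\qu{A(X_{i_1}\odots X_{i_j})}\odot\bigodot_{s\ne i_1,\ldots,i_j} X_s$ applied to $\bm{v}^{\odot Q}$ via Proposition \ref{odotproperties}(8), splitting $Q = \sum_t q_{i_t}+\sum_{s\ne i_1,\ldots,i_j} q_s$. A second iterated application of the same item converts $(X_{i_1}\odots X_{i_j})\bm{v}^{\odot \sum_t q_{i_t}}$ into $\bm{w}_{i_1}\odots \bm{w}_{i_j}$, and analogously for the complementary factor. The resulting summand matches exactly the one appearing on the left, completing the verification.

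The hard part will be combinatorial bookkeeping: the factor $\binom{m}{j}$ in the statement must cancel $1/\binom{m}{m-j}$ from Lemma \ref{factorABvwlem} exactly, and the identification of $S_{j,m-j}$ with $j$-subsets of $\{1,\ldots,m\}$ must introduce no spurious weight. Should one instead follow the hint literally through the basis $\bm{e}^{\odot\mb{k}}$, the additional labour will be to verify that the multinomials $\binom{\mb{k}}{\mb{p}_1,\ldots,\mb{p}_m}$ from \eqref{multipleproduct} aggregate identically on both sides once the $\mb{p}_i$'s are repartitioned according to $\{i_1,\ldots,i_j\}$ and its complement; this is routine but more laborious than the pure-power approach above, and either way the only subtlety is ensuring that each $X_i$ indeed lies in $\Mat_n^{1,q_i}$, i.e.\ has codomain $K^n$, which is what makes the $\bm{w}_i$ genuine vectors eligible for Lemma \ref{factorABvwlem}.
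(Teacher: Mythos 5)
Your proof is correct and follows the paper's own route: the paper's entire argument for this lemma is the substitution $B:=\Id_n^{\odot m-j}$, $\bm{v}_i:=X_i\bm{e}^{\odot\mb{p}_i}$ into \eqref{factorABvw}, which is exactly your combination of Lemma \ref{factorABvwlem} with items (7)--(8) of Proposition \ref{odotproperties} and the bijection between $S_{j,m-j}$ and $j$-subsets of $\brr{1,\dots,m}$. Your one deviation --- evaluating on pure powers $\bm{v}^{\odot Q}$, which span $\Sym^{Q}K^{n}$ in characteristic zero, instead of on basis monomials $\bm{e}^{\odot\mb{k}}$ --- is a legitimate shortcut that dispenses with the multinomial bookkeeping of \eqref{multipleproduct} and changes nothing essential.
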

An immediate consequence of either Lemma \ref{anyprodslem} or Lemma \ref{prodsmoregenerallem} is
\begin{corollary} Given a square matrix $A\in\Mat_n^{1,1}$ and $X_1,\dots,X_m$ such that $X_i\in\Mat_n^{1,j_i}$,
\begin{equation}\label{prodsgeneral}
\p{A\odot\Id^{\odot m-1}_{n}}\p{\bigodot_{i=1}^m X_i} = \frac{1}{m} \sum_{i=1}^m\p{AX_i}\odot \p{X_1\odots \widehat{X_i}\odots X_m}. \qquad \square
\end{equation}
\end{corollary}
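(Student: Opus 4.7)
My plan is to obtain the corollary as a one-line specialisation of Lemma \ref{prodsmoregenerallem}, with Lemma \ref{anyprodslem} available as a back-up derivation that makes the normalising factor $1/m$ transparent.

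The primary approach is to set $j=1$ in \eqref{prodsmoregeneral}. Since the hypothesis $A\in\Mat_n^{1,1}$ in the corollary agrees with the $j=1$ instance of the more general hypothesis $A\in\Mat_n^{1,j}$ of Lemma \ref{prodsmoregenerallem}, the left-hand side of \eqref{prodsmoregeneral} specialises to $\binom{m}{1}\p{A\odot\Id_n^{\odot m-1}}\bigodot_{i=1}^m X_i = m\p{A\odot\Id_n^{\odot m-1}}\bigodot_{i=1}^m X_i$, and the index set $\brr{1\le i_1<\dots<i_j\le m}$ collapses to single indices $i_1\in\brr{1,\dots,m}$. The right-hand side therefore reduces to $\sum_{i=1}^m \p{AX_i}\odot\p{X_1\odots \widehat{X_i}\odots X_m}$, and dividing by $m$ yields the claim.

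For the alternative route through Lemma \ref{anyprodslem}, I would apply formula \eqref{anyprodsgeneral} with the choices $A_1:=A$ and $A_2=\dots=A_m:=\Id_n$, all lying in $\Mat_n^{1,1}$. Then $A_1\odots A_m=A\odot\Id_n^{\odot m-1}$ by the conventions in Proposition \ref{odotproperties}, and the summand $\bigodot_{i=1}^m A_{\sigma\p{i}}X_i$ reduces, for each $\sigma\in\fraks_m$, to $\p{AX_k}\odot\bigodot_{i\ne k} X_i$, where $k=\sigma^{-1}\p{1}$ records the unique position at which $A$ (rather than $\Id_n$) acts. Grouping the $m!$ permutations according to the value of $k$ shows that each term $\p{AX_k}\odot\bigodot_{i\ne k}X_i$ is produced exactly $\p{m-1}!$ times, so the coefficient $\frac{1}{m!}$ in \eqref{anyprodsgeneral} becomes $\frac{\p{m-1}!}{m!}=\frac{1}{m}$, giving \eqref{prodsgeneral}.

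Neither route presents a genuine obstacle: both are bookkeeping specialisations of results already proved. The only subtle point, common to both derivations, is recognising that $\Id_n\in\Mat_n^{1,1}$ is the neutral element under left-multiplication, so that $A_{\sigma\p{i}}X_i$ is simply $X_i$ whenever $\sigma\p{i}\ne 1$; this is what collapses the $m!$-fold symmetric sum to the desired $m$-fold one. I would present the proof using the first route for brevity and mention the second in a parenthetical remark to justify the $\frac{1}{m}$ combinatorially.
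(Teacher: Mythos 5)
Your proposal is correct and matches the paper exactly: the paper offers no argument beyond stating that the corollary is ``an immediate consequence of either Lemma \ref{anyprodslem} or Lemma \ref{prodsmoregenerallem},'' and you have simply carried out both of those specialisations in detail (the $j=1$ case of \eqref{prodsmoregeneral}, and the $A_1=A$, $A_2=\dots=A_m=\Id_n$ case of \eqref{anyprodsgeneral} with the $\p{m-1}!$-fold collapse of the permutation sum). Both computations check out, including the bookkeeping that produces the factor $\frac1m$.
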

\noindent$ \p{X \bm{v} \odot \Id_n^{\odot r}}X^{\odot r} = X\bm{v} \odot X^{\odot r}$ in virtue of \eqref{vAB}; applying this, \eqref{vId}, Proposition \ref{odotproperties} and a detailed scrutiny of the effect on basis products $\bm{e}^{\odot\mb{\star}}$ yields:
\begin{lemma} \label{XYvBlem}
Given square matrix $X\in \Mat_n^{1,1}$, any vector $\bm{v}\in K^n$ and $r\ge 1$,
\begin{equation} \label{XYvB}
\p{X\bm{v}\odot \Id^{\odot r}}X^{\odot r} = X^{\odot r+1} \p{\bm{v}\odot \Id^{\odot r} } . \qquad \square
\end{equation}
\end{lemma}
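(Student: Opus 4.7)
The plan is to establish the identity by chaining two already-available reductions, one on each side, and then checking that the common form they reduce to actually agrees on basis products in $\Sym^r K^n$, which is the only part that requires a small bookkeeping argument.

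First I would attack the left-hand side using Lemma \ref{eLemma}(1). Taking $X\bm{v}\in K^n=\Sym^1 K^n$ in the role of $\bm{v}$, together with $A:=\Id_n^{\odot r}$ and $B:=X^{\odot r}$, identity \eqref{vAB} gives immediately
$$
\bigl(X\bm{v}\odot \Id_n^{\odot r}\bigr)X^{\odot r}\;=\;X\bm{v}\odot\bigl(\Id_n^{\odot r}X^{\odot r}\bigr)\;=\;X\bm{v}\odot X^{\odot r}.
$$
So the content of the lemma is reduced to showing $X^{\odot r+1}\bigl(\bm{v}\odot \Id_n^{\odot r}\bigr)=X\bm{v}\odot X^{\odot r}$, an equality of $(r+1,r)$-matrices.

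Next I would evaluate both sides on a generic basis vector $\bm{e}^{\odot\mathbf{k}}$ of $\Sym^r K^n$, where $|\mathbf{k}|=r$. Using Definition \ref{defSymProd} with $j_1=0$ and $j_2=r$, the factor $\bm{v}$ being a $(1,0)$-matrix forces the only admissible multi-index in the sum to be $\mathbf{p}=\mathbf{0}$, whence
$$
\bigl(\bm{v}\odot \Id_n^{\odot r}\bigr)\bm{e}^{\odot\mathbf{k}}\;=\;\bm{v}\odot \bm{e}^{\odot\mathbf{k}}.
$$
Writing $\bm{e}^{\odot\mathbf{k}}=\bm{e}_{i_1}\odots \bm{e}_{i_r}$ for some enumeration with multiplicities matching $\mathbf{k}$, item 7 of Proposition \ref{odotproperties} applied to the $(1,1)$-matrix $X$ then yields
$$
X^{\odot r+1}\bigl(\bm{v}\odot \bm{e}_{i_1}\odots \bm{e}_{i_r}\bigr)\;=\;X\bm{v}\odot X\bm{e}_{i_1}\odots X\bm{e}_{i_r}\;=\;X\bm{v}\odot X^{\odot r}\bm{e}^{\odot\mathbf{k}},
$$
the last equality being again item 7 used in reverse. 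Repeating the same degenerate application of Definition \ref{defSymProd} on $X\bm{v}\odot X^{\odot r}$ confirms that this matrix sends $\bm{e}^{\odot\mathbf{k}}$ to exactly $X\bm{v}\odot X^{\odot r}\bm{e}^{\odot\mathbf{k}}$, so the two sides coincide on every basis element and hence as matrices.

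The only mild obstacle is notational rather than mathematical: one must be careful that $\bm{v}$ and $X\bm{v}$ are being treated as $(1,0)$-matrices when invoking the product formula \eqref{SymProd}, so that the $\binom{j_1+j_2}{j_1}^{-1}\sum_{\mathbf{p}}\binom{\mathbf{k}}{\mathbf{p}}$ prefactor collapses to a single term with coefficient $1$. Once this degenerate case is handled, the proof is just the concatenation of \eqref{vAB} on the left with one use of the polynomial action of $X^{\odot r+1}$ (Proposition \ref{odotproperties}(7)) on the right.
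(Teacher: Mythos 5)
Your proposal is correct and follows essentially the same route the paper sketches just before the lemma: reduce the left-hand side to $X\bm{v}\odot X^{\odot r}$ via \eqref{vAB}, then compare with $X^{\odot r+1}\p{\bm{v}\odot \Id_n^{\odot r}}$ on basis products $\bm{e}^{\odot\mathbf{k}}$ using the degenerate case of \eqref{SymProd} and Proposition \ref{odotproperties}(7). Your explicit handling of the $(1,0)$-matrix bookkeeping is exactly the ``detailed scrutiny of the effect on basis products'' the paper alludes to.
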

\noindent If $\p{K,\partial}$ is a \emph{differential field} \cite{SingerVanderput} and we extend derivation $\partial$ entrywise, $\partial\p{a_{i,j}} := \p{\partial a_{i,j}}$, the Leibniz rule holds on \emph{vector} products $\bm{x}\odot\bm{y}$ as trivially as it does for homogeneous polynomials in $n$ variables in virtue of Proposition \ref{symprop} or $\Sym^{k_1+k_2}\p{V^\star}\cong S^{k_1+k_2}\p{V,K}$; \eqref{SymProd} implies:
\begin{lemma} \label{leibnizlem}
For any given
$X\in\Mat_n^{k_1,j_1}\p{K}$ and $Y\in\Mat_n^{k_2,j_2}\p{K}$,
\begin{equation}\label{leibniz}
\partial \p{X\odot Y} = \partial\p{ X} \odot Y + X\odot \partial\p{Y}.\qquad \square
\end{equation}
\end{lemma}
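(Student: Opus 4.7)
The plan is to reduce the matrix identity to the (obvious) Leibniz rule for the \emph{vector} symmetric product, and then invoke the defining formula \eqref{SymProd} to lift it back to matrices. Since $\partial$ acts entrywise on matrices, we have $\partial\bigl(X(\bm{e}^{\odot\mb{p}})\bigr) = \partial(X)(\bm{e}^{\odot\mb{p}})$ for every basis element, so it suffices to verify \eqref{leibniz} column by column on the vectors $\bm{e}^{\odot\mb{k}}$ which span $\Sym^{j_1+j_2}K^n$.

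First I would prove the vector version: for $\bm{u}\in\Sym^{k_1}K^n$ and $\bm{v}\in\Sym^{k_2}K^n$,
\[
\partial(\bm{u}\odot\bm{v}) \;=\; \partial(\bm{u})\odot\bm{v} \;+\; \bm{u}\odot\partial(\bm{v}).
\]
Expanding $\bm{u}=\sum_{\av{\mb{a}}=k_1} u_{\mb{a}}\,\bm{e}^{\odot\mb{a}}$ and $\bm{v}=\sum_{\av{\mb{b}}=k_2} v_{\mb{b}}\,\bm{e}^{\odot\mb{b}}$, the identity $\bm{e}^{\odot\mb{a}}\odot\bm{e}^{\odot\mb{b}} = \bm{e}^{\odot\mb{a}+\mb{b}}$ shows that the coordinates of $\bm{u}\odot\bm{v}$ are $K$-bilinear expressions in the $u_{\mb{a}}$ and $v_{\mb{b}}$, so the entrywise Leibniz rule in $K$ gives the conclusion at once. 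This is precisely the Leibniz rule for polynomial multiplication under the identification $\Sym^{r}K^n\cong K[x_1,\ldots,x_n]_r$ alluded to after Proposition \ref{symprop}.

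Next I would apply $\partial$ to both sides of \eqref{SymProd}. The prefactor $\binom{j_1+j_2}{j_1}^{-1}$ and the inner weights $\binom{\mb{k}}{\mb{p}}$ all lie in $\nz\subset\ker\partial$, so derivation passes through the sum; the vector Leibniz rule just established splits the summand $X(\bm{e}^{\odot\mb{p}})\odot Y(\bm{e}^{\odot\mb{k}-\mb{p}})$ in two; and recognising each of the two resulting sums as an instance of \eqref{SymProd} (with either $X$ or $Y$ replaced by its derivative) yields
\[
\partial(X\odot Y)(\bm{e}^{\odot\mb{k}}) \;=\; (\partial X\odot Y)(\bm{e}^{\odot\mb{k}}) \;+\; (X\odot\partial Y)(\bm{e}^{\odot\mb{k}}),
\]
which finishes the proof by the column-by-column reduction above.

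No substantive obstacle arises; the only conceptual point is that $\partial$ commutes with the evaluation map $X\mapsto X(\bm{e}^{\odot\mb{p}})$, which is tautological from the entrywise definition of $\partial$, together with the fact that all scalar weights in \eqref{SymProd} are $\partial$-constant. The matrix dimensions $k_i,j_i$ play no role beyond \eqref{SymProd}, so \eqref{leibniz} follows uniformly in the admissible bidegrees.
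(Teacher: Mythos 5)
Your proof is correct and follows exactly the route the paper takes (the paper compresses it into the remark preceding the lemma: the vector Leibniz rule holds because $\odot$ operates like multiplication of homogeneous polynomials, and then \eqref{SymProd}, whose coefficients are $\partial$-constant integers, lifts it to matrices). Your write-up is simply a fully expanded version of that same argument.
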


Although the next result will be rendered academic by simplified expressions in \S \ref{structure}, it is worth writing for the sake of clarifying certain routinely-appearing matrices a bit further. The proof is immediate from commutativity and \eqref{anyprodsgeneral}, \eqref{prodsgeneral}, Lemma \ref{leibnizlem}, the distributive property and \eqref{prodsgeneral}, as well as simple induction in (c):
\begin{lemma}\label{odeprods} Let $\p{K,\partial}$ be a differential field.
\begin{enumerate}
\item If $Y$ is a square $n\times n$ matrix having entries in $K$ and $\partial Y = AY$, then
\begin{equation} \label{Symkeq} \partial\; \Sym^k Y= k \p{A \odot \Sym^{k-1}\p{\Id_n}}\Sym^k Y.
\end{equation}
\item If $X\in \Mat_n^{1,j_1}$ and $Y\in \Mat_n^{1,j_2}$ satisfy systems $\partial X = AX+B_1$ and $\partial Y = AY+B_2$
with $A\in\Mat_n^{1,1},B_i\in\Mat_n^{1,j_i},
$
then symmetric product $X\odot Y$ satisfies linear system
\begin{equation} \label{Symskeq}
\partial \p{X\odot Y } = 2\p{A\odot \Id_{d_{n,k}}}\p{X\odot Y} +  \p{B_1\odot Y + B_2\odot X}.
\end{equation}
\item If $
\partial X_i = AX_i+B_i, \;  i=1,\dots,m,
$ with $X_i,B_i\in \Mat_n^{1,j_i}$,  $A\in \Mat_n^{1,1}$
then
\begin{equation} \label{anySymskeq}
\partial \bigodot_{i=1}^m X_i = m\p{A\odot\Id_{d_{n,k}}^{\odot m-1}}\bigodot_{i=1}^m X_i + \sum_{i=1}^mB_i\odot\bigodot_{j\ne i} X_j . \qquad \square
\end{equation}
\end{enumerate}
\end{lemma}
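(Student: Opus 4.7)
The plan is to prove each item by iterating the Leibniz rule (Lemma \ref{leibnizlem}) to expand the derivative of a symmetric product, substituting in the hypothesised ODEs, and then invoking the rearrangement identity \eqref{prodsgeneral} to fold the homogeneous part into a single compact product involving $A\odot\Id^{\odot\star}$. I would prove item (c) first, since items (a) and (b) are immediate specialisations.

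For item (c), a straightforward induction on $m$ using Lemma \ref{leibnizlem}, together with the commutativity of $\odot$ from Proposition \ref{odotproperties}, gives
\[
\partial\bigodot_{i=1}^m X_i \;=\; \sum_{i=1}^m \p{\partial X_i}\odot \bigodot_{j\ne i} X_j.
\]
Substituting $\partial X_i = AX_i+B_i$ splits this into a homogeneous sum $\sum_i\p{AX_i}\odot\bigodot_{j\ne i}X_j$ plus the inhomogeneous tail $\sum_i B_i\odot\bigodot_{j\ne i}X_j$. By \eqref{prodsgeneral}, the homogeneous sum equals $m\p{A\odot \Id^{\odot m-1}}\bigodot_i X_i$, yielding \eqref{anySymskeq}.

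Item (b) is then the $m=2$ case: the symmetric pair $\p{AX}\odot Y + \p{AY}\odot X$ collapses via \eqref{prodsgeneral} to $2\p{A\odot \Id}\p{X\odot Y}$, producing \eqref{Symskeq}. Item (a) follows by specialising $X_i=Y$ and $B_i=0$ for every $i$, so that $\bigodot_i X_i = \Sym^k Y$ and the inhomogeneous sum vanishes; the prefactor $k$ reappears from $\p{A\odot\Id^{\odot k-1}}Y^{\odot k} = \p{AY}\odot Y^{\odot k-1}$, which is again \eqref{prodsgeneral} with all $X_i=Y$.

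No genuine obstacle arises; the single delicate point is the bookkeeping of the combinatorial constant $m$ in (c), which stems from the Leibniz rule producing $m$ unordered terms while \eqref{prodsgeneral} re-absorbs them with a $1/m$ prefactor. Since every intermediate identity used (Lemma \ref{leibnizlem}, Proposition \ref{odotproperties}, and equation \eqref{prodsgeneral}) has already been established, the proof reduces to a three-line calculation for (c) and two trivial specialisations for (a) and (b).
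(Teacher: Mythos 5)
Your proof is correct and follows essentially the same route the paper indicates: Leibniz's rule (Lemma \ref{leibnizlem}) plus commutativity and distributivity to expand $\partial\bigodot_i X_i$, then \eqref{prodsgeneral} to collapse the homogeneous terms into $m\p{A\odot\Id^{\odot m-1}}\bigodot_i X_i$, with (a) and (b) as specialisations of (c). The paper gives only this same recipe without writing out the computation, so nothing further is needed.
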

\begin{remark} \label{Liepower} Albeit not explicitly as in \eqref{Symkeq}, the matrix proven equal to \allowbreak $k (A \odot \Id_n^{\odot k-1})$  has appeared in numerous references (e.g. \cite{Ap10a,AW1,AW2,ABSW,Ba99a}) whenever a differential equation for $\Sym^k$ arises, has been sometimes labelled $\sym^k$ and has been consistently called symmetric power \emph{in the sense of Lie algebras}, its \emph{Lie group} counterpart therein equal to ${}^{\odot k}$ as defined in this paper. \end{remark}

\section{Symmetric products and exponentials of infinite matrices} \label{syminfsection}

The next step towards a compact form to linearised higher variationals is assembling the matrix blocks alluded to in Lemma \ref{odeprods} and Remark \ref{Liepower} together into a single matrix. Again, we follow paths already trod with other aims and formulations, e.g. by Bekbaev in \cite{bekbaevOct2010}.
\subsection{Products and exponentials}
Of the myriad ways to note a set of infinite matrices, we may need one taking finite submatrix orders into account. Alternatively,
of all the ways in which to write a $K$-algebra $S$, a need may arise to express it whenever possible $S=\Sym\p{V}:=\bigoplus_{k\ge 0} \Sym^k\p{V}$ for a given vector space.
\begin{notation}
Let $\Mat^{n,m}\p{K}$ denote the set of block matrices
$A =\p{A_{i,j}}_{i,j\ge 0}$ with $A_{i,j}:\Sym^{i}K^m\to \Sym^{j}K^n,$ hence $A_{i,j}\in \mathrm{M}_{d_{n,i}\times d_{m,j}} \p{K} = \mathrm{Mat}^{i,j}_{n,m}\p{K}$:
{\small\[
A =
\p{\begin{tabular}{>{\centering\arraybackslash}m{.5cm}|>{\centering\arraybackslash}m{2.2cm}|>{\centering\arraybackslash}m{1.2cm}|>{\centering\arraybackslash}m{-0.3cm}|c}
$\ddots$  &$\vdots$  & $\vdots$ & $\vdots$ &  \\ \cline{1-5}
\multirow{5}{*}{$\cdots$}    & \multirow{5}{*}{$A_{2,2}$} & \multirow{5}{*}{$A_{2,1}$} & \multirow{5}{*}{}&\multirow{5}{*}{$\hspace{-0.45cm}\leftarrow A_{2,0}$} \\
  &  &  &  &  \\
    &  &  &  &  \\
      &  &  &  &  \\
        &  &  &  &  \\  \cline{1-5}
\multirow{3}{*}{$\cdots$}    & \multirow{3}{*}{$A_{1,2}$} & \multirow{3}{*}{$A_{1,1}$} & \multirow{3}{*}{}&\multirow{3}{*}{$\hspace{-0.45cm}\leftarrow A_{1,0}$} \\
  &  &  &  &  \\
    &  &  &  &  \\  \cline{1-5}
$\cdots$    & $A_{0,2}$ & $A_{0,1}$ & & ${\hspace{-0.45cm}\leftarrow A_{0,0}}$ \\
\cline{1-5}
\end{tabular}
}
\]}We write $\Mat:=\Mat^{n,n}$ if $n$ is unambiguous.
Conversely, $\mathrm{Mat}^{i,j}_{n,m}$ is embedded in $\Mat^{n,m}$ by identifying every matrix $A_{i,j}$ with an element of $\Mat^{n,m}$ equal to $0$ save for block $A_{i,j}$.
\end{notation}

We define a product on $\Mat^{n,m}$. For a formulation yielding the same results see \cite[p. 2]{bekbaevMar2012}.
\begin{definition}
For any $A,B\in \Mat^{n,m}\p{K}$, define $A\odot B = C\in \Mat^{n,m}\p{K}$ by \begin{equation}\label{SymProdInf}
C=\p{C_{i,j}}_{i,j\ge 0}, \qquad C_{i,j} = \sum_{0\le i_1\le i,\; 0\le j_1\le j} \binom{j}{j_1}  A_{i_1,j_1}\odot B_{i-i_1,j-j_1}.
\end{equation}
Same as always, ${}^{\odot k}$ will stand for powers built with this product.
\end{definition}

The following is immediate and part of it has already been mentioned before, e.g. \cite{bekbaevOct2010}:
\begin{lemma} \label{Matisadomain} $\p{\Mat\p{K},+,\odot}$ is an integral domain, its identity element $1_{\Mat}$ equal to zero save for block $\p{1_{\Mat}}_{0,0}=1_K$.
$\Mat\p{K}$ is also a unital associative $K$-algebra with the usual product by scalars.
 $\hfill\square$
\end{lemma}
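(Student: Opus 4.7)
The plan is to reduce every axiom back to the block-level versions already established in Proposition \ref{odotproperties}, so nothing essentially new is needed beyond routine multinomial bookkeeping. Since addition is entrywise, $(\Mat(K),+)$ is at once an abelian group, and the identity $\alpha(A_{i_1,j_1}\odot B_{i_2,j_2})=(\alpha A_{i_1,j_1})\odot B_{i_2,j_2}$ from Proposition \ref{odotproperties}(4), applied block by block, together with the entrywise $K$-action, supplies the vector-space and scalar-compatibility portion of the $K$-algebra structure. Distributivity of $\odot$ over $+$ in \eqref{SymProdInf} follows immediately from Proposition \ref{odotproperties}(2) applied summand by summand.

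Next I would handle the identity and commutativity. For the identity, only the term with $(i_1,j_1)=(i,j)$ survives in $(A\odot 1_\Mat)_{i,j}$ because $(1_\Mat)_{a,b}=0$ unless $a=b=0$, and $A_{i,j}\odot 1_K = A_{i,j}$ by unwinding \eqref{SymProd} on basis products (the sum collapses to the single multi-index $\mb{p}=\mb{k}$). For commutativity, I would substitute $(i_1,j_1)\mapsto(i-i_1,j-j_1)$ in the defining sum for $(B\odot A)_{i,j}$, use $\binom{j}{j-j_1}=\binom{j}{j_1}$, and invoke Proposition \ref{odotproperties}(1) to recover $(A\odot B)_{i,j}$.

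For associativity, I would expand both $((A\odot B)\odot C)_{i,j}$ and $(A\odot(B\odot C))_{i,j}$ by applying \eqref{SymProdInf} twice, invoke block-level associativity from Proposition \ref{odotproperties}(3) inside, and use the identity $\binom{j}{j_1}\binom{j_1}{j_2}=\binom{j}{j_2,\,j_1-j_2,\,j-j_1}$ to rewrite both expansions as the symmetric triple sum
\[
\sum_{\substack{a+b+c=i\\ \alpha+\beta+\gamma=j}} \binom{j}{\alpha,\beta,\gamma}\; A_{a,\alpha}\odot B_{b,\beta}\odot C_{c,\gamma},
\]
which makes the common value manifest.

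The one step that requires more than symbol-pushing is the absence of zero divisors, and I expect this to be the main obstacle. Given nonzero $A,B\in\Mat(K)$, I would let $j_A$ be the least column index $j$ for which some $A_{i,j}$ is nonzero, and $i_A$ the least row index with $A_{i_A,j_A}\ne 0$; define $(i_B,j_B)$ analogously for $B$. Inspecting $(A\odot B)_{i_A+i_B,\,j_A+j_B}$ via \eqref{SymProdInf}, minimality forces $j_1=j_A$ (any $j_1<j_A$ kills the $A$-factor, any $j_1>j_A$ kills the $B$-factor), and then the same argument on the first index forces $i_1=i_A$, leaving only
\[
(A\odot B)_{i_A+i_B,\,j_A+j_B} \;=\; \binom{j_A+j_B}{j_A}\, A_{i_A,j_A}\odot B_{i_B,j_B},
\]
which is nonzero by Proposition \ref{odotproperties}(6). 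Hence $A\odot B\ne 0$, completing the argument.
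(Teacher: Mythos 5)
Your proposal is correct. The paper itself offers no proof of this lemma --- it is declared ``immediate'' with a pointer to Bekbaev's preprints --- so there is nothing to compare against; your argument supplies exactly the routine verifications the author skips. The only point that genuinely needs an idea is the absence of zero divisors, and your minimal-bidegree argument (take the least column index $j_A$ with a nonzero block, then the least row index $i_A$ in that column, and observe that in $(A\odot B)_{i_A+i_B,\,j_A+j_B}$ every summand except $\binom{j_A+j_B}{j_A}A_{i_A,j_A}\odot B_{i_B,j_B}$ is killed by minimality, so Proposition \ref{odotproperties}(f) applies) is the standard ``leading term'' proof that works here because the block indices are well-ordered and the defining sums in \eqref{SymProdInf} are finite; the multinomial identity you use for associativity is likewise correct since the binomial weights in \eqref{SymProdInf} involve only the column indices.
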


\begin{definition} (See also \cite{bekbaevOct2010}) \label{defexp}
for every matrix $A\in \Mat^{n,m}$ we define the formal power series
\[
\expodot A := 1 + A^{\odot 1} + \frac 12 A^{\odot 2} + \cdots = \sum_{i=0}^{\infty} \frac{1}{i!} A^{\odot i}.
\]
Whenever $A=0$ save for a finite distinguished submatrix $A_{j,k}$ (e.g. Examples \ref{examplesexp} below or
Lemma \ref{expfactor}), the abuse of notation $\expodot A_{j,k}=\expodot A$ will be customary.
\end{definition}
\noindent Commutativity of $\odot$ renders the proof of the following similar to that of scalar exponentials:
\begin{lemma} \label{expprops0}
\begin{enumerate}
\item[\phantom{hola}]
\item For every two $A,B\in\Mat^{n,m}$, $\expodot\p{A+B} =\expodot{A}\odot\expodot{B}$.
\item For every $Y\in \Mat^{n,m}$ and any derivation $\partial:K\to K$,  $\partial\expodot Y = \p{\partial Y}\odot \expodot Y. $
\item (\cite[Corollary 3]{bekbaevSept2009}) Given square matrices $A,B\in \Mat^{1,1}_n$, $\expodot AB = \expodot A\break \expodot B$.
\item In particular, for every invertible square $A\in \Mat^{1,1}_n$, $\expodot A^{-1} = \p{\expodot A}^{-1}$. \qed
\end{enumerate}
\end{lemma}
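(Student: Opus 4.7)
The overall strategy is to exploit Lemma \ref{Matisadomain}: $(\Mat,+,\odot)$ is a commutative integral domain, so the formal-power-series manipulations that prove the scalar identities $e^{x+y}=e^{x}e^{y}$ and $(e^{y})'=y'e^{y}$ carry over verbatim. Convergence is not an issue because \eqref{SymProdInf} shows each block $(A^{\odot k})_{i,j}$ vanishes once $k$ exceeds a bound depending on $i,j$, so every block of $\expodot A$ is determined by a finite sum.

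For (a), a straightforward induction using commutativity and distributivity of $\odot$ (Proposition \ref{odotproperties}(1)--(4), whose infinite-matrix versions are immediate from \eqref{SymProdInf}) yields the binomial identity
\[ (A+B)^{\odot k} = \sum_{j=0}^{k} \binom{k}{j}\, A^{\odot j} \odot B^{\odot (k-j)}. \]
Substituting into $\expodot(A+B)=\sum_{k\ge 0}\tfrac{1}{k!}(A+B)^{\odot k}$ and reindexing the resulting double sum by $(j,\ell):=(j,k-j)$ gives $\expodot A\odot\expodot B$. For (b), Lemma \ref{leibnizlem} extends block-wise to $\Mat$ because $\partial$ commutes with the finite sum in \eqref{SymProdInf}; induction then gives $\partial(Y^{\odot k})=k(\partial Y)\odot Y^{\odot(k-1)}$, and termwise differentiation of the defining series produces $\partial\expodot Y=(\partial Y)\odot\expodot Y$.

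For (c), note that if $A\in\Mat_{n}^{1,1}$ is embedded in $\Mat^{n,n}$ (so $A_{i,j}=0$ unless $(i,j)=(1,1)$), then \eqref{SymProdInf} together with induction on $k$ gives $(A^{\odot k})_{i,j}=0$ unless $(i,j)=(k,k)$, in which case the block equals $k!\,\Sym^{k}A$ (the finite-matrix $\odot$-power). Hence $\expodot A$ is block-diagonal with $(k,k)$-block equal to $\Sym^{k}A$. Viewing the right-hand side of (c) as the ordinary matrix product of two such block-diagonal matrices, its $(k,k)$-block is $(\Sym^{k}A)(\Sym^{k}B)=\Sym^{k}(AB)$, the last equality being the composition identity $(f_{1}\circ f)\odot(g_{1}\circ g)=(f_{1}\odot g_{1})\circ(f\odot g)$ (noted right after \eqref{universalhm}) iterated $k$ times. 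This matches the $(k,k)$-block of $\expodot(AB)$, proving (c). Part (d) follows by taking $B:=A^{-1}$ in (c) and observing that $\expodot(AA^{-1})=\expodot(\Id_{n})$ is block-diagonal with identity in every block, hence serves as the identity for the ordinary matrix product on block-diagonal matrices.

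The main care required is extending the finite-matrix $\odot$-identities of Proposition \ref{odotproperties} and Lemma \ref{leibnizlem} block-wise to $\Mat$ via \eqref{SymProdInf}, and keeping straight which product (the $\odot$-product of infinite matrices in (a)--(b), or the ordinary matrix product on block-diagonal infinite matrices in (c)--(d)) is in play. Once this bookkeeping is done, each item is a literal translation of the classical scalar exponential argument.
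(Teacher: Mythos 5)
Your proof is correct and is exactly the argument the paper intends: the text offers no written proof beyond the remark that commutativity of $\odot$ renders the scalar-exponential argument applicable (citing Bekbaev for item (c)), and your binomial and Leibniz computations in the commutative algebra $\p{\Mat,+,\odot}$, together with the identification of $\expodot A$ with $\mathrm{diag}\p{\dots,\Sym^kA,\dots,A,1}$ and the homomorphism property $\Sym^k\p{AB}=\Sym^k\p{A}\,\Sym^k\p{B}$, are precisely that argument fleshed out. The one inaccuracy is your well-definedness remark: $\p{A^{\odot k}}_{0,0}=\p{A_{0,0}}^{k}$ need not vanish for large $k$, so blockwise finiteness of the exponential series actually requires $A_{0,0}=0$ (which holds in every application in the paper and is tacitly assumed there as well).
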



\begin{examples}\label{examplesexp}
\begin{enumerate}
\item[\phantom{hola}]
\item[\textbf{1.}] Let $A\in\Mat\p{K}$ such that all blocks are zero except for ${}_{1,1}$, the reader can check that
\[
\expodot A_{1,1}=\expodot\p{\begin{tabular}{c|c}
 $A_{1,1}$ & $0$ \\ \hline
 $ 0$ & $0$
\end{tabular}}=\mathrm{diag}\p{
\cdots , A_{1,1}^{\odot k} , \dots , A_{1,1} , 1} .
\]
\item[\textbf{2.}] If the only non-zero block in $A$ is a row vector,
\[
A=\p{\begin{tabular}{cccc|c}
$\cdots$ &  $0$ & $0$ &  $\cdots$ & $0$ \\ \hline
$\cdots$ &  $0$ & $\bm{x}$ &  $\cdots$ & $0$
\end{tabular}},\qquad \bm{x}=\p{x_1,\dots,x_{d_{n,k}}}\in  \mathrm{Mat}^{0,k}_n\p{K},
\]
the only non-zero block  in $A^{\odot j}$ is
$\p{A\odot A^{j-1}}_{0,jk}=\binom{jk}{k}\cdot \binom{\p{j-1}k}{k}\cdots \binom{2k}{k} A_{0,k}^{\odot j}$
 e.g. for $k=1$,
$
\expodot A= \sum_{j\ge 0}\bm{x}^{\odot j}  =
\p{\begin{tabular}{ccccc|c}
$\cdots$ &  $0$ & $0$ &$0$ &  $0$ & $0$ \\ \hline
$\cdots$ &  $\bm{x}^{\odot 4}$ & $\bm{x}^{\odot 3}$ & $\bm{x}^{\odot 2}$ &  $\bm{x}$ & $1$
\end{tabular}}.
$
\item[\textbf{3.}]
If the only non-zero block in $A$ is column ${}_{0,k}$, the only one in $A^{\odot j}$ is ${}_{jk,0}$, obtained by switching rows and columns and expunging binomials from \break$\p{A\odot A^{j-1}}_{0,jk}$ in \textbf{2}. For $k=1$,
{\small\[
\expodot \bm{x}=
\expodot \p{\begin{tabular}{c|c}
 $0$  & $\bm{x}$ \\ \hline
  $0$ & $0$
\end{tabular}} =  \p{\begin{tabular}{c|c}
  $\vdots$ &  $\vdots$  \\
  $0$ & $\frac{1}{j!}\bm{x}^{\odot j}$ \\
  $\vdots$ &  $\vdots$  \\
 $0$  & $\bm{x}$ \\ \hline
  $0$ & $1$
\end{tabular}}.
\]}
\end{enumerate}
\end{examples}
\noindent Fourth example \eqref{Ydisplay}, i.e. matrices equal to $0$ save for block row ${}_{1,k}$, deserves special attention.
\begin{notation}\label{notationc}
For every set of indices $1\le i_1\le \dots \le i_r$ such that $ \sum_{j=1}^ri_j = k$, $c^k_{i_1,\dots,i_r}$ is defined as the amount of totally ordered partitions of a set of $k$ elements among subsets of sizes $i_1,\dots,i_r$. We write $c^k_{\mathbf{i}}$ if $\mathbf{i}=\p{i_1,\dots,i_r}$ and omit super-index ${}^k$ if $\av{\mathbf{i}}$ is known beforehand.
\end{notation}
\begin{remarks} \begin{enumerate}
\item[\phantom{hola}]
\item[\textbf{1.}] $c^k_{i_1,\dots,i_j}=\# I^{i_1,\dots,i_j}_{1,\dots,k}$ in \eqref{Ims} below, $\sum_{\av{\mathbf{i}} = k} c^k_{i_1,\dots,i_j} = \stirlingtwo{k}{j},$  the \emph{Stirling number of the second type} (\cite[\S 24.1.4]{Abramowitz}), and  $\sum_{j=1}^k\sum_{\av{\mb{i}} = k} c^k_{i_1,\dots,i_j} = B_{k},$ the $k^{\mathrm{th}}$ \emph{Bell number} \cite[Vol 2, Ch. 3]{Ramanujan2}.
\item[\textbf{2.}]  Since each subset of size $i_s$ is supposed to be ordered, we must divide the total amount
by the orders of the corresponding symmetric groups, hence the explicit formula:
{\small\begin{equation} \label{ck}
c^k_{i_1,\dots,i_j} = \frac{\binom{k}{i_1\,i_2\,\cdots\,i_j}}{n_1!\cdots n_m!}, \quad
\left\{ \!\begin{array}{l}\p{i_1,\dots,i_j} = \p{k_1\stackrel{n_1}{\dots}k_1,\cdots,k_m\stackrel{n_m}{\dots}k_m}, \\ 1\le k_1<k_2<\dots< k_m.\end{array} \right.
\end{equation}}
\end{enumerate}
\end{remarks}

\begin{lemma}\label{AZproperties}
Let $Y\in\Mat\p{K}$ equal to zero outside of block row $_{1,k}$, $k\ge 1$:
\begin{equation} \label{Ydisplay}
Y:= \p{\begin{tabular}{cccc|c}
$\cdots$ &  $Y_3$ & $Y_2$ &  $Y_1$ & $0$ \\ \hline
$0$ &  $0$ & $0$ &  $0$ & $0$
\end{tabular}}, \qquad Y_i\in\Mat^{1,i}_{n}.
\end{equation}
Let $Z_{r,s}$, $s, r \ge 1$, be the corresponding block in $\expodot Y$. Then,
\begin{enumerate}
\item Row block $r$ in $\expodot Y$ is recursively obtained in terms of row blocks $1$ and $r-1$:
\begin{equation}\label{expZrs}
Z_{r,s} = \frac1r \sum_{j=1}^{s-r+1}\binom{s}{j}Y_j\odot Z_{r-1,s-j}.
\end{equation}
In particular, $Z_{r,r}=Y_1^{\odot r}$ and $Z_{r,s}=0_{d_{n,r},d_{n,s}}$ whenever $r>s$.
\item For every $m,r\ge 1$ and any $\bm{v}\in K^n$,
\begin{equation}\label{YZout}
\p{Y_1\bm{v}\odot \Id_n^{\odot r}} Z_{r,r} =  Z_{r+1,r+1}\p{\bm{v}\odot \Id_n^{\odot r}}.
\end{equation}
\item Using Notation \ref{notationc} and \eqref{ck}, for every $s\ge r$
\vspace*{2pt}
\begin{equation} \label{Zrkdef} Z_{r,s}=\sum_{i_1+\dots+i_r=s} c^s_{i_1,\dots,i_r}Y_{i_1}\odot Y_{i_2} \odot \cdots \odot Y_{i_r}.
\end{equation}
\item Let $A\in\Mat\p{K}$ similar to $Y$, its horizontal strip not necessarily at level ${}_{1,*}$:
\vspace*{2pt}\[
A:= \p{\begin{tabular}{cccc|c}
$\cdots$ &  $A_3$ & $A_2$ &  $A_1$ & $0$ \\
$\cdots$  &  $0$ & $0$ &  $0$ &  $0$  \\
  &  $\vdots$ & $\vdots$ &  $\vdots$ &  $\vdots$  \\
$\cdots$  &  $0$ & $0$ &  $0$ &  $0$  \\
\hline
$0$ &  $0$ & $0$ &  $0$ & $0$
\end{tabular}}, \quad A_t\in\Mat_n^{p,t}.
\]
For every $t,i\ge 1$ and $s\ge t+i$, the following factorization holds:
\vspace*{2pt}\begin{equation}\label{Afactor}
 \sum_{j=t}^{s-i}\binom{s}{j} \p{A_tZ_{t,j}}\odot Z_{i,s-j}=\binom{t+i}{i} \p{A_t\odot \Id_n^{\odot i}} Z_{t+i,s} .
\end{equation}
\item If $Q\in \Mat^n$ has only its square ${}_{1,1}$ block different from zero, then $\expodot QY = \p{\expodot Q}\p{\expodot Y}$.
\end{enumerate}
\end{lemma}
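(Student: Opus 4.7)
The plan rests on the observation that since $Y$ has nonzero blocks only in row~$1$, a direct induction on $r$ using \eqref{SymProdInf} shows that $Y^{\odot r}$ has nonzero blocks only in row~$r$, with $(Y^{\odot r})_{r,s}=r!\,Z_{r,s}$. For Part~1, I would write $Y^{\odot r}=Y\odot Y^{\odot r-1}$ and apply \eqref{SymProdInf}: only terms with $i_1=1$ and column-index $j\in\{1,\dots,s-r+1\}$ survive, the upper bound being forced by $(Y^{\odot r-1})_{r-1,s-j}=0$ whenever $s-j<r-1$. This gives \eqref{expZrs} directly. Iterating from $Z_{1,1}=Y_1$, only $j=1$ contributes on the diagonal, so $Z_{r,r}=Y_1\odot Y_1^{\odot r-1}=Y_1^{\odot r}$; Part~2 is then an immediate instance of Lemma~\ref{XYvBlem} with $X:=Y_1$.

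For Part~3 I would proceed by induction on $r$ using the recursion \eqref{expZrs} just established. The base case $r=1$ reads $Z_{1,s}=Y_s=c^s_s\,Y_s$. The inductive step amounts to a combinatorial identity: for each weakly ordered tuple $(i_1,\dots,i_r)$ with $\sum i_\ell=s$, the coefficient of $Y_{i_1}\odots Y_{i_r}$ produced by \eqref{expZrs} after regrouping (using commutativity of $\odot$) must match the value $c^s_{i_1,\dots,i_r}$ from \eqref{ck}; this is a direct count of how a distinguished part can be split off from an ordered partition, suitably corrected for repeated parts.

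Part~4 is the most delicate step, and I expect it to be the main obstacle. My plan is to substitute the formula from Part~3 into both sides and compare. Expanding $Z_{t,j}$ and $Z_{i,s-j}$ as sums over ordered tuples $\mb{a},\mb{b}$ with $|\mb{a}|=j$ and $|\mb{b}|=s-j$, and using the multinomial identity $\binom{s}{j}\binom{j}{\mb{a}}\binom{s-j}{\mb{b}}=\binom{s}{\mb{a},\mb{b}}$, the left-hand side collapses into
\[
\tfrac{1}{t!\,i!}\sum_{\mb{a},\mb{b}}\binom{s}{\mb{a},\mb{b}}\,[A_t(Y_{a_1}\odots Y_{a_t})]\odot(Y_{b_1}\odots Y_{b_i}).
\]
Expanding $Z_{t+i,s}$ the same way and then applying Lemma~\ref{prodsmoregenerallem} with $m:=t+i,\ j:=t$ to each summand introduces an inner sum over $\binom{t+i}{t}$ interleavings of $\mb{a}$ into an ordered $(t+i)$-tuple~$\mb\gamma$; these interleavings absorb the factorial ratio $\binom{t+i}{t}/(t+i)!=1/(t!\,i!)$ and reproduce exactly the same expression.

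Finally, Part~5 follows at once from Part~3. Matrix multiplication makes $QY$ a block-row matrix of the shape \eqref{Ydisplay} with entries $(QY)_j=QY_j$, so by Part~3 and Proposition~\ref{odotproperties}(g),
\[
(\expodot QY)_{r,s}=\sum_{\mb i}c^s_{\mb i}\,(QY_{i_1})\odot\cdots\odot(QY_{i_r})=Q^{\odot r}\sum_{\mb i}c^s_{\mb i}\,Y_{i_1}\odot\cdots\odot Y_{i_r}=Q^{\odot r}Z_{r,s}.
\]
Example~\ref{examplesexp}.1 gives $(\expodot Q)_{r,r}=Q^{\odot r}$, so the block matrix product $(\expodot Q)(\expodot Y)$ has the same $(r,s)$-block.
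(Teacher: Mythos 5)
Your proposal is correct and follows essentially the same route as the paper's proof: part~1 via \eqref{SymProdInf} applied to $Y\odot Y^{\odot r-1}$, part~2 from Lemma~\ref{XYvBlem}, part~3 by induction reducing to the coefficient identity $\frac1r\sum_\ell n_\ell c^s_{\mb j}=c^s_{\mb j}$, part~4 by substituting \eqref{Zrkdef} into both sides and invoking Lemma~\ref{prodsmoregenerallem}, and part~5 by pushing $Q$ through \eqref{Zrkdef}. Your only deviation is cosmetic: in part~4 you sum over \emph{all} tuples with multinomial weights rather than over weakly ordered tuples with the $c^s_{\mb i}$ coefficients, which tidies the multiplicity bookkeeping the paper calls ``a tedious exercise'' but is the same argument.
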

\begin{proof}
\begin{enumerate}
\item Using \eqref{SymProdInf} on $A=Y$, $B=Y^{\odot s-1}$,
as well as the fact $Z_{i,j}=0$ for $i>j$, \eqref{expZrs} ensues.
\item Direct from \eqref{XYvB} in Lemma \ref{XYvBlem}.
\item By induction. For $s=1$, $r$ can only be equal to $1$ in order to have a non-zero block and $Z_{1,1}=Y_1=c^{1}_{1}Y_1$.
Assume \eqref{Zrkdef} holds for all $r$ smaller than or equal to $s-1$. Summand redistribution renders $Z_{r,s} = \frac1r \sum_{j_1+\dots+j_r=s} C_{j_1,\dots,j_r}  Y_{j_1}\odot Y_{j_2} \odot \cdots \odot Y_{j_{r}} $
where $C_{j_1,\dots,j_r}$ splits into a certain sum, each of whose $m$ terms is easily checked to be equal to
$  n_m c^s_{j_1,\dots,j_r}, $
hence the coefficient of $Y_{j_1} \odot \cdots \odot Y_{j_{r}}$ in equals
$\frac1r \sum_{i=1}^mn_i c^s_{j_1,\dots,j_r} = c^s_{j_1,\dots,j_r}. $
\item the left-hand side in \eqref{Afactor}, expressed in terms of \eqref{Zrkdef} and applying distributivity, equals
\vspace*{2pt}{\small\begin{equation}\label{conydesuma}  \sum_{j=t}^{s-i}\binom{s}{j}
\sum_{m_1,\dots,m_t}\sum_{k_1,\dots,k_i}c^j_{m_1,\dots,m_t}c^{s-j}_{k_1,\dots,k_i}\qu{A_t\p{Y_{m_1}\odots Y_{m_t}}} \odot
Y_{k_1}\odots Y_{k_i}.
\end{equation}}A tedious exercise in counting index multiplicities and applying basic combinatorics allows us to apply Lemma \ref{prodsmoregenerallem} to $A_t$ and $Y_{\odot \mathbf{m}} \odot  Y_{\odot \mathbf{k}}:=Y_{m_1}\odots Y_{m_t} \odot  Y_{k_1}\odots Y_{k_i}$:
\vspace*{2pt}{\small\begin{equation}\label{betterequation}
\binom{s}{j}c^j_{\mathbf{m}}c^{s-j}_{\mathbf{k}}\qu{A_t\p{Y_{m_1}\odots Y_{m_t}}} \odot
Y_{k_1}\odots Y_{k_i} = \binom{i+t}{i}\p{A_t\odot \Id_n^{i}}Y_{\odot \mathbf{m}} \odot  Y_{\odot \mathbf{k}}.
\end{equation}}The fact every summand in \eqref{conydesuma} fits the same profile as the left-hand side in \eqref{betterequation} allows us to factor $\binom{i+t}{i}\p{A_t\odot \Id_n^{i}}$
out of the whole sum, namely $Z_{i+t,s}$.
\item Replacing each factor $Y_{i_j}$ by $QY_{i_j}$ in \eqref{Zrkdef} and applying Lemma \ref{odotproperties} we obtain
$\expodot QY = \p{\tilde Z_{r,k}}$ where
$ \tilde Z_{r,s}=\sum_{i_1+\dots+i_r=s} Q^{\odot r} \odot c^s_{i_1,\dots,i_r}Y_{i_1}\odot Y_{i_2} \odot \cdots \odot Y_{i_r} = Q^{\odot r} Z_{r,s}, $
hence matrix $\expodot Y$ appears multiplied by $\mathrm{diag}\p{ \cdots,Q^{\odot 2}, Q^{\odot 1}, 1}=\expodot Q$.
\end{enumerate}
\end{proof}

\begin{lemma}\label{expfactor}
Let $A$ and $Y$ be as in Lemma \ref{AZproperties}.
Then, \begin{equation}\label{Aexp} \p{A\expodot Y} \odot \expodot Y=\p{A\odot \expodot\Id_n} \expodot Y.\end{equation}
\end{lemma}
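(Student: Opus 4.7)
The plan is to expand both sides block-by-block using the definition \eqref{SymProdInf} of $\odot$ on infinite matrices, exploit the sparse support of $A$ and $Y$, and finally reduce the equality of blocks to the factorization identity \eqref{Afactor} of Lemma~\ref{AZproperties}(d).

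First I would note that $\expodot\Id_n$ is straightforward to compute via Example~\ref{examplesexp}(\textbf{1}): it is the block-diagonal matrix $\mathrm{diag}(\ldots, \Id_n^{\odot k},\ldots,\Id_n,1)$, with $\Id_n^{\odot k}=\Id_{d_{n,k}}$. Since $A$ is supported only on the row block at level $p$ (with entries $A_t\in\Mat_n^{p,t}$), applying \eqref{SymProdInf} to $A\odot \expodot\Id_n$ forces $i_1=p$ and $i-i_1=j-j_1$, leaving a single non-zero term per block. Writing $q:=i-p$ and $j_1=:t$, this gives
\[
(A\odot \expodot\Id_n)_{p+q,\,t+q}=\binom{t+q}{q}\,A_t\odot \Id_n^{\odot q}.
\]
Multiplying on the right by $\expodot Y=(Z_{r,s})$ and using that $\expodot Y$ has only the rows $r\ge 1$ (with $Z_{r,s}=0$ for $r>s$), I obtain
\[
\bigl((A\odot \expodot\Id_n)\expodot Y\bigr)_{p+q,\,s}
=\sum_{t\ge 0}\binom{t+q}{q}\,(A_t\odot \Id_n^{\odot q})\,Z_{t+q,s}.
\]

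For the left-hand side, I would first note that $A\expodot Y$ inherits from $A$ the property of being supported on the row at level $p$, with blocks $(A\expodot Y)_{p,s_2}=\sum_t A_t Z_{t,s_2}$. Applying \eqref{SymProdInf} again to $(A\expodot Y)\odot \expodot Y$, the sparse support again collapses the sum to indices $i_1=p$, so that
\[
\bigl((A\expodot Y)\odot \expodot Y\bigr)_{p+q,\,s}
=\sum_{s_2\ge 0}\binom{s}{s_2}\,(A\expodot Y)_{p,s_2}\odot Z_{q,\,s-s_2}
=\sum_{t,\,s_2}\binom{s}{s_2}\,(A_tZ_{t,s_2})\odot Z_{q,\,s-s_2}.
\]
Exchanging the order of summation and fixing $t$, the identity \eqref{Aexp} thus reduces to showing
\[
\sum_{s_2}\binom{s}{s_2}\,(A_t Z_{t,s_2})\odot Z_{q,\,s-s_2}
=\binom{t+q}{q}\,(A_t\odot \Id_n^{\odot q})\,Z_{t+q,s}
\]
for every $t$ and $q$, which is precisely \eqref{Afactor} with $i:=q$ and $j:=s_2$ (the summation range $t\le s_2\le s-q$ being automatic, since $Z_{t,s_2}=0$ for $s_2<t$ and $Z_{q,s-s_2}=0$ for $s-s_2<q$).

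The main obstacle, and really the only non-trivial step, is the bookkeeping that forces exact agreement of binomial coefficients and summation limits; once one recognises the resulting expression as a term-by-term instance of the factorization \eqref{Afactor}, nothing further is needed. All other manipulations are direct consequences of the sparse supports of $A$ and $Y$ together with the block-diagonal form of $\expodot\Id_n$.
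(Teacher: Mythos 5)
Your proposal is correct and follows essentially the same route as the paper: both arguments compute the blocks of each side from the sparse supports of $A$ and $\expodot Y$ via \eqref{SymProdInf} and then reduce the block identity to the factorization \eqref{Afactor} of Lemma \ref{AZproperties}(d). The only difference is presentational — you work with general strip level $p$ and explicit block indices, whereas the paper phrases the same computation through the finite truncations $B_k$, $\Phi_k$ and their row/column strips.
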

\begin{proof}
Based on \eqref{SymProdInf}, $B:=A\odot \expodot\Id_n\in\Mat\p{K}$ is defined recursively by
\vspace*{2pt}\[
B_1=A_1,  \qquad
B_k= \p{
\begin{tabular}{ccc}
$\binom{k}{k-1}A_1\odot \Id_n^{\odot {k-1}}$ & \\
\cline{2-3} $\binom{k}{k-2}A_2\odot \Id_n^{\odot k-2}$  & \multicolumn{2}{|c|}{}\\
\vdots  & \multicolumn{2}{|c|}{\phantom{a}$B_{k-1}$\phantom{a}} \\
$\binom{k}{0} A_k$  & \multicolumn{2}{|c|}{} \\\cline{2-3}
\end{tabular}
} , \quad k\ge 2.
\]
Let \[
\Phi_1=Y_1,\qquad \Phi_k =  \p{
\begin{tabular}{ccc}
$Z_{k,k}$ & \\
\cline{2-3} $Z_{k-1,k}$  & \multicolumn{2}{|c|}{}\\
\vdots  & \multicolumn{2}{|c|}{$\phantom{a}\Phi_{k-1}\phantom{a}$} \\
$Z_{1,k}$  & \multicolumn{2}{|c|}{} \\\cline{2-3}
\end{tabular}
}, \quad k\ge 2,\]
be the matrix formed by the first $k$ row and column blocks in $\expodot Y$.
Let $M_{r}$ be the block row $r$ of $B$, $A_k:= \p{A_{1,k},A_{1,k-1}\dots,A_{1,1}}$ the first $k$ blocks in $A$
and $Z^k:=\p{Z_{k,k},Z_{k-1,k}\dots,Z_{1,k}}^T$  the first block column in $\Phi_k$.
Given $s\ge 1$, block ${}_{1,s}$ in $A\expodot Y$ equals
$ A_s Z^s = \sum_{j=1}^s A_j Z_{j,s} ,$
hence for every $r=1,\dots,k$ block $_{r,k}$ in $\p{A\expodot Y} \odot \expodot Y$ can be rewritten, in virtue of \eqref{Afactor} with $p=1$, $s=k$ and $i=r-1$, as $ \sum_{t=1}^{k-r+1} \binom{t+r-1}{r-1} \p{A_t\odot \Id_n^{\odot r-1}} Z_{t+r-1,k}=M_r Z^k$.
\end{proof}

\subsection{Application to power series} \label{powerseriessection}
Since polynomials and power series split into homogeneous components, Example \ref{examplesexp}(\textbf{3}) implies:
\begin{lemma} \label{expformalseries}
\begin{enumerate}
\item[\phantom{hola}]
\item Let $F\in K\qu{\qu{\bm{x}}}$, $\bm{x}=\p{x_1,\dots,x_n}$, be a formal series.
Then there exists a set of row blocks $M_F^{1,i}\in\Mat_{m,n}^{1,i}\p{K}$, $i\ge 0$ such that $F$ admits the expression $F\p{\bm{x}} = M_F \expodot X$, where
\[
M_F := \p{\begin{tabular}{ccc|c}
$\cdots$  & $M_F^{1,2}$ &  $M_F^{1,1}$ & $M_F^{1,0}$ \\\hline
$\cdots$  & $0$ &  $0$ & $0$
\end{tabular}}\in\Mat^{1,n}\p{K}, \qquad X := \p{\begin{tabular}{c|c}
  $0$     & $\bm{x}$ \\ \hline
  $0$ & $0$
\end{tabular}}.
\]
\item If $F=F_1\times \dots \times F_m$ is a \emph{vector} power series, adequate $M_F^{1,i}\in\Mat_{m,n}^{1,i}\p{K}$ render
\[
\fbox{$F\p{\bm{x}} = M_F \expodot X$}\quad \mbox{ where }
M_F := \p{\begin{tabular}{ccc|c}
$\cdots$  & $M_F^{1,2}$ &  $M_F^{1,1}$ & $M_F^{1,0}$ \\\hline
$\cdots$  & $0$ &  $0$ & $0$
\end{tabular}}\in\Mat^{m,n}.
\]
\end{enumerate}
Following Definition \ref{defexp}, write $F\p{\bm{x}} = M_F \expodot \bm{x}$ if it poses no clarity issue. \qed
\end{lemma}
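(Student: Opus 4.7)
The strategy is to evaluate $\expodot X$ directly via Example \ref{examplesexp}(\textbf{3}) and then read off what the row blocks $M_F^{1,i}$ must be so that $M_F\expodot X$ reproduces $F(\bm{x})$ as a formal series. By the cited example applied with $k=1$, the matrix $\expodot X$ vanishes outside a single column; its block at row-level $j$ is precisely $\tfrac{1}{j!}\bm{x}^{\odot j}$, the column of lexicographically ordered symmetric monomials of degree $j$ in $x_1,\dots,x_n$ in the sense of Notation \ref{notalex}(\textbf{4}).

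Next, split $F$ into its homogeneous components, $F(\bm{x})=\sum_{j\ge 0}F_j(\bm{x})$. Each $F_j$ is a $K$-linear combination of the entries of $\bm{x}^{\odot j}$, so there is a unique row vector $M_F^{1,j}\in\Mat^{1,j}_{1,n}\p{K}$ satisfying $F_j(\bm{x})=\tfrac{1}{j!}M_F^{1,j}\bm{x}^{\odot j}$; its entries are simply the partial derivatives of $F$ at the origin, rescaled so as to absorb the $\tfrac{1}{j!}$ factor, which is completely in line with Notation \ref{notalex}(\textbf{4}). A direct application of \eqref{SymProdInf} to $A=M_F$ and $B=\expodot X$ shows that the only possibly nonzero block of the product sits at position ${}_{1,0}$, since $M_F$ is supported in its top row and $\expodot X$ in its rightmost column; and that block equals $\sum_{j\ge 0} M_F^{1,j}\cdot \tfrac{1}{j!}\bm{x}^{\odot j} = \sum_{j\ge 0} F_j(\bm{x}) = F(\bm{x})$, as desired. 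Uniqueness of the $M_F^{1,j}$ is inherited from uniqueness of Taylor coefficients.

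For part (2), the construction is linear in the codomain: if $F=(F_1,\ldots,F_m)^T$ is a vector power series, stack the $m$ row vectors produced above (one per component $F_\ell$) as the rows of $M_F^{1,j}\in\Mat_{m,n}^{1,j}\p{K}$. The same computation of $M_F\expodot X$ then produces the column $F(\bm{x})$ entry-by-entry. The only mild subtlety worth noting is that the single nonzero output block is a priori an infinite formal sum; but since each fixed total degree in $\bm{x}$ receives a contribution from exactly one summand, the expression is unambiguous as an element of $K[[\bm{x}]]^m$, and no convergence issues arise. This is where I expect the most care to be needed, although it is more of a bookkeeping matter than a genuine obstacle.
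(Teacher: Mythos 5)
Your argument is correct and takes essentially the same route the paper intends: the paper states this lemma as an immediate consequence of Example \ref{examplesexp}(\textbf{3}) and the decomposition of a power series into homogeneous components, which is exactly your construction of the $M_F^{1,j}$ from the degree-$j$ parts of $F$. One small repair: the product $M_F\expodot X$ is the \emph{ordinary} block-matrix product of $M_F$ with the matrix $\expodot X$ (row blocks times column blocks, as in $A\expodot Y$ in Proposition \ref{nonlinearprove2} and in Lemma \ref{lemaboutseries}), not the symmetric product \eqref{SymProdInf} you cite --- the computation you actually perform, namely $\sum_{j\ge 0}M_F^{1,j}\cdot\frac{1}{j!}\bm{x}^{\odot j}=F\p{\bm{x}}$ landing in block ${}_{1,0}$, is the correct ordinary one, so only the citation needs changing.
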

From the above Lemma it follows that every formal power series can be expressed in the form
$M_F\expodot{\bm{x}}$, where abusing notation once again
\begin{equation} \label{Fjet}
M_F = J_F+M_F^{1,0} := \p{\begin{tabular}{ccc|c}
$\cdots$  & $M_F^{1,2}$ &  $M_F^{1,1}$ & $0$ \\\hline
$\cdots$  & $0$ &  $0$ & $0$
\end{tabular}}
+  \p{\begin{tabular}{c|c}
   $0$ & $M_F^{1,0}$ \\\hline
   $0$ & $0$
\end{tabular}}.
\end{equation}
In other words: $M_F$ equals the sum of two matrices with easily computable $\odot$-exponentials: one following
Example \ref{examplesexp} (\textbf{3}) (same as \bm{x}) and one following \eqref{Ydisplay}.
 Lemma \ref{expprops0}, Lemma \ref{Matisadomain} and the universal property of finite products $\odot$ yield the following two results; see \cite{bekbaevSept2009,bekbaevOct2010} for a proof.
\begin{lemma} \label{lemaboutseries} Given power series $F=\p{F_1,\dots,F_m}$ and $G=\p{G_1,\dots,G_p}$ in $n$ and $m$ indeterminates, respectively,
\begin{enumerate}
\item If $n=m$, $M_{FG} = M_F\odot M_G$.
\item $\expodot F\p{\bm{x}} = \p{\expodot M_F} \p{\expodot \bm{x}}$.
\item $M_{G\circ F} = M_{G}\expodot M_F$.
\item $\expodot\p{M_G\expodot M_F} = \p{\expodot M_G} \p{\expodot M_F}$. $\hfill\square$
\end{enumerate}
\end{lemma}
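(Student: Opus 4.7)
My plan is to establish the four identities in the order (1), (2), (3), (4). The key recurring observation is that $\expodot\bm{x}$ is an infinite column whose non-zero entries are the linearly independent monomials $\tfrac{1}{s!}\bm{x}^{\odot s}$; consequently, any $A\in\Mat^{*,n}$ whose non-zero blocks lie in row~$1$ is uniquely determined by the column vector $A\expodot\bm{x}$ via extraction of the coefficient of $\bm{x}^{\odot k}$. This bookkeeping will let me reduce every matrix identity below to an identity among formal power series.

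For (1), I will expand both sides block by block. The shape of $M_F$ and $M_G$ together with \eqref{SymProdInf} immediately gives $(M_F\odot M_G)^{1,k}=\sum_{i+j=k}\binom{k}{j}M_F^{1,i}\odot M_G^{1,j}$ and zero elsewhere. On the series side, the degree-$k$ component of $FG$ is $\sum_{i+j=k}F_iG_j$; using $F_i(\bm{x})=\tfrac{1}{i!}M_F^{1,i}\bm{x}^{\odot i}$ (and similarly for $G$) together with Proposition \ref{odotproperties}.8 I rewrite each summand as $\tfrac{1}{i!j!}(M_F^{1,i}\odot M_G^{1,j})\bm{x}^{\odot k}$, and the identity $\tfrac{k!}{i!j!}=\binom{k}{j}$ closes the case.

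For (2), I will compute the $(r,0)$-block of both sides. The left-hand side has block $(r,0)$ equal to $\tfrac{1}{r!}F^{\odot r}$ by Example \ref{examplesexp}.3 applied to the vector $F(\bm{x})$. Iterating Proposition \ref{odotproperties}.8 on $F=\sum_i\tfrac{1}{i!}M_F^{1,i}\bm{x}^{\odot i}$ yields
\[
F^{\odot r}=\sum_s\left(\sum_{i_1+\cdots+i_r=s}\tfrac{1}{i_1!\cdots i_r!}M_F^{1,i_1}\odots M_F^{1,i_r}\right)\bm{x}^{\odot s}.
\]
The right-hand side has $(r,0)$-block $\sum_s\tfrac{1}{s!}Z_{r,s}(M_F)\bm{x}^{\odot s}$ by ordinary block multiplication, and Lemma \ref{AZproperties}.3 provides the closed form for $Z_{r,s}$. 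Identifying the two expressions reduces to a combinatorial check matching the ordered multinomial weighting against $c^s_{i_1,\ldots,i_r}$, which one settles by grouping ordered tuples by unordered type $(k_1^{n_1},\ldots,k_m^{n_m})$ and invoking \eqref{ck}.

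Parts (3) and (4) then fall out. For (3), $(G\circ F)(\bm{x})=G(F(\bm{x}))=M_G\expodot F(\bm{x})=M_G(\expodot M_F)(\expodot\bm{x})=(M_G\expodot M_F)\expodot\bm{x}$ by (2); comparison with $M_{G\circ F}\expodot\bm{x}$ and coefficient extraction yield the desired equality. For (4), apply (2) to the series $G\circ F$ and, using (3), rewrite the left-hand side as $\expodot(M_G\expodot M_F)\expodot\bm{x}$; on the other hand, applying (2) to $G$ at $F(\bm{x})$ and then to $F$ gives $(\expodot M_G)(\expodot M_F)(\expodot\bm{x})$. Cancelling the common factor $\expodot\bm{x}$ via coefficient extraction produces (4). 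The main obstacle is the combinatorial bookkeeping in (2): reconciling the $c^s_{\mathbf{i}}$ weighting from Lemma \ref{AZproperties}.3 with the naive multinomial weighting that comes from expanding $F^{\odot r}$. Once that identity is pinned down, the remaining parts are purely formal manipulations driven by the uniqueness principle of the first paragraph.
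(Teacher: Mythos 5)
Your argument is correct and is essentially the one the paper intends: the paper does not actually print a proof of this lemma (it defers to the Bekbaev references, remarking only that Lemma \ref{expprops0}, Lemma \ref{Matisadomain} and the universal property suffice), and the route it alludes to --- extracting coefficients against the linearly independent monomials $\bm{x}^{\odot s}$, using the last item of Proposition \ref{odotproperties} to convert $\p{M_F^{1,i}\bm{x}^{\odot i}}\odot\p{M_G^{1,j}\bm{x}^{\odot j}}$ into $\p{M_F^{1,i}\odot M_G^{1,j}}\bm{x}^{\odot i+j}$, and reconciling the multinomial weights with $c^s_{\mathbf{i}}$ via \eqref{ck} --- is exactly what you carry out; your key identity $\frac{1}{r!}\cdot\frac{r!}{n_1!\cdots n_m!}\cdot\frac{1}{i_1!\cdots i_r!}=\frac{1}{s!}\,c^s_{i_1,\dots,i_r}$ does check out. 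Two small points to tidy: by \eqref{SymProdInf} the non-zero blocks of $M_F\odot M_G$ sit in block row $2$ rather than row $1$ (harmless once the values of $FG$ are read in $\Sym^2$ of the target space), and Lemma \ref{AZproperties} is stated only for strips whose ${}_{1,0}$ block vanishes, so in part (b) you should either split $M_F=J_F+M_F^{1,0}$ as in \eqref{Fjet} and invoke Lemma \ref{expprops0}(a), or observe that your combinatorial identity extends verbatim to indices $i_j=0$.
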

\begin{corollary}\label{corchange}
Let $F\p{\bm{x}}=\p{F_1,\dots,F_p}\p{x_1,\dots,x_n}$ be a vector power series, $\bm{y} = F\p{\bm{x}}$ and
$$ \left.\begin{array}{ccl}\bm{X}&=&R_{x,X} \expodot \bm{x}\in K^N,\\ \bm{Y}&=&S_{y,Y} \expodot \bm{y}\in K^P,\end{array}\right\} \qquad R_{x,X}\in\Mat^{N,n}\p{K}, \quad S_{y,Y}\in\Mat^{P,p}\p{K}, $$
be independent and dependent variable changes, which we assume admit formal inverse changes
$$ \left.\begin{array}{lll}\bm{x}&=&R_{X,x} \expodot \bm{X},\\ \bm{y}&=&S_{Y,y} \expodot \bm{Y},\end{array}\right\} \qquad R_{X,x}\in\Mat^{n,N}\p{K}, \quad S_{Y,y}\in\Mat^{p,P}\p{K}. $$ Then, the expression of $F$ in the new variables, written in that
in those old, is
{\small\begin{equation} \label{newold} \fbox{$M_{F,X,Y} =  S_{y,Y}\p{\expodot M_{F,x,y}} \expodot R_{X,x} $} \quad \mbox{ where } \bm{y}=F\p{\bm{x}}=M_{F,x,y}\expodot  \bm{x}.
\; \square \end{equation}}
\end{corollary}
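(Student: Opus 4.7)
The plan is to realise $\bm Y$ as a function of $\bm X$ through the three-step composition $\bm X\mapsto\bm x\mapsto\bm y\mapsto\bm Y$, whose formal maps $R:\bm X\mapsto\bm x$, $F:\bm x\mapsto\bm y$ and $S:\bm y\mapsto\bm Y$ have series matrices $R_{X,x}$, $M_{F,x,y}$ and $S_{y,Y}$ respectively. By definition, $M_{F,X,Y}$ is the series matrix of the composite $S\circ F\circ R$, so the whole task reduces to computing $M_{S\circ F\circ R}$ and massaging it into the advertised shape.

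First I would invoke Lemma \ref{lemaboutseries}(3) twice, once on the inner composition $F\circ R$ and once on $S\circ(F\circ R)$, obtaining
\[
M_{F,X,Y}\;=\;M_{S\circ F\circ R}\;=\;S_{y,Y}\expodot\bigl(M_{F,x,y}\expodot R_{X,x}\bigr).
\]
Then I would apply Lemma \ref{lemaboutseries}(4) to the argument of the outer $\expodot$, turning the $\odot$-exponential of a $\odot$-product of matrices into the ordinary matrix product of their $\odot$-exponentials:
\[
\expodot\bigl(M_{F,x,y}\expodot R_{X,x}\bigr)\;=\;(\expodot M_{F,x,y})(\expodot R_{X,x}).
\]
Substituting yields $M_{F,X,Y}=S_{y,Y}(\expodot M_{F,x,y})(\expodot R_{X,x})$ exactly as claimed. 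An equivalent, slightly more hands-on route is to iterate Lemma \ref{lemaboutseries}(2) instead: successively rewrite $\expodot\bm x=(\expodot R_{X,x})(\expodot\bm X)$ and $\expodot\bm y=(\expodot M_{F,x,y})(\expodot\bm x)$, left-multiply by $S_{y,Y}$, extract the block-row-one component to recover $\bm Y$, and read off $M_{F,X,Y}$ from the resulting expression by uniqueness of the series coefficients.

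The only point demanding care is bookkeeping the two distinct associative products in play: the ordinary matrix product in $\Mat$ (which appears inside each $\expodot$) versus the $\odot$-product (which appears between series matrices in Lemma \ref{lemaboutseries}). The bridge is precisely Lemma \ref{lemaboutseries}(4), itself a consequence of Lemma \ref{expprops0}(3); no further combinatorics beyond the machinery already developed in Sections \ref{symsection} and \ref{syminfsection} is needed, which is why the statement, while striking, is essentially a formal corollary of the preceding framework.
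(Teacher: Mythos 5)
Your proposal is correct and follows exactly the route the paper intends: the corollary is presented as an immediate consequence of Lemma \ref{lemaboutseries}, and your two applications of item (3) followed by item (4) (or, equivalently, the iterated use of item (2) you sketch) reproduce the chain $M_{F,X,Y}=M_{S\circ F\circ R}=S_{y,Y}\expodot\p{M_{F,x,y}\expodot R_{X,x}}=S_{y,Y}\p{\expodot M_{F,x,y}}\expodot R_{X,x}$. Your closing remark about distinguishing the ordinary block-matrix product from the $\odot$-product is exactly the right point of care, and nothing further is needed.
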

As was hinted at in \cite[p. 5]{bekbaevOct2010}, this result shows interesting light on the way finite-level transformations translate
into transformations on $\Mat^{n,m}$. For a linear transformation of the independent variables $\bm{x} = B\bm{X}$, however,
basic properties of $\expodot$ are as useful as \eqref{newold} in proving $F$ admits the following expression in the new variable $\bm X$
(mind the effect of the first matrix, equal to zero save for block ${}_{1,1}$ which is equal to $\Id_n$, on the second one):
\begin{equation}\label{Fxnew}
 F\p{\bm{X}} = \Id_n\p{\expodot{M_F}}\p{\expodot{B}} \bm X = \p{J_F+M^{0,0}_F}\p{\expodot{B}} \expodot \bm X .
\end{equation}
This will be applied to first integrals of dynamical systems in Section \ref{firstintegrals}.

\section{Higher-order variational equations}
\subsection{Structure}\label{structure}

Let us step back to what was said in \S \ref{intro1}.
For each particular integral curve $\bm\psi$ of a given
complex autonomous dynamical system \eqref{DS},
the \textbf{variational system} $\mathrm{VE}^k_{\psi}$ for \eqref{DS} along
$\bm\psi$ is satisfied by partial derivatives
$\frac{\partial^k}{\partial\bm{z}^k}\varphi\p{t,\bm \psi}$.
Case $k=1$ being trivial as shown in \eqref{VE}, the situation of interest is $k>1$.
We will eschew formulations such as those in \cite[eq (14)]{MoRaSi07a} in favour of the explicit formulae
\eqref{PhiLVE}, \eqref{ALVE}, \eqref{LVE} and \eqref{VEkredux} using Linear Algebra to express multilinear maps.
\begin{notation}\label{notaAY}
$K:=\nc\p{\bm\psi}$,
$A_i:=X^{\p{i}}\!\p{\bm{\psi}}$, $ Y_i := \mathrm{lex}\p{\frac{\partial^i }{\partial \bm{z}^i}\varphi{\p{t,\bm{\psi}}}}$
and, per Lemma \ref{AZproperties},
\begin{equation} \label{PhiLVE}
\Phi_1=Y_1,\qquad \Phi_k =  \p{
\begin{tabular}{ccc}
$Z_{k,k}$ & \\
\cline{2-3} $Z_{k-1,k}$  & \multicolumn{2}{|c|}{}\\
\vdots  & \multicolumn{2}{|c|}{$\phantom{a}\Phi_{k-1}\phantom{a}$} \\
$Z_{1,k}$  & \multicolumn{2}{|c|}{} \\\cline{2-3}
\end{tabular}
}, \quad k\ge 2,\end{equation}
formed by the first $k$ block rows and columns in $\Phi=\expodot Y$. Define $A,Y\in \Mat\p{K}$ as in Lemma \ref{AZproperties} with the above $A_i$, $Y_i$ as blocks.
Denote the canonical basis on $K^n$ (meaning the set of columns of $\Id_n$) by $\brr{\bm{e}_1,\dots,\bm{e}_n}$.
\end{notation}

\begin{lemma} \label{Zaux}
In the hypotheses described in Notation \ref{notaAY}, let $k\ge 1$ and $m=1,\dots,n$. Then,
\begin{eqnarray}
Y_k &=& \sum_{j=1}^n \frac{\partial Y_{k-1}}{\partial z_j} \p{\bm{e}_j^T \odot \Id_n^{\odot k-1}}, \label{Zaux1} 
\end{eqnarray}\begin{eqnarray}
\frac{\partial}{\partial z_m} Y_{k}&=& Y_{k+1} \p{\bm{e}_m \odot \Id_n^{\odot k}}, \label{Zaux2} \\
\frac{\partial}{\partial z_m} Z_{r,k} &=& Z_{r,k+1}\p{\bm{e}_m \odot \Id_n^{\odot k}} - \p{ Y_1\bm{e}_m\odot \Id_n^{\odot r-1} }Z_{r-1,k}, \quad r\le k, \label{Zaux3}\\
\frac{\partial}{\partial z_m} A_k &=& A_{k+1} \p{Y_1\bm{e}_m\odot \Id_n^{\odot k}} \label{Zaux4}.
\end{eqnarray}
\end{lemma}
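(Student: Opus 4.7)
My plan is to prove the four identities in the order \eqref{Zaux2}, \eqref{Zaux1}, \eqref{Zaux4}, \eqref{Zaux3}. The statement \eqref{Zaux2} is the foundational encoding lemma: left-multiplication by $\bm e_m \odot \Id_n^{\odot k}$ realises, in the lex-sifted format, the operation ``insert $\bm e_m$ into one slot of a symmetric $(k+1)$-linear form''. Both \eqref{Zaux1} and \eqref{Zaux4} will drop out essentially as corollaries, while \eqref{Zaux3} is the combinatorial heart of the lemma and requires induction on $r$.

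For \eqref{Zaux2}, I would check the identity column by column. Evaluating on the basis vector $\bm e^{\odot \mb p}\in \Sym^k \nc^n$ with $\av{\mb p}=k$, Definition \ref{defSymProd} applied with $A=\bm e_m \in \Mat_n^{1,0}$ and $B=\Id_n^{\odot k}$ collapses (only the multi-index $\mb p=\mb 0$ contributes to the sum) to $\p{\bm e_m \odot \Id_n^{\odot k}}\bm e^{\odot \mb p}=\bm e_m \odot \bm e^{\odot \mb p}=\bm e^{\odot \mb p + \mb 1_m}$, where $\mb 1_m$ is the $m$-th standard multi-index. Hence $Y_{k+1}\p{\bm e_m \odot \Id_n^{\odot k}}\bm e^{\odot \mb p}$ is the column of $Y_{k+1}$ labelled by $\mb p + \mb 1_m$, whose $s$-th entry is $\partial^{k+1}\varphi_s / \partial \bm z^{\mb p + \mb 1_m}(\bm\psi) = \partial_{z_m}\qu{\partial^k \varphi_s / \partial \bm z^{\mb p}}(\bm\psi)$, matching the corresponding entry of $(\partial Y_k / \partial z_m)\bm e^{\odot \mb p}$. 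To get \eqref{Zaux1}, substitute \eqref{Zaux2} at level $k-1$ into the right-hand side and pull $Y_k$ out of the sum:
\[
\sum_{j=1}^n \frac{\partial Y_{k-1}}{\partial z_j}\p{\bm e_j^T \odot \Id_n^{\odot k-1}} = Y_k \sum_{j=1}^n \p{\bm e_j \odot \Id_n^{\odot k-1}}\p{\bm e_j^T \odot \Id_n^{\odot k-1}} = Y_k,
\]
the last step being Lemma \ref{eLemma}(2), whose right-hand side is forced to be $\Id_n^{\odot k}$ by dimension count.

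For \eqref{Zaux4}, I would view $A_k = X^{(k)}\p{\varphi\p{t,\bm z}}$ as a function of the initial condition $\bm z$ and specialise at the $\bm z$ for which $\varphi\p{t,\bm z}=\bm\psi$. The chain rule gives
\[
\frac{\partial A_k}{\partial z_m} = \sum_{j=1}^n \frac{\partial X^{(k)}}{\partial z_j}\p{\bm\psi}\cdot \frac{\partial \psi_j}{\partial z_m},
\]
and the same column-by-column calculation that proved \eqref{Zaux2}, with $X$ in place of $\varphi$, yields $\partial X^{(k)}/\partial z_j(\bm\psi) = A_{k+1}\p{\bm e_j \odot \Id_n^{\odot k}}$, while $\partial \psi_j/\partial z_m = \bm e_j^T Y_1 \bm e_m$ is a scalar. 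Bilinearity of $\odot$ together with $\sum_j (\bm e_j^T Y_1 \bm e_m)\bm e_j = Y_1 \bm e_m$ then collapses the sum into $A_{k+1}\p{Y_1\bm e_m \odot \Id_n^{\odot k}}$.

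Finally, \eqref{Zaux3} I would prove by induction on $r$. The base $r=1$ is \eqref{Zaux2} itself, since $Z_{1,k}=Y_k$ and $Z_{0,k}=0$ for $k\ge 1$ (row block $0$ of $\expodot Y$ only receives a contribution from $Y^{\odot 0}=1_{\Mat}$, which is supported at block $(0,0)$). For the inductive step, differentiate the recursion $Z_{r,k} = \frac{1}{r}\sum_{j=1}^{k-r+1}\binom{k}{j}Y_j \odot Z_{r-1,k-j}$ from Lemma \ref{AZproperties}(1) using the Leibniz rule (Lemma \ref{leibnizlem}), then substitute \eqref{Zaux2} for $\partial Y_j/\partial z_m$ and the inductive hypothesis for $\partial Z_{r-1,k-j}/\partial z_m$. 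Expanding $Z_{r,k+1}\p{\bm e_m \odot \Id_n^{\odot k}}$ by the same recursion and reassembling via the factorisation \eqref{Afactor} of Lemma \ref{AZproperties}(4) should reproduce the positive term on the right, while the surplus generated by the inductive correction at successive levels telescopes to the negative term $-\p{Y_1\bm e_m \odot \Id_n^{\odot r-1}} Z_{r-1,k}$. The main obstacle will be the bookkeeping: matching the binomial coefficients $\binom{k+1}{j}$ arising on the right against the $\binom{k}{j}$ produced by differentiation, and handling the index shift induced by $\partial Y_j/\partial z_m = Y_{j+1}\p{\bm e_m \odot \Id_n^{\odot j}}$. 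The factorisation \eqref{Afactor} is the cleanest tool here, since it absorbs the shuffle combinatorics into a single identity without having to manipulate the permutation sums behind Lemma \ref{prodsmoregenerallem} by hand.
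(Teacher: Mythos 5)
Your proposal is correct and follows essentially the same route as the paper's own proof: \eqref{Zaux2} by evaluation on basis columns via \eqref{vId}/\eqref{SymProd}, \eqref{Zaux1} as a corollary through Lemma \ref{eLemma} (where your dimension-count reading, giving $\Id_n^{\odot k}$, is the right one), \eqref{Zaux4} by the chain rule collapsed with bilinearity, and \eqref{Zaux3} by differentiating the recursion \eqref{expZrs} and reassembling with \eqref{Afactor} --- the paper likewise leaves that last bookkeeping to the reader, so your sketch matches its level of detail.
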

\begin{proof}
We will explicitly prove \eqref{Zaux2}; \eqref{Zaux1} is an immediate consequence of Lemma \ref{eLemma} and \eqref{Zaux2}. We have, for every given ordered multi-index $\mb{i}=\p{i_1,\dots,i_k}$,
\[ \frac{\partial Y_{k}}{\partial z_{m}}\bm{e}_{i_1}\odots\bm{e}_{i_k} =\frac{\partial}{\partial z_{m}} \frac{\partial^k \varphi}{\partial z_{i_1}\partial z_{i_2}\cdots\partial z_{i_k}}  = Y_{k+1} \bm{e}_m\odot \bm{e}_{i_1}\odots\bm{e}_{i_k}.
\]
The right-hand side in \eqref{Zaux2} is equal to this expression, too, by simple application of the same principle as in \eqref{vId}.
The effect of $ \frac{\partial}{\partial \bm{z}}$ on $A_j$ is clear as well: chain rule implies
\[ \frac{\partial A_{k}}{\partial z_{m}}\bm{e}_{i_1}\odots\bm{e}_{i_k}=
\sum_{r=1}^n \frac{\partial^{k+1} X}{\partial z_{i_1}\partial z_{i_2}\cdots\partial z_{i_k}\partial z_r}
 \frac{\partial\varphi_r}{\partial z_{m}} =\sum_{r=1}^n A_{k+1}\p{\bm{e}_r\odot \bm{e}^{\odot\mb{i}}} \frac{\partial\varphi_r}{\partial z_{m}} , \]
which is equal, again using \eqref{vId} in order to obtain {\small$\frac{\partial\varphi_r}{\partial z_{m}}\bm{e}_r\odot \bm{e}^{\odot\mb{i}}=\p{\frac{\partial\varphi_r}{\partial z_{m}}
\bm{e}_r\odot \Id_n^{\odot r}}\bm{e}^{\odot\mb{i}}$},  to
{\small\[
A_{k+1}\sum_{r=1}^n \p{\bm{e}_r\odot \bm{e}^{\odot\mb{i}}} \frac{\partial\varphi_r}{\partial z_{m}}=A_{k+1}\sum_{r=1}^n \p{\frac{\partial\varphi_r}{\partial z_{m}}\bm{e}_r\odot \bm{e}^{\odot\mb{i}}} =A_{k+1}\sum_{r=1}^n \p{\frac{\partial\varphi_r}{\partial z_{m}}
\bm{e}_r\odot \Id_n^{\odot r}}\bm{e}^{\odot\mb{i}} ,
\]}hence to $  A_{k+1}  \p{ \frac{\partial \varphi}{\partial z_m} \odot \Id_n^{\odot k} }\bm{e}^{\odot\mb{i}} = A_{k+1}  \p{ Y_1\bm{e}_m \odot \Id_n^{\odot k} }\bm{e}^{\odot\mb{i}}.$

The reader can check \eqref{Zaux3} and \eqref{Zaux4}. For instance the latter is obtained by induction over $k$ using derivation of \eqref{expZrs},  \eqref{Zaux2} and Leibniz rule \eqref{leibniz}, as well as
application of \eqref{Afactor} with $i=1$, $t=r-2$, $s=k$, $A_t=Y_1\bm{e}_m\odot \Id_n^{\odot r-2}$  and $p=r-1$, use of \eqref{XYvB} and the fact $Z_{r-2,r-2}=Y_1^{\odot r-2}$.
\end{proof}

\begin{proposition} [First explicit version of non-linearised $\mathrm{VE}_{\psi}^k$]\label{nonlinearprove2}
In the above hypotheses,
\begin{equation} \label{VEphi} \tag{$\mathrm{VE}_\psi$}
\dot Y = A\expodot Y;
\end{equation}
in other words, for every $k\ge 1$,
\begin{equation} \label{VEkredux} \tag{$\mathrm{VE}^k_\psi$}
\frac{d}{dt}Y_k = \sum_{j=1}^k A_j Z_{k,j}= \sum_{j=1}^k A_j \sum_{i_1+\dots+i_j = k} c^k_{i_1,\dots,i_j}Y_{i_1}\odot Y_{i_2} \odot \cdots \odot Y_{i_j}.
\end{equation}
\end{proposition}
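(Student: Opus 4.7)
The plan is to derive the block identity \eqref{VEkredux} directly from the ODE $\dot\varphi=X(\varphi)$ by matching Taylor coefficients in two expansions around $\bm\psi$, and then to repackage it as the single infinite-matrix equation $\dot Y=A\expodot Y$ using the block structure of $\expodot Y$ worked out in Lemma~\ref{AZproperties}. The argument is essentially a multivariate Faà di Bruno formula expressed in the language of $\odot$.

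First I would write the formal Taylor expansion of the flow about $\bm\psi$ in the increment $\bm w=\bm z-\bm\psi$, namely $\varphi(t,\bm\psi+\bm w)=\bm\psi(t)+\sum_{k\ge 1}\tfrac{1}{k!}Y_k\,\bm w^{\odot k}$, alongside the analogous expansion $X(\bm\psi+\bm\xi)=X(\bm\psi)+\sum_{j\ge 1}\tfrac{1}{j!}A_j\,\bm\xi^{\odot j}$ of the vector field. Differentiating the former in $t$ and applying $\dot\varphi=X(\varphi)$, then substituting $\bm\xi=\sum_{l\ge 1}\tfrac{1}{l!}Y_l\bm w^{\odot l}$ into the latter, the coefficient of $\bm w^{\odot k}$ on both sides gives
\[
\frac{1}{k!}\dot Y_k \;=\; \sum_{j=1}^k\frac{1}{j!}\,A_j\sum_{\substack{l_1,\dots,l_j\ge 1\\ l_1+\dots+l_j=k}}\frac{1}{l_1!\cdots l_j!}\,Y_{l_1}\odot\cdots\odot Y_{l_j},
\]
where the multilinear/multinomial expansion of $\bm\xi^{\odot j}$ is justified by the algebraic properties of $\odot$ collected in Proposition~\ref{odotproperties}.

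The core combinatorial step is to recognise the inner sum. Grouping ordered tuples $(l_1,\dots,l_j)$ by their underlying weakly increasing multiset $(i_1\le\dots\le i_j)$: each multiset is reached by exactly $j!/(n_1!\cdots n_m!)$ orderings (with $n_r$ the multiplicities of the distinct values), and the weight $\tfrac{k!}{j!\,l_1!\cdots l_j!}$ collapses to precisely $c^k_{i_1,\dots,i_j}$ as defined in \eqref{ck}. Hence the inner sum equals $\tfrac{j!}{k!}\sum_{i_1+\dots+i_j=k} c^k_{i_1,\dots,i_j}Y_{i_1}\odot\cdots\odot Y_{i_j}$, which by \eqref{Zrkdef} is $\tfrac{j!}{k!}Z_{j,k}$. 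Clearing the $\tfrac{1}{k!}$ from both sides produces \eqref{VEkredux}.

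Finally, to obtain the compact \eqref{VEphi}, assemble the $Y_k$ into the infinite matrix $Y$ of \eqref{Ydisplay}. Because $A$ has non-zero entries only in its row-$1$ blocks $A_{1,j}=A_j$ and $(\expodot Y)_{j,k}=Z_{j,k}$ by Lemma~\ref{AZproperties}(3), the $(1,k)$-block of the ordinary matrix product $A\,\expodot Y$ equals $\sum_{j=1}^k A_jZ_{j,k}$, which coincides with $\dot Y_k$ by the block formula just derived. The obstacle I expect to be fiddly is the combinatorial bookkeeping in the collapse from ordered tuples to multisets: one must track the multiplicities $n_r$ correctly so that the resulting coefficient genuinely agrees with the explicit $c^k_{i_1,\dots,i_j}$ formula \eqref{ck}. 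Once that identification is made, everything else is a direct application of the block-exponential calculus assembled in Section~\ref{syminfsection}.
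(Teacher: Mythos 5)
Your proposal is correct, but it reaches \eqref{VEkredux} by a genuinely different route from the paper. The paper argues by induction on $k$: it writes $Y_k=\sum_j \frac{\partial Y_{k-1}}{\partial z_j}\p{\bm{e}_j^T\odot\Id_n^{\odot k-1}}$ via \eqref{Zaux1}, interchanges $\frac{d}{dt}$ with $\frac{\partial}{\partial z_m}$ by Schwarz's lemma, applies the induction hypothesis together with the derivative identities \eqref{Zaux2}--\eqref{Zaux4}, and finishes by showing that the resulting three sums satisfy $S_1-S_3=A_kZ_{k,k}$, the one summand missing from $S_2=\sum_{p=1}^{k-1}A_pZ_{p,k}$. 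You instead compose the two formal Taylor expansions $\varphi\p{t,\bm\psi+\bm w}=\bm\psi+\sum_k\frac1{k!}Y_k\bm w^{\odot k}$ and $X\p{\bm\psi+\bm\xi}=X\p{\bm\psi}+\sum_j\frac1{j!}A_j\bm\xi^{\odot j}$ inside $\dot\varphi=X\p{\varphi}$ and match homogeneous coefficients --- a multivariate Fa\`a di Bruno argument. Your combinatorial collapse is right: the ordered-tuple weight $\frac{k!}{j!\,l_1!\cdots l_j!}$ times the $j!/\p{n_1!\cdots n_m!}$ orderings of each multiset reproduces exactly \eqref{ck}, so the inner sum is $\frac{j!}{k!}Z_{j,k}$ by \eqref{Zrkdef}, and the final block assembly agrees with the computation of $\p{A\expodot Y}_{1,s}=\sum_jA_jZ_{j,s}$ already used in Lemma~\ref{expfactor}. (Both proofs confirm that the $Z_{k,j}$ in the displayed statement should read $Z_{j,k}$.) Two small points you lean on implicitly and should make explicit: the factorisation $\p{Y_{l_1}\bm w^{\odot l_1}}\odot\cdots\odot\p{Y_{l_j}\bm w^{\odot l_j}}=\p{Y_{l_1}\odots Y_{l_j}}\bm w^{\odot k}$ is iterated Proposition~\ref{odotproperties}(h), and ``matching coefficients of $\bm w^{\odot k}$'' requires that a symmetric map on $\Sym^kK^n$ is determined by its values on the diagonal elements $\bm w^{\odot k}$ (polarisation, harmless in characteristic zero). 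What each approach buys: yours obtains all orders at once, makes transparent that $Z_{j,k}$ is a partial Bell polynomial in the $Y_i$ (explaining the appearance of $c^k_{\mathbf i}$ and the Stirling/Bell counts of Notation~\ref{notationc}), and needs none of Lemma~\ref{Zaux}; the paper's induction stays entirely inside the $\odot$-matrix calculus and en route establishes the identities \eqref{Zaux1}--\eqref{Zaux4}, which carry independent structural information about the interaction of $\partial/\partial z_m$ with the blocks $Y_k$, $Z_{r,k}$ and $A_k$.
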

\begin{proof}
Assume  $\mathrm{VE}_{\psi}^{k-1}$ can be expressed as
$ \frac{d}{dt}Y_{k-1} = \sum_{j=1}^{k-1} A_j Z_{j,k-1}$.
The entries\break in $Y_{k-1}$ are partial derivatives of $\varphi\p{t,\bm{z}}$, hence $\frac{d}{dt}\equiv \frac{\partial}{\partial t}$ on every entry, Schwarz\break Lemma applies and derivation of \eqref{Zaux1} yields
$\frac{d}{dt} Y_k =
= \sum_{m=1}^n \frac{\partial }{\partial z_m}\break\frac{\partial Y_{k-1}}{\partial t} \p{\bm{e}_m^T \odot \Id_n^{\odot k-1}}
; $ induction hypothesis and Leibniz rule render $\frac{d}{dt} Y_k=
\sum_{m=1}^n \break\qu{\sum_{p=1}^{k-1}\frac{\partial A_p}{\partial z_m} Z_{p,k-1}+ A_p\frac{\partial Z_{p,k-1}}{\partial z_m} } \p{\bm{e}_m^T \odot \Id_n^{\odot k-1}}; $
equations \eqref{Zaux3} and \eqref{Zaux4} imply this is equal to $S_1+S_2-S_3$, where
\begin{eqnarray*}
S_1&=&\sum_{m=1}^n \sum_{p=1}^{k-1} A_{p+1} \p{Y_1\bm{e}_m\odot \Id_n^{\odot p}} Z_{p,k-1} \p{\bm{e}_m^T \odot \Id_n^{\odot k-1}}, \\
S_2 &=& \sum_{m=1}^n \sum_{p=1}^{k-1}A_pZ_{p,k}\p{\bm{e}_m \odot \Id_n^{\odot k-1}} \p{\bm{e}_m^T \odot \Id_n^{\odot k-1}}, \\
S_3 &=& \sum_{m=1}^n \sum_{p=1}^{k-1} A_p \p{Y_1\bm{e}_m\odot \Id_n^{\odot p-1} }Z_{p-1,k-1}  \p{\bm{e}_m^T \odot \Id_n^{\odot k-1}}.
\end{eqnarray*}
Sum swapping in $\sum_m\sum_p$ and Lemma \ref{eLemma} (b)  imply
$S_2  =  \sum_{p=1}^{k-1} A_pZ_{p,k}$;  \eqref{YZout}, Lemma \ref{eLemma} (b) and Proposition \ref{odotproperties} (b) render $S_1 - S_3$ equal to
missing summand $A_k Z_{k,k}$ in $S_2$.
\end{proof}
\begin{corollary}[Second explicit version of non-linearised $\mathrm{VE}_{\psi}^k$]
Let $\varphi\p{t,\bm{\psi}} = \p{\varphi_i}_i$ be the flow of \eqref{DS} along $\bm\psi$. Given
$k\ge 1$, $\mb{N}\in \nn^k$,$r=1,\dots,k$ and $0\le m_1\le \dots\le m_r$, define:
\begin{enumerate}
\item the set $S^{\mb m}$ of $\sigma\in\fraks_k$ such that $\sigma\p{1,\dots,k}=\p{\mathbf{i}_1,\dots,\mathbf{i}_r}$, $\mathbf{i}_j=\p{i_{j,1},\dots ,i_{j,m_j}}$ and $i_{j,s}<i_{j,s+1}$ for every $j,s$ and $m_j=m_{j+1}$ implies $i_{j,1}<i_{j+1,1}$;
\item the index-ordered partitions of $\mb{N}$ in  subsets of sizes
$0\le m_1\le \dots\le m_r$:
\begin{equation}\label{Ims} I^{\mb m}_{\mb N}:=\brr{\p{N_{\sigma\p{1}},\dots,N_{\sigma\p{k}}}=\p{\mathbf{K}_1,\dots,\mathbf{K}_r}: \mb{K}_i\in \nn^{m_i}\mbox{ and } \sigma\in S^{\mb m}}; \end{equation}
\item and, using abridged notation $\sum_{j_1,\dots,j_r}$ to denote $\sum_{j_1=1}^n\sum_{j_2=1}^n\cdots\sum_{j_r=1}^n$,
\[
T^{m_1,\dots,m_r}_{N_1,\dots,N_k}:=\sum_{\p{\mathbf{K}_1,\dots,\mathbf{K}_r}\in I^{\mb m}_{\mb N}}\sum_{j_1,\dots,j_r} \frac{\partial^r X_i}{\partial z_{j_1}\cdots\partial z_{j_r}}
\frac{\partial^{m_1}\varphi_{j_1}}{\partial \bm{z}_{\mathbf{K}_1}}\cdots
\frac{\partial^{m_r}\varphi_{j_r}}{\partial \bm{z}_{\mathbf{K}_r}}.
\]
\end{enumerate}
Then, the order-$k$ variational equation along $\bm\psi = \left\{\bm{\psi}\left(t\right)\right\}$ is summarised in the following:
\[
\frac{d}{dt}\frac{\partial^k \varphi_i}{\partial z_{N_1}\partial z_{N_2}\cdots \partial z_{N_k}} = \sum_{r=1}^k  \sum_{m_1, \dots , m_r} T^{m_1,\dots,m_r}_{N_1,\dots,N_k}, \quad
i,N_1,\dots,N_k\in \brr{1,\dots,n},
\]
indices in $ \sum_{m_1, \dots , m_r}$ constrained by $m_1 \le \dots \le m_r$ and $\sum_i^r m_i=k$. $\hfill\square$
\end{corollary}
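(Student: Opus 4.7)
My plan is to obtain the corollary as the componentwise expansion of \eqref{VEkredux}. First I would relabel the outer summation index $j$ in \eqref{VEkredux} as $r$ and the composition indices $(i_1,\ldots,i_j)$ as $(m_1,\ldots,m_r)$, so that \eqref{VEkredux} takes the shape $\frac{d}{dt}Y_k = \sum_{r=1}^k A_r\sum_{\mathbf{m}}c^k_{\mathbf{m}}\, Y_{m_1}\odot\cdots\odot Y_{m_r}$, with $\mathbf{m}=(m_1,\ldots,m_r)$ ranging over ordered tuples $m_1\leq\cdots\leq m_r$ such that $\av{\mathbf{m}}=k$. I would then extract the scalar entry at row $i$ and column $\bm{e}_{N_1}\odot\cdots\odot\bm{e}_{N_k}$ on both sides; the left-hand side is immediately $\frac{d}{dt}\frac{\partial^k\varphi_i}{\partial z_{N_1}\cdots\partial z_{N_k}}$.

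For the right-hand side, the key computation is $(Y_{m_1}\odot\cdots\odot Y_{m_r})(\bm{e}_{N_1}\odot\cdots\odot\bm{e}_{N_k})$. Iterating Lemma \ref{factorABvwlem}, or equivalently applying the $r$-fold generalization of \eqref{htildefunc} followed by \eqref{anyprodsgeneral}, this evaluates to a sum, prefactored by $1/\binom{k}{m_1,\ldots,m_r}$, ranging over shuffle permutations $\sigma\in S_{m_1,\ldots,m_r}$; each summand is a symmetric product of evaluations $Y_{m_s}(\bm{e}_{N_{\sigma(\cdots)}}\odot\cdots\odot\bm{e}_{N_{\sigma(\cdots)}})$ on consecutive blocks of the shuffled sequence. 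Reading the entry at row $(j_1,\ldots,j_r)$ and applying the row-$i$ entries $\frac{\partial^r X_i}{\partial z_{j_1}\cdots\partial z_{j_r}}$ of $A_r$ produces, after summation over $(j_1,\ldots,j_r)\in\{1,\ldots,n\}^r$, a sum of products $\frac{\partial^r X_i}{\partial z_{j_1}\cdots\partial z_{j_r}}\prod_s \frac{\partial^{m_s}\varphi_{j_s}}{\partial\bm{z}_{\mathbf{K}_s}}$ indexed by shuffles and by the $j_s$.

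Next I would reconcile the combinatorial weights. A fixed index-ordered partition $(\mathbf{K}_1,\ldots,\mathbf{K}_r)\in I^{\mathbf{m}}_{\mathbf{N}}$ arises from several shuffles in $S_{m_1,\ldots,m_r}$ that differ only by a reordering of blocks of equal size; their number is exactly $n_1!\cdots n_\ell!$, the same repeat-multiplicity correction appearing in \eqref{ck}. Combining the factor $1/\binom{k}{\mathbf{m}}$, this multiplicity, and the coefficient $c^k_{\mathbf{m}}$, everything collapses to weight $1$ per element of $I^{\mathbf{m}}_{\mathbf{N}}$. The outer sums over $r$, over block-size tuples $\mathbf{m}$, and over internal indices $(j_1,\ldots,j_r)$ then assemble into precisely the $T^{m_1,\ldots,m_r}_{N_1,\ldots,N_k}$ structure of the statement.

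The main obstacle is this bookkeeping: one must verify that the restrictive conditions defining $S^{\mathbf{m}}$---strict order within each block and tie-breaking among equal-size blocks---are exactly what collapses the shuffle-sum over $S_{m_1,\ldots,m_r}$ to $I^{\mathbf{m}}_{\mathbf{N}}$ with multiplicity matching the $n_s!$-correction in $c^k$. Once that cancellation is granted, the rest is routine symmetric-algebra manipulation and the identity is a direct componentwise translation of \eqref{VEkredux}.
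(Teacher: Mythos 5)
Your proposal is correct and is exactly the route the paper intends: the corollary is stated with no written proof because it is the entrywise (row $i$, column $\bm{e}_{N_1}\odot\cdots\odot\bm{e}_{N_k}$) translation of \eqref{VEkredux}, with Lemma \ref{factorABvwlem} iterated to expand $Y_{m_1}\odot\cdots\odot Y_{m_r}$ and the multiplicity $n_1!\cdots n_\ell!$ from equal-size blocks cancelling against the discrepancy between $c^k_{\mathbf{m}}$ and $\binom{k}{m_1,\dots,m_r}$ as you describe. Your bookkeeping of $S^{\mathbf{m}}$ versus the shuffle set $S_{m_1,\dots,m_r}$ is the right verification and matches the paper's own remark that $c^k_{i_1,\dots,i_j}=\# I^{i_1,\dots,i_j}_{1,\dots,k}$.
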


\eqref{VEkredux} in Proposition \ref{nonlinearprove2} effectively settles the entries for lower $n$ rows in $A_{\mathrm{LVE}_{\psi}^k} $ and the first $n$ columns in $\Phi_k$. Let us now find the rest of the matrices.

\begin{proposition}[Explicit version of $\mathrm{LVE}_{\psi}^k$] \label{LVEprop} Still following Notation \ref{notaAY}, the infinite system
\begin{equation} \label{LVE}\tag{$\mathrm{LVE}_{\psi}$}
\fbox{$\dot X = A_{\mathrm{LVE}_{\psi}}X,$} \qquad\qquad A_{\mathrm{LVE}_{\psi}}:=A\odot \expodot \Id_n,
\end{equation}
has $\Phi:=\expodot Y$ as a solution matrix. Hence, for every $k\ge 1$,
\begin{enumerate}
\item the lower-triangular recursive $D_{n,k}\times D_{n,k}$ form for $\mathrm{LVE}_{\psi}^k$ is $\dot{Y} = A_{\mathrm{LVE}_{\psi}^k}Y$, its system matrix being obtained from the first $k$ row and column blocks of $A_{\mathrm{LVE}_{\psi}}$:
\begin{equation} \label{ALVE}  A_{\mathrm{LVE}_{\psi}^k} =  \p{
\begin{tabular}{ccc}
$\binom{k}{k-1}A_1\odot \Id_n^{\odot {k-1}}$ & \\
\cline{2-3} $\binom{k}{k-2}A_2\odot \Id_n^{\odot k-2}$  & \multicolumn{2}{|c|}{}\\
\vdots  & \multicolumn{2}{|c|}{$A_{\mathrm{LVE}_{\psi}^{k-1}}$} \\
$\binom{k}{0} A_k$  & \multicolumn{2}{|c|}{} \\\cline{2-3}
\end{tabular}
} ,\end{equation}
\item and the principal fundamental matrix for $\mathrm{LVE}_{\psi}^k$ is $\Phi_k$ from $\expodot Y$ in Notation \ref{notaAY}.
\end{enumerate}
\end{proposition}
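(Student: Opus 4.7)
The plan is to derive the main identity $\dot\Phi = A_{\mathrm{LVE}_\psi}\Phi$ by chaining three results already proved, and then to read off both the block structure and the initial conditions directly from the construction of $\expodot Y$.

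First I would apply Lemma \ref{expprops0}(2) with $\partial = d/dt$, extended entrywise to $\Mat(K)$, to obtain $\dot\Phi = \dot Y \odot \expodot Y$. Proposition \ref{nonlinearprove2} identifies $\dot Y$ with $A\expodot Y$, so substitution yields $\dot\Phi = (A\expodot Y) \odot \expodot Y$. The right-hand side is exactly the left-hand side of \eqref{Aexp} in Lemma \ref{expfactor}, hence equal to $(A\odot \expodot \Id_n)\,\expodot Y = A_{\mathrm{LVE}_\psi}\Phi$. This establishes the infinite-dimensional identity of \eqref{LVE} with $\Phi$ as a solution matrix.

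For the block description of $A_{\mathrm{LVE}_\psi^k}$ in \eqref{ALVE}, I would invoke the recursive column formula already computed in the opening paragraph of the proof of Lemma \ref{expfactor}: the matrix $B$ defined there is literally $A\odot\expodot\Id_n$, and its $k$-th column block stacks the binomial-weighted entries $\binom{k}{k-j}A_j\odot\Id_n^{\odot k-j}$ above $B_{k-1}$, which is exactly \eqref{ALVE}. Because the grading by $r$ in $\Sym^r$ makes $A_{\mathrm{LVE}_\psi}$ block lower-triangular (the non-zero structure of $A$ is confined to row level $_{1,\ast}$, and $\odot$-multiplying by the diagonal $\expodot\Id_n$ preserves the triangularity), truncating at level $k$ produces a closed $D_{n,k}\times D_{n,k}$ autonomous linear system, so $\dot Y = A_{\mathrm{LVE}_\psi^k}Y$ is well defined. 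The same triangularity shows that $\Phi_k$ really is the natural restriction of $\Phi$.

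For the principal-fundamental-matrix claim I would evaluate at $t=0$ using $\varphi(0,\bm z)=\bm z$, so that $Y_1(0)=\Id_n$ and $Y_j(0)=0$ for $j\ge 2$. Plugging into \eqref{Zrkdef}, every term of $Z_{r,s}(0)$ with $r<s$ must contain some factor $Y_{i_j}(0)$ with $i_j\ge 2$ and therefore vanishes, while on the diagonal $Z_{s,s}(0)=Y_1(0)^{\odot s}=\Id_n^{\odot s}$; these diagonal blocks assemble into the identity of size $D_{n,k}\times D_{n,k}$. I do not foresee a real obstacle: all the combinatorial weight of the argument has been frontloaded into Lemma \ref{expfactor} (the commutation of $A\expodot Y$ with $\expodot Y$ through $A\odot\expodot\Id_n$) and into Proposition \ref{nonlinearprove2} (the $\expodot$-reformulation of the non-linear $\mathrm{VE}_\psi^k$). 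The only point worth checking carefully is that the formal manipulations in $\Mat(K)$ are termwise meaningful, which is guaranteed by the common lower-triangular block profile of $A_{\mathrm{LVE}_\psi}$ and $\expodot Y$, each entry of either product being a finite sum.
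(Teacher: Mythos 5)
Your proposal is correct and follows exactly the paper's own argument: the chain $\dot\Phi=\dot Y\odot\expodot Y=(A\expodot Y)\odot\expodot Y=(A\odot\expodot\Id_n)\expodot Y$ via Lemma \ref{expprops0}(b), Proposition \ref{nonlinearprove2} and Lemma \ref{expfactor} is precisely the displayed computation in the paper. Your extra remarks on the recursive block structure of $A\odot\expodot\Id_n$ and the verification that $\Phi_k(0)=\Id_{D_{n,k}}$ simply spell out what the paper compresses into ``the rest follows from Lemma \ref{AZproperties}.''
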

\begin{proof}
\eqref{Aexp} in \ref{expfactor}, \eqref{VEphi} in Proposition \ref{nonlinearprove2},
and item (b) in Lemma \ref{expprops0} imply
\[ \dot{\overbracket[0.2pt]{\expodot Y}} = \dot Y \odot \expodot Y = \p{A\expodot Y} \odot \expodot Y = \p{A\odot \expodot \Id_n}\expodot Y . \]
The rest follows from Lemma \ref{AZproperties}.
\end{proof}

\begin{example}
For instance, for $k=5$ we have
\[
A_{\mathrm{LVE}_{\psi}^5}  = \p{
\begin{array}{ccccc}
5 A_1\odot \Id_n^{\odot 4} & & & & \\
10 A_2\odot \Id_n^{\odot 3}  & 4 A_1\odot \Id_n^{\odot 3}  & & & \\
10 A_3\odot \Id_n^{\odot 2}  & 6 A_2\odot \Id_n^{\odot 2} & 3 A_1\odot \Id_n^{\odot 2} & & \\
5 A_4 \odot \Id_n & 4 A_3 \odot \Id_n & 3 A_2 \odot \Id_n & 2 A_1 \odot \Id_n & \\
A_5 & A_4 & A_3 & A_2 & A_1
 \end{array}
 },
\]
and, using any of the equivalent  \eqref{expZrs}, \eqref{Zrkdef}, the principal fundamental matrix $\Phi_5$ is
{\small  \begin{equation} \label{PhiLVE5}
 \p{
\begin{array}{ccccc}
Y_1^{\odot 5} & & & & \\
10Y_1^{\odot 3}\odot Y_2 & Y_1^{\odot 4}  & & & \\
10Y_1^{\odot 2}\odot Y_3 + 15Y_1\odot Y_2^{\odot 2}& 6 Y_1^{\odot 2}\odot Y_2 & Y_1^{\odot 3} & & \\
10 Y_2\odot Y_3 + 5Y_1\odot Y_4 & 4 Y_1\odot Y_3 + 3 Y_2 \odot Y_2 & 3 Y_1\odot Y_2 & Y_1^{\odot 2} \\
Y_5 & Y_4 & Y_3 & Y_2 & Y_1
  \end{array}
 },
\end{equation}}hence \eqref{VEkredux} for $k=5$ is the lowest row in $A_{\mathrm{LVE}_{\psi}^5}$ times the leftmost column in $\Phi_5 $.
\end{example}

\subsection{Explicit solution and monodromy matrices for $\mathrm{LVE}^k_{\psi}$}

Let $T\subseteq \mathbb{P}^1_\nc$ be the domain for time variable $t$ in \eqref{DS} and $\gamma \subset T$ a closed path based at $t_0\in T$.
Assume $k=1$. If $Y_1$ is a fundamental matrix of first-order \eqref{VE},
analytic continuation along $\gamma$ yields
$ Y_1 \p{t_0} \xrightarrow[\mathrm{cont}]{\gamma}  Y_1\p{t_0} \cdot M_{1,\gamma}, $
$M_{1,\gamma}$ being the \emph{monodromy matrix} (\cite{Zoladek}) of \eqref{VE}. Assume $Y_1:=\Phi_1$ is the principal fundamental matrix for \eqref{VE}, any other solution matrix $\Psi_1$ recovered from $\Psi_1=Y_1\Psi_1\p{t_0}$. The non-linearised second-order equation, after Proposition \ref{nonlinearprove2}, is
\begin{equation}
\label{VE2} \tag{$\mathrm{VE}^2_{\psi}$}
\dot{Y_2} = A_1 Y_2 + A_2 \cdot \mathrm{Sym}^2 \p{Y_1}.
\end{equation}
Following Proposition \ref{LVEprop}, linearised completion $\mathrm{LVE}^2_{\psi}$ has principal fundamental matrix
\[ \Phi_2 = \p{\begin{array}{cc}  Y_1^{\odot 2} & \\ Y_2 & Y_1 \end{array}}. \]
A particular solution $
 Y_2 =   Y_1 \int Y_1^{-1} A_2  \mathrm{Sym}^2\p{Y_1}
$ of \eqref{VE2} is found via variation of constants,
which becomes a contour integral whenever time is taken along path $\gamma$:
\begin{equation}
\label{ve2mono}
Y_2 \xrightarrow[\mathrm{cont}]{\gamma} Q_{1,2,\gamma} :=  M_{1,\gamma} \int_{\gamma} Y_1^{-1} A_2  \mathrm{Sym}^2\p{Y_1} ,
\end{equation}
hence
\[
\Id_n =  \Phi_2\p{t_0} \xrightarrow[\mathrm{cont}]{\gamma}  \p{\begin{array}{cc} Y_1^{\odot 2}\p{t_0} M_{1,\gamma}^{\odot 2} & 0  \\  Y_1\p{t_0} Q_{1,2,\gamma}  & Y_1\p{t_0} M_{1,\gamma} \end{array}}
=
\Phi_2\p{t_0} \p{\begin{array}{cc} M_{1,\gamma}^{\odot 2}  &  0 \\ Q_{1,2,\gamma} & M_{1,\gamma} \end{array}},
\]
and $\qu{\gamma}\mapsto M_{i,\gamma}$ is a group morphism $\pi_1\p{T,t_0}\to \gl_{D_{n,i}}\p{\nc}$, hence for \emph{any} fundamental matrix \[
\Psi_2\p{t_0} \xrightarrow[\mathrm{cont}]{\gamma} \Psi_2\p{t_0} \p{\begin{array}{cc} M_{1,\gamma}^{\odot 2}  &  0 \\ Q_{1,2,\gamma} & M_{1,\gamma} \end{array}}; \]
therefore the monodromy of $\mathrm{LVE}^2_{\psi}$ along $\gamma$ will be
\begin{equation}
\label{lve2mono}
M_{2,\gamma} := \p{\begin{array}{cc} M_{1,\gamma} ^{\odot 2}  &  0 \\ Q_{1,2,\gamma} & M_{1,\gamma} \end{array}} = \p{\begin{array}{cc} M_{1,\gamma}^{\odot 2}  &  0 \\ M_{1,\gamma}\int_{\gamma} {Y_1^{-1} A_2Y_1^{\odot 2} }  & M_{1,\gamma} \end{array}}.
\end{equation}
Assume $k=3$. The principal fundamental matrix of $\mathrm{LVE}^3_{\psi}$ consists of the lower right $3\times 3$-block of \eqref{PhiLVE5}
and all solution matrices can be expressed
$\Psi_3 =\Phi_3 C$. Let us now find a solution to \begin{equation}
\label{VE3}
\dot{Y_3} = A_1 Y_3 + 3 A_2 Y_1\odot Y_2+A_3 \mathrm{Sym}^3 \p{Y_1},
\end{equation}
Same as before, variation of constants on \eqref{VE3}
yields another contour integral if $\tau\in \gamma$:
\begin{equation}
\label{ve3mono}
Y_3 \xrightarrow[\mathrm{cont}]{\gamma} Q_{1,3,\gamma} :=  M_{1,\gamma} \int_{\gamma} Y_1^{-1}  \p{3 A_2 Y_1\odot Y_2+A_3 \mathrm{Sym}^3 \p{Y_1} }d\tau .
\end{equation}
The remaining term of our monodromy matrix is a direct consequence of analytic continuation:
\[ 0 = 3 Y_1\p{t_0}\odot Y_2\p{t_0}  \xrightarrow[\mathrm{cont}]{\gamma}  3 M_{1,\gamma} \odot Q_{1,2,\gamma} = 3 M_{1,\gamma} \odot \p{ M_{1,\gamma}\int_{\gamma} Y_1^{-1}A_2Y_1^{\odot 2} }.\]
Our monodromy matrix is
\begin{equation}
\label{lve3mono}
M_{3,\gamma} \!:=\!
   \p{\!\begin{array}{ccc} M_{1,\gamma}^{\odot 3}  &  &   \\
3M_{1,\gamma} \odot  Q_{1,2,\gamma} & M_{1,\gamma}^{\odot 2} &  \\
 Q_{1,3,\gamma} & Q_{1,2,\gamma} & M_{1,\gamma} \end{array}}
= \p{\begin{tabular}{cc}
$M_{1,\gamma}^{\odot 3}$ &   \\ \cline{2-2}
 $3M_{1,\gamma} \odot  Q_{1,2,\gamma}$ & \multicolumn{1}{|c|}{\multirow{2}{*}{$M_{2,\gamma}$}} \\
$Q_{1,3,\gamma}$ &   \multicolumn{1}{|c|}{}  \\ \cline{2-2}
\end{tabular}}.
\end{equation}

The pattern is clear now. Assume we have computed solutions $Y_1,\dots,Y_{k-1}$ and performed continuation up to $k-1$:
{\small\[ \Phi_{k-1}  \xrightarrow[\mathrm{cont}]{\gamma} \Phi_{k-1} M_{k-1,\gamma}  := \Phi_{k-1} \p{ \begin{array}{ccccc}
 Q_{k-1,k-1,\gamma} & & &  & \\
 Q_{k-2,k-1,\gamma} &  Q_{k-2,k-2,\gamma} & &  &  \\
 \vdots &  \vdots & \ddots & &   \\
 Q_{2,k-1,\gamma} &  Q_{2,k-2,\gamma} & \cdots & Q_{2,2,\gamma} &  \\
 Q_{1,k-1,\gamma} &  Q_{1,k-2,\gamma} & \cdots & Q_{1,2,\gamma} & Q_{1,1,\gamma}
 \end{array} } , \] }where
\begin{equation} \label{theQs} Q_{r,s,\gamma} := \sum_{i_1+\dots+i_r=s} c^{s}_{i_1,\dots,i_r}Q_{1,i_1,\gamma}\odot Q_{1,i_2,\gamma} \odot \cdots \odot Q_{1,i_r,\gamma} , \qquad s\ge r \ge 2.
\end{equation}
Then, the fundamental matrix for $\p{\mathrm{LVE}^k_{\psi}}$ will be expressed in the form \eqref{PhiLVE}, its lower left block $Y_k$
being computable in terms of the blocks $Z_{2,k},\dots, Z_{k,k}$ above it (all of which involve $Y_1,\dots,Y_{k-1}$) in virtue of \eqref{VEkredux}: $Y_k=Y_1V_k$, which is continued into
$ Q_{1,k,\gamma} := M_{1,\gamma} \int_\gamma V_k,$ where $\dot{\overbracket[0.2pt]{V_k}}=
Y_1^{-1}\sum_{j=2}^k A_j Z_{j,k}$.
Upper terms $Z_{2,k},\dots, Z_{k,k}$ are continued into $Q_{2,k},\dots, Q_{k,k}$ as in \eqref{theQs}, $s$ replaced by $k$.
It is clear we have proven the following:
\begin{lemma}
The monodromy matrix $\Phi_k$ of $\mathrm{LVE}_\psi^k$ along closed path $\gamma$ is composed by the first $k$ row and column blocks in
\begin{equation}\label{Qs}
\expodot Q_\gamma := \expodot \p{\begin{tabular}{ccc|c}
$\cdots$ &   $0$ &  $0$ & $0$ \\
$\cdots$  & $Q_{1,2,\gamma}$ &  $Q_{1,1,\gamma}$ & $0$ \\\hline
$\cdots$  & $0$ &  $0$ & $0$
\end{tabular}}, \end{equation}
where $Q_{1,1,\gamma}:= M_{1,\gamma}$, blocks above the bottom row are computed according to \eqref{theQs} and
\begin{equation}\label{Qs2}
 Q_{1,s,\gamma} := M_{1,\gamma } \int_{\gamma} Y_1^{-1}\sum_{j=2}^s A_j Z_{j,s}, \quad 2\le s\le k . \qquad \square
\end{equation}
\end{lemma}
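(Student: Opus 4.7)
The plan is induction on $k$, with base cases $k=1,2,3$ already handled by the explicit variation-of-constants computations in \eqref{ve2mono}--\eqref{lve3mono}. For the inductive step, the recursive block form \eqref{PhiLVE} exhibits $\Phi_{k-1}$ as the lower-right $(k-1)\times(k-1)$ principal sub-block of $\Phi_k$; the induction hypothesis then identifies the lower-right $(k-1)\times(k-1)$ sub-block of $M_{k,\gamma}$ with $M_{k-1,\gamma}$, matching the truncation of $\expodot Q_\gamma$ to its first $k-1$ row and column blocks. It therefore remains to transport the newly added leftmost block column, whose top-to-bottom entries are $Z_{k,k}, Z_{k-1,k},\ldots, Z_{2,k}, Y_k$, along $\gamma$.

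For the bottom entry, Proposition \ref{nonlinearprove2} gives the linear recursion $\dot Y_k = A_1 Y_k + \sum_{j=2}^k A_j Z_{j,k}$ with $Y_k(t_0)=0$. Writing $Y_k = Y_1 V_k$ and applying variation of constants yields
\[
Y_k \xrightarrow[\mathrm{cont}]{\gamma} M_{1,\gamma}\int_\gamma Y_1^{-1} \sum_{j=2}^k A_j Z_{j,k}\,d\tau = Q_{1,k,\gamma},
\]
exactly as declared in \eqref{Qs2}. For the upper entries $Z_{r,k}$ with $r\ge 2$, the expansion \eqref{Zrkdef} writes them as symmetric products of $Y_{i_j}$ with each $i_j\le k-1$; by the induction hypothesis each $Y_{i_j}$ is transported to a linear combination whose base-point contribution in the leftmost block column is $Q_{1,i_j,\gamma}$. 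Multilinearity of $\odot$ then assembles the leftmost block-column contribution of the continued $Z_{r,k}$ into $\sum_{i_1+\dots+i_r=k} c^{k}_{i_1,\dots,i_r}\,Q_{1,i_1,\gamma}\odot\cdots\odot Q_{1,i_r,\gamma}=Q_{r,k,\gamma}$ of \eqref{theQs}, while the remaining multilinear contributions are absorbed into the lower-right sub-block by the induction hypothesis.

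The main bookkeeping obstacle is checking that the multilinear redistribution just described deposits the right terms in the right column; this reduces to applying Lemma \ref{AZproperties}(3) to both $Y$ and $Q_\gamma$. A conceptually cleaner alternative altogether sidesteps induction: since $\Phi = \expodot Y$ is a fundamental matrix of \eqref{LVE} with $\Phi(t_0)=\mathrm{Id}$, one has $M_\gamma = \Phi(t_0\cdot\gamma)$; the multilinear character of $\expodot$ ensures continuation commutes with it, so $\Phi(t_0\cdot\gamma) = \expodot\bigl(Y(t_0\cdot\gamma)\bigr)$; and the block-by-block computation above shows that $Y(t_0\cdot\gamma)$ coincides with the non-zero block row of $Q_\gamma$ in \eqref{Qs}, giving $M_\gamma = \expodot Q_\gamma$ and, by truncation to the first $k$ block rows and columns, the asserted form of $M_{k,\gamma}$.
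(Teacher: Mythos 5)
Your proposal is correct and follows essentially the same route as the paper: explicit base cases via variation of constants, then an inductive step in which the new bottom block $Y_k$ is continued into $Q_{1,k,\gamma}$ by variation of constants and the upper blocks $Z_{r,k}$ are continued into the symmetric products \eqref{theQs} because analytic continuation commutes with the multilinear expressions \eqref{Zrkdef}. Your closing observation that one may instead evaluate $M_\gamma=\Phi^\gamma(t_0)=\expodot\bigl(Y^\gamma(t_0)\bigr)=\expodot Q_\gamma$ directly is a tidier packaging of exactly these facts (it is what makes Lemma \ref{AZproperties}(3) applicable to $Q_\gamma$ verbatim), not a genuinely different argument.
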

Hence it is clear the computation of a monodromy matrix follows a block order such as the one below, blocks in the bottom row requiring quadratures:
\begin{equation} \label{ordermat}
\begin{tabular}{ccccc}
$\ddots$ & & & & \\
\cline{2-2}$\cdots$ & \multicolumn{1}{|c|}{$7$} & & & \\
\cline{2-3}$\cdots$ & \multicolumn{1}{|c|}{$8$} & \multicolumn{1}{|c|}{$4$} & & \\
\cline{2-4}$\cdots$ & \multicolumn{1}{|c|}{$9$} & \multicolumn{1}{|c|}{$5$} & \multicolumn{1}{|c|}{$2$} & \\
\cline{2-5}$\cdots$ & \multicolumn{1}{|c|}{$10$} & \multicolumn{1}{|c|}{$6$} & \multicolumn{1}{|c|}{$3$} & \multicolumn{1}{|c|}{$1$}\\ \cline{2-5}
\end{tabular}
\end{equation}
Computing the monodromy matrix is concomitant to computing the fundamental matrix, i.e. said bottom-row quadratures must be both indefinite (yielding terms $Z_{1,s}$ to be used in the computation of $Z_{j,s}$ in \eqref{Qs2}) and contour integrals. See
\S \ref{samexample} for an example.

We assume there are two generators $\qu{\gamma}, \qu{\widetilde{\gamma}}\in \pi_1\p{T;t_0}$, yielding two different matrices:
\[ \gamma \longleftrightarrow Q_{\gamma}, \qquad \widetilde{\gamma} \longleftrightarrow Q_{\widetilde{\gamma}}.
\]
Commutativity of monodromy matrices now admits simple, compact formulation:
\begin{proposition}\label{commmon} Two monodromy matrices $M_{k,\gamma}$ and $M_{k,\tilde\gamma}$ for $\mathrm{LVE}_\psi^k$ commute if, and only if, their
previous blocks $M_{k-1,\gamma},M_{k-1,\tilde\gamma}$ commute and the additional properties hold
\[  \sum_{j=r}^k Q_{r,j,\gamma}Q_{j,k,\tilde\gamma} = \sum_{j=r}^k Q_{r,j,\tilde\gamma}Q_{j,k,\gamma}, \qquad \mbox{for every } r=1,\dots,k-1, \]
matrices defined as in \eqref{theQs} and \eqref{Qs2}. $\hfill\square$
\end{proposition}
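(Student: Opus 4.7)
The plan is to proceed by induction on $k$, reducing commutativity of the full $k$-th monodromy matrices to a system of block identities that splits into the inductive hypothesis plus the $k-1$ new conditions stated.

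First I would fix the block structure precisely. According to \eqref{PhiLVE} and the discussion around \eqref{Qs}--\eqref{Qs2}, $M_{k,\gamma}$ is the $k\times k$ block matrix whose block at row $R$ and column $J$ (reading top-to-bottom, left-to-right) equals $Q_{k+1-R,\,k+1-J,\gamma}$ when $J\le R$ and vanishes otherwise. Relabelling $r=k+1-R$ and $j=k+1-J$, the nonzero blocks are precisely those $Q_{r,j,\gamma}$ with $1\le r\le j\le k$; in particular the diagonal blocks are $Q_{r,r,\gamma}=M_{1,\gamma}^{\odot r}$ (the only partition with $i_1+\dots+i_r=r$ has all $i_s=1$ in \eqref{theQs}, with coefficient $c^r_{1,\dots,1}=1$ by \eqref{ck}). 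Standard block multiplication then gives
\[
\p{M_{k,\gamma}M_{k,\tilde\gamma}}_{r,j}=\sum_{i=r}^{j}Q_{r,i,\gamma}\,Q_{i,j,\tilde\gamma},\qquad 1\le r\le j\le k,
\]
and an analogous expression for $M_{k,\tilde\gamma}M_{k,\gamma}$. Commutativity is therefore equivalent to the family of identities
\[
(\star)_{r,j}:\quad \sum_{i=r}^{j}Q_{r,i,\gamma}Q_{i,j,\tilde\gamma}=\sum_{i=r}^{j}Q_{r,i,\tilde\gamma}Q_{i,j,\gamma}, \qquad 1\le r\le j\le k.
\]

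The next step is to sort these identities by the column index $j$. When $j\le k-1$ only $Q$-blocks with both indices at most $k-1$ appear, so $(\star)_{r,j}$ with $j\le k-1$ is precisely the commutativity of $M_{k-1,\gamma}$ and $M_{k-1,\tilde\gamma}$ expressed via the same block formula one level down, i.e. the inductive hypothesis. For $j=k$ and $r=k$ the identity collapses to $M_{1,\gamma}^{\odot k}M_{1,\tilde\gamma}^{\odot k}=M_{1,\tilde\gamma}^{\odot k}M_{1,\gamma}^{\odot k}$, which is automatic once $M_{1,\gamma}M_{1,\tilde\gamma}=M_{1,\tilde\gamma}M_{1,\gamma}$ holds; the latter is exactly $(\star)_{1,1}$, hence already contained in the inductive hypothesis, and is propagated to $\odot$-powers by item 7 of Proposition \ref{odotproperties} (equivalently, functoriality of $\Sym^k$). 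What remains are the $k-1$ identities $(\star)_{r,k}$ with $r=1,\dots,k-1$, which are exactly the additional conditions displayed in the statement.

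The main obstacle is essentially bookkeeping: the physical block arrangement of $M_{k,\gamma}$ is lower triangular while the natural labelling coming from \eqref{theQs}--\eqref{Qs2} runs in the opposite direction, so care is needed when identifying which entry of which block product corresponds to which $(\star)_{r,j}$. Once that bijection is fixed and the corner block $Q_{k,k,\gamma}=M_{1,\gamma}^{\odot k}$ is recognised as automatically commuting with its $\tilde\gamma$-counterpart, both implications follow at once from the decomposition above, with no computation beyond expanding block products.
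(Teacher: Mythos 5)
Your proposal is correct and is exactly the argument the paper leaves implicit: the paper states this proposition with no written proof, treating it as immediate from the lower-triangular block structure of $M_{k,\gamma}$ established in the preceding lemma, and your block-multiplication decomposition into the identities $(\star)_{r,j}$ (with $j\le k-1$ giving the order-$(k-1)$ commutativity and $j=k$, $r<k$ giving the displayed extra conditions) is that argument spelled out. Your observation that the corner identity $(\star)_{k,k}$ is automatic from $[M_{1,\gamma},M_{1,\tilde\gamma}]=0$ via multiplicativity of $\odot$-powers is precisely the detail needed to justify why the stated conditions stop at $r=k-1$.
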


\begin{remarks}
\begin{enumerate}
\item[\phantom{hola}]
\item The monodromy group of a linear system is contained in its
differential Galois group (e.g. \cite{SingerVanderput}). The motivation for the above Lemma and Proposition is to capitalise on this fact.
 This may in turn be a step towards future constructive incarnations of the Morales-Ramis-Sim\'o Theorem \ref{moralesramissimo}. The main obstacle  implementing Proposition \ref{commmon}, symbolico-computational issues aside,
 is the incertitude on whether $M_{k,\gamma}$ and $M_{k,\tilde\gamma}$ belong to the Zariski identity component
 $\mathrm{Gal}\p{\mathrm{LVE}^k_\psi}^\circ$; a sufficient condition for arbitrary order is fulfilment at order $1$,
$M_{1,\gamma},M_{1,\tilde\gamma}\in\mathrm{Gal}\p{\mathrm{VE}_\psi}^\circ$, itself an open problem in general.
\item All disquisitions and results on the variational jet in \cite{martsim1,martsim2} are referred to
the lower $n$-row strip for commutators of these monodromies. More specifically:
\begin{itemize}
\item what is called \emph{jet} therein is lower strip $Y$ in principal fundamental matrix $\Phi=\expodot Y$ for infinite system \eqref{LVE}, and we will use this terminology in the following Section;
\item morphism properties  imply monodromy matrices along path commutators equal matrix commutators:
\allowbreak $M_{k,\gamma_2^{-1}\gamma_1^{-1}\gamma_2\gamma_1} = M_{k,\gamma_2}^{-1}M_{k,\gamma_1}^{-1}M_{k,\gamma_2}M_{k,\gamma_1}$;
\item hence, ``jet commutation'' in \cite{martsim1,martsim2} amounts to lower strip
$Q_{k,\gamma_2^{-1}\gamma_1^{-1}\gamma_2\gamma_1}$ (that is $Y$ after passage along $\gamma_2^{-1}\gamma_1^{-1}\gamma_2\gamma_1$) equalling $\Id_n$.
\end{itemize}
Although \cite{martsim1,martsim2} clearly benefit from the use of automatic differentiation techniques (see also \cite{MakinoBerz}), it may be argued that
expressions such as those in \eqref{LVE} provide for a fuller control of the general structure of the whole variational complex
when it comes to symbolic computations, as well as a further check aid for the aforementioned techniques. See \S \ref{61} for an example. See also \cite{Simon2013} for a recent application to the Friedmann-Robertson-Walker Hamiltonian arising from Cosmology.
\end{enumerate}
\end{remarks}

\section{First integrals and higher-order variational equations}\label{firstintegrals}

Let $F:U\subseteq \nc^n\to \nc^n$ be a holomorphic function and $\bm{\psi}:I\subset \nc \to U$. Firstly, the flow $\varphi\p{t,\bm{z}}$ of $X$ admits, at least formally, Taylor expansion \eqref{taylorflow} along $\bm{\psi}$ which is expressible as
\begin{equation} \label{taylorphi}
\varphi\p{t,\bm{\psi}+\bm{\xi}} = \bm{\psi} + Y_1\bm{\xi}  + \frac12 Y_2 \bm{\xi}^{\odot 2} + \dots = \bm{\psi}+J_\psi \expodot\bm{\xi},
\end{equation}
where $J_\psi$ is the jet for flow $\varphi\p{t,\cdot}$ along $\bm\psi$, displayed as $Y$ in \eqref{Ydisplay} and defined in Notation \ref{notaAY} -- that is, the matrix
whose $\odot$-exponential $\Phi$ is a solution matrix for \eqref{LVE}.
Secondly, the Taylor series of $F$ along $\bm{\psi}$
can be written, cfr. \cite[Lemma 2]{ABSW} and Notation \ref{notalex},
\begin{equation} \label{taylorabsw} F\p{\bm{y}+\bm{\psi}}=F\left(\bm{\psi}\right)+\sum^{\infty}_{m=1} \frac{1}{m!}\left\langle F^{(m)}\left(\bm{\psi}\right)\,,\, \Sym^m \bm{y}\right\rangle.
\end{equation}
Basic scrutiny of Example \ref{examplesexp}(\textbf{3}), Lemma \ref{expformalseries} and \eqref{Fjet} trivially implies  \eqref{taylorabsw} can
be expressed as
$F\p{\bm{y}+\bm{\psi}} = M^\psi_F \expodot \bm{y}$, where
\[
M^\psi_F = J^\psi_F+F^{\p{0}}\!\p{\bm{\psi}} := \p{\begin{tabular}{ccc|c}
$\cdots$ &   $0$ &  $0$ & $0$ \\
$\cdots$  & $F^{\p{2}}\!\p{\bm{\psi}}$ &  $F^{\p{1}}\!\p{\bm{\psi}}$ & $F^{\p{0}}\!\p{\bm{\psi}}$ \\\hline
$\cdots$  & $0$ &  $0$ & $0$
\end{tabular}}\in\Mat^{1,n}\p{K},
\]
i.e. $J_F^\psi$ is the jet or horizontal strip of lex-sifted partial derivatives of $F$ at $\bm{\psi}$.

\begin{definition}
We call \begin{equation} \label{LVEstar}\tag{$\mathrm{LVE}_{\psi}^{\star}$}
\fbox{$\dot X = A_{\mathrm{LVE}^{\star}_{\psi}}X,$} \qquad\qquad A_{\mathrm{LVE}_{\psi}^\star}:=-\p{A\odot \expodot \Id_n}^T,
\end{equation}
the \textbf{adjoint} or \textbf{dual} variational system of \eqref{DS} along $\psi$. Same as in \eqref{LVE} and all throughout \ref{structure}, consideration of finite subsystems, namely the lowest $D_{n,k}\times D_{n,k}$ block, leads to specific notation $\p{\mathrm{LVE}_{\psi}^k}^{\star}$.
\end{definition}
The following is  immediate upon derivation of equation $\Phi_k\Phi_k^{-1}=\Id_{D_{n,k}}$:
\begin{lemma} $\p{\Phi^{-1}_k}^T$ is a principal fundamental matrix of $\p{\mathrm{LVE}_{\psi}^k}^\star$, $k\ge 1$.

Hence, $\lim_k\p{\Phi_k^{-1}}^T$, is a solution to \eqref{LVEstar}. $\hfill\square$
\end{lemma}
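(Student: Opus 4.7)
The proof plan is essentially a direct application of the identity $\Phi_k \Phi_k^{-1} = \mathrm{Id}_{D_{n,k}}$, which by the hint in the statement is all that is required. The key observation is that the adjoint system matrix is defined precisely so that inverse-transposes of fundamental matrices solve it.

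First I would fix $k\ge 1$ and set $\Psi_k:=(\Phi_k^{-1})^T$. By Proposition \ref{LVEprop}, $\Phi_k$ satisfies $\dot\Phi_k = A_{\mathrm{LVE}_\psi^k}\Phi_k$ with $\Phi_k(t_0) = \mathrm{Id}_{D_{n,k}}$. Differentiating $\Phi_k \Phi_k^{-1} = \mathrm{Id}_{D_{n,k}}$ yields $\dot\Phi_k \Phi_k^{-1} + \Phi_k \,\dot{\overline{\Phi_k^{-1}}} = 0$, whence
\[
\dot{\overline{\Phi_k^{-1}}} \;=\; -\Phi_k^{-1}\dot\Phi_k \Phi_k^{-1} \;=\; -\Phi_k^{-1} A_{\mathrm{LVE}_\psi^k}.
\]
Taking transposes gives $\dot\Psi_k = -A_{\mathrm{LVE}_\psi^k}^T\, \Psi_k = A_{\mathrm{LVE}_\psi^{k,\star}}\Psi_k$, which is exactly $(\mathrm{LVE}_\psi^k)^\star$. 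Moreover $\Psi_k(t_0) = (\mathrm{Id}_{D_{n,k}}^{-1})^T = \mathrm{Id}_{D_{n,k}}$, so $\Psi_k$ is the principal fundamental matrix, as claimed.

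For the second assertion, note that each $\Phi_k$ is the upper-left $D_{n,k}\times D_{n,k}$ principal block of $\Phi = \expodot Y$, and $A_{\mathrm{LVE}_\psi^k}$ is correspondingly the upper-left block of $A_{\mathrm{LVE}_\psi} = A\odot \expodot\mathrm{Id}_n$ (see \eqref{ALVE} and Proposition \ref{LVEprop}). Because of the lower-triangular block structure exhibited in \eqref{ALVE} and \eqref{PhiLVE}, the inverse $\Phi_k^{-1}$ is also lower-triangular and its principal $D_{n,k-1}\times D_{n,k-1}$ block coincides with $\Phi_{k-1}^{-1}$; consequently the transposes $(\Phi_k^{-1})^T$ form a consistent system whose direct limit (read as an upper-triangular infinite matrix) defines the desired solution of \eqref{LVEstar}. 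Passing to the limit in the identity $\dot{\overline{\Phi_k^{-1}}}^T = -A_{\mathrm{LVE}_\psi^k}^T (\Phi_k^{-1})^T$ finishes the argument.

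The only subtlety — and the one mildly delicate step — is legitimising the passage to the infinite limit: one should verify that derivation commutes with the block-wise limit in $\Mat^{n,n}(K)$, which follows from the fact that each finite block stabilises after finitely many steps in $k$ (so no analytical convergence issue arises, only a formal stabilisation). Everything else is a one-line computation.
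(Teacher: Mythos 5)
Your proof is correct and is exactly the argument the paper intends: differentiating $\Phi_k\Phi_k^{-1}=\Id_{D_{n,k}}$, transposing, and checking the initial condition, with the limit statement following from the nested block structure of the $\Phi_k$. The only cosmetic quibble is that in the paper's block layout the lower-order blocks sit in the \emph{lower-right} corner rather than the upper-left, but this does not affect the substance.
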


The following was proven in \cite{MoRaSi07a} and recounted in \cite[Lemma 7]{ABSW}, and may now be expressed in a simple, compact fashion:
\begin{lemma} \label{lemmanewproof}
Let $F$ and $\bm{\psi}$ be a holomorphic first integral and a non-constant solution of \eqref{DS} respectively. Let $V:=J_F^T$ be  the transposed jet of $F$ along $\bm{\psi}$. Then, $V$ is a solution of \eqref{LVEstar}.
\end{lemma}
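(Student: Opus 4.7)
The strategy is to rewrite the first-integral property $F\circ \varphi_t \equiv F$ as a matrix identity in $\Mat^{1,n}\p{K}$, and then deduce \eqref{LVEstar} for $V:=J_F^T$ via the standard adjoint-equation manipulation. First, I would substitute $\bm{z}=\bm{\psi}\p{0}+\bm{\xi}$ in $F\p{\varphi\p{t,\bm{z}}}=F\p{\bm{z}}$. By \eqref{taylorphi}, $\varphi\p{t,\bm{\psi}\p{0}+\bm{\xi}}=\bm{\psi}\p{t}+J_\psi\p{t}\expodot\bm{\xi}$, so the first-integral property reads
\[
F\bigl(\bm{\psi}\p{t}+J_\psi\p{t}\expodot\bm{\xi}\bigr) \;=\; F\bigl(\bm{\psi}\p{0}+\bm{\xi}\bigr).
\]
Writing each side in the compact form $F\p{\bm{\psi}+\bm{y}}=M_F^\psi\expodot\bm{y}$ that follows \eqref{taylorabsw}, and invoking the composition rule of Lemma \ref{lemaboutseries}(c), this becomes the matrix identity
\[
M_F^{\psi\p{t}}\cdot \Phi\p{t} \;=\; M_F^{\psi\p{0}} \qquad \text{in } \Mat^{1,n}\p{K},
\]
where $\Phi\p{t}:=\expodot J_\psi\p{t}$ is the principal fundamental matrix of \eqref{LVE} from Proposition \ref{LVEprop}.

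Next I would transpose and differentiate in $t$. From $\dot\Phi = A_{\mathrm{LVE}_\psi}\Phi$ one has $\dot{\Phi^T}=\Phi^T A_{\mathrm{LVE}_\psi}^T$, so differentiating the transposed identity $\Phi\p{t}^T\,\p{M_F^{\psi\p{t}}}^T = \p{M_F^{\psi\p{0}}}^T$ yields
\[
0 \;=\; \Phi^T\,\bigl(A_{\mathrm{LVE}_\psi}^T\,\p{M_F^\psi}^T \;+\; \tfrac{d}{dt}\p{M_F^\psi}^T\bigr).
\]
The scalar (constant) block $F\p{\bm{\psi}}$ of $M_F^\psi$ is time-independent because $F$ is a first integral along $\bm\psi$, and it is annihilated by the ``zeroth row'' of $A_{\mathrm{LVE}_\psi}$, which vanishes since $A$ in Lemma \ref{AZproperties} occupies only the row-1 strip. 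Hence the non-constant part $V=J_F^T$ satisfies $\Phi^T\bigl(A_{\mathrm{LVE}_\psi}^T V+\dot V\bigr)=0$; since each finite truncation $\Phi_k$ is an invertible $D_{n,k}\times D_{n,k}$ matrix (being a fundamental matrix of $\mathrm{LVE}_\psi^k$), so is $\Phi_k^T$, and we conclude $\dot V_k=-A_{\mathrm{LVE}_\psi^k}^T V_k$, i.e., $V_k$ solves $\p{\mathrm{LVE}_\psi^k}^\star$. Taking $k\to\infty$ yields the claim.

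The main obstacle will be the rigorous execution of the first step. Applying Lemma \ref{lemaboutseries}(c) to the ``moving-basepoint'' composition $F\circ\p{\bm{\psi}\p{t}+J_\psi\p{t}\expodot\cdot}$, in which both the outer jet $M_F^{\psi\p{t}}$ and the inner jet $J_\psi\p{t}$ depend on the time parameter, requires some care to interpret the composition law as a pointwise $t$-parametric identity in $\Mat^{1,n}\p{K}$ and to match the block shape of $\expodot J_\psi$ with the precise column indexing of $M_F^\psi$. Once the compact identity $M_F^{\psi\p{t}}\cdot\Phi\p{t}\equiv M_F^{\psi\p{0}}$ is in place, the remainder is a textbook adjoint-equation manipulation using Proposition \ref{LVEprop}.
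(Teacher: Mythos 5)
Your proposal is correct and follows essentially the same route as the paper's proof: express $F\circ\varphi$ via the composition rule of Lemma \ref{lemaboutseries} as $\p{M_F^{\psi}\expodot J_\psi}\expodot\bm\xi$, use that this is constant in $t$, differentiate with $\dot{\overbracket[0.2pt]{\expodot J_\psi}}=A_{\mathrm{LVE}_\psi}\expodot J_\psi$, and transpose. The only cosmetic differences are that you pin the constant down by evaluating at $t=0$ and conclude via invertibility of the finite truncations $\Phi_k^T$ rather than by cancelling the factor $\expodot J_\psi$ directly, and you make explicit the (correct) observation that the constant block $F^{\p{0}}\p{\bm\psi}$ decouples.
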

\begin{proof}
Let us recall formal expansion \eqref{taylorphi}
and $F\p{\bm{y}}=J^{\psi}_F\expodot\bm{y}$ for every
$\bm{y}\in K^n$.
Let $\bm{\phi}=\varphi\p{t,\bm{\psi}+\bm{\xi}}$. We have, using Lemma \ref{lemaboutseries},
\[  F\p{ \bm\phi} =  F\p{\bm{\psi}+J_{\psi}\expodot\bm{\xi}}= M^\psi_F \expodot\p{J_{\psi}\expodot\bm{\xi}} = \p{M^\psi_F \expodot J_{\psi}}  \expodot \bm\xi,   \]
and $F\p{\bm\phi}$ is supposed to be constant, hence applying \eqref{LVE} and Lemma \ref{lemaboutseries}
\begin{eqnarray*} 0 &=&  \dot{\overbracket[0.2pt]{\p{M^\psi_F \expodot J_{\psi}}  }}\expodot \bm\xi =
 \p{\dot{\overbracket[0.2pt]{M^\psi_F}}+M^\psi_F A_{\mathrm{LVE}_{\psi}}}   \expodot J_{\psi}\expodot \bm\xi  \\
 &=& \p{\dot{\overbracket[0.2pt]{M^\psi_F}}+M^\psi_F A_{\mathrm{LVE}_{\psi}}}   \expodot \p{\bm\phi-\bm\psi
 },
 \end{eqnarray*}
 hence $\dot{\overbracket[0.2pt]{M^\psi_F}}+M^\psi_F A_{\mathrm{LVE}_{\psi}}=0$ leading us to the final result after transposing
 both sides.
\end{proof}
Compound the jet of field $X$, i.e. $A$ in Notation \ref{notaAY} and Proposition \ref{LVEprop}, with a ${}_{1,0}$ term $A_0$, equal to $X^{\p{0}}=X\p{\bm{\psi}}=\dot{\bm{\psi}}$:
\[
\widehat{A}:= \p{\begin{tabular}{ccc|c}
$\cdots$  &   $0$ &  $0$ &  $0$  \\
$\cdots$ &   $A_2$ &  $A_1$ & $A_0$ \\ \hline
$\cdots$ &  $0$ & $0$ &  $0$
\end{tabular}}, \qquad A_i:=X^{\p{i}}\!\p{\bm{\psi}}\in\Mat^{1,i}_n\p{K}.
\]
It is easy to check, via possibilities offered on $i_1$ and $j_1$ in \eqref{SymProdInf}, that
the symmetric product of $\widehat{A}$ with $\expodot \Id_n$ adds
only a relatively minor addendum to $A_{\mathrm{LVE}_\psi}$, namely a superdiagonal of
blocks $\binom{i}{i}A_0\odot \Id_n^{\odot i}\in\Mat_{n}^{i+1,i}$, $i\ge 1$, effectively rendering it block-Hessenberg:
\[
\widehat{A}_{\mathrm{LVE}_{\psi}}:=\widehat{A}\odot \expodot \Id_n = \lim_k \widehat{A}_{\mathrm{LVE}^k_{\psi}},
\]
where, isolating $A_{\mathrm{LVE}^k_{\psi}}$ within $\widehat{A}_{\mathrm{LVE}^k_{\psi}}$ by means of a solid line,
{\small \begin{eqnarray}\label{Atildek}
\widehat{A}_{\mathrm{LVE}^k_{\psi}} &:= &\p{ \begin{tabular}{cccccc}
$A_0\odot \Id^{\odot k}_n$ & & & & & \\
\Cline{1mm}{1-1}\multicolumn{1}{!{\vrule width 0.3mm}c}{$\binom{k}{k-1}A_1\odot \Id^{\odot k-1}_n$} &  & $\ddots$ &  & & \\
\multicolumn{1}{!{\vrule width 0.3mm}c}{$\vdots$} &  & \multicolumn{1}{c!{\vrule width 0.3mm}}{$\ddots$} & $A_0\odot\Id^{\odot 2}_n$ & & \\
\Cline{1mm}{4-4} \multicolumn{1}{!{\vrule width 0.3mm}c}{$\binom{k}{1}A_{k-1}\odot\Id_n$} &  & $\cdots$ & \multicolumn{1}{c!{\vrule width 0.3mm}}{$2A_1\odot\Id_n$} & $A_0\odot\Id_n$ & \\
\Cline{1mm}{5-5} \multicolumn{1}{!{\vrule width 0.3mm}c}{$A_k$} &  & $\cdots$ & $A_2$ & \multicolumn{1}{c!{\vrule width 0.3mm}}{$A_1$} & $A_0$ \\
\Cline{1mm}{1-5}
$0$ &  & $\cdots$ & $0$ &$0$ & $0$
\end{tabular}}\\
&=& \p{\begin{tabular}{|c|c}
\cline{1-1} $\binom{k}{k}X^{\p{0}}\p{\bm{\psi}} \odot  \Id_n^{\odot k}$ & \\
\cline{1-2} $ \binom{k}{k-1}X^{\p{1}}\p{\bm{\psi}} \odot \Id_n^{\odot k-1}$ & \multicolumn{1}{c|}{\multirow{4}{*}{$ \widehat{A}_{\mathrm{LVE}^{k-1}_{\psi}} $}} \\
\cline{1-1} $\binom{k}{k-2}X^{\p{2}}\p{\bm{\psi}} \odot  \Id_n^{\odot k-2}$ & \multicolumn{1}{c|}{} \\
\cline{1-1} $\vdots$ &  \multicolumn{1}{c|}{} \\
\cline{1-1} $\binom{k}{0} X^{\p{k}}\p{\bm{\psi}} \odot \Id_n^{\odot 0}$ &  \multicolumn{1}{c|}{} \\
\hline
\end{tabular} \nonumber
} .
\end{eqnarray}}Using the $M_k$--$\mathcal{M}_k$ notation in \cite{ABSW}, it is immediate to check that
\[ M_k^T=\Id_n^{\odot k-1}\odot \dot{\bm{\psi}}= \Id_n^{\odot k-1}\odot X\p{\bm{\psi}}, \qquad
 \widehat{A}_{\mathrm{LVE}^k_{\psi}} =\mathcal{M}^T_{k-1} \mbox{ for every }k\ge 1.
\]
A result in \cite{ABSW} using said notation is easier to prove in this setting. Indeed, the same reasoning underlying \eqref{Zaux2} applies to row $F^{\p{k}}$
and $ \frac{\partial}{\partial z_m} F^{\p{k}} = F^{\p{k+1}}\p{\bm{e}_m\odot \Id_n^{\odot k}}$;
following Lemma \ref{eLemma},
$$ \dot{\overbracket[0.2pt]{F^{\p{k}}}} =  F^{\p{k+1}}\sum_{m=1}^n \p{\bm{e}_m\odot \Id_n^{\odot k}} \dot{\psi_m} = F^{\p{k+1}}\p{ \dot{\bm{\psi}} \odot \Id_n^{\odot k}} = F^{\p{k+1}}\p{ A_0 \odot \Id_n^{\odot k}},  $$
implying {\small$\dot{\overbracket[0.2pt]{\p{F^{\p{k}}}^T}} =\p{ A_0 \odot \Id_n^{\odot k}}^T\p{F^{\p{k+1}}}^T$}; placing all terms on one side, and
observing Lemma \ref{lemmanewproof} and the transpose of expression \eqref{Atildek}, we obtain:
\begin{proposition}[{\cite[Th. 12]{ABSW}}] \label{ABSWlemma}
Let $F$, $\bm \psi$, $V$ as in Lemma \ref{lemmanewproof}. Then $\widehat{A}_{\mathrm{LVE}_{\psi}}^T V = 0$. \qed
\end{proposition}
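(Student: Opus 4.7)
The plan is to reduce the statement to two ingredients already on hand: Lemma \ref{lemmanewproof}, which gives $\dot V + A_{\mathrm{LVE}_{\psi}}^T V = 0$ block by block, and the chain-rule identity
\[
\dot{\overbracket[0.2pt]{\p{F^{\p{k}}}^T}} = \p{A_0\odot \Id_n^{\odot k}}^T\p{F^{\p{k+1}}}^T, \qquad A_0:=X\p{\bm\psi}=\dot{\bm\psi},
\]
derived in the paragraph immediately preceding the statement. What remains is then purely block-by-block bookkeeping against the structure of $\widehat{A}_{\mathrm{LVE}_{\psi}}^T$.

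From display \eqref{Atildek} one reads off that $\widehat{A}_{\mathrm{LVE}_{\psi}}=\widehat{A}\odot\expodot\Id_n$ equals $A_{\mathrm{LVE}_{\psi}}$ augmented by a superdiagonal of blocks $A_0\odot\Id_n^{\odot k}\in\Mat_n^{k+1,k}$ (each carrying input level $k$ to output level $k+1$), plus an extra $A_0\in\Mat_n^{1,0}$ block at the corner. Transposing swaps this into a subdiagonal of $\p{A_0\odot\Id_n^{\odot k}}^T$ blocks together with a single $A_0^T$ entry at output level $0$. Hence, at each output level $k\ge 1$, the corresponding block of $\widehat{A}_{\mathrm{LVE}_{\psi}}^T V$ equals the block of $A_{\mathrm{LVE}_{\psi}}^T V$ plus the one extra summand $\p{A_0\odot\Id_n^{\odot k}}^T \p{F^{\p{k+1}}}^T$. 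Lemma \ref{lemmanewproof} identifies the first piece with $-\dot{\overbracket[0.2pt]{\p{F^{\p{k}}}^T}}$, the chain-rule identity evaluates the second piece as $\dot{\overbracket[0.2pt]{\p{F^{\p{k}}}^T}}$, and the two cancel. For the block at output level $0$, the only contribution is $A_0^T\p{F^{\p{1}}}^T = \p{\nabla F\p{\bm\psi}\cdot X\p{\bm\psi}}^T$, which vanishes because $F$ is a first integral of \eqref{DS}.

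The only step requiring a bit of care is the block-index bookkeeping in the non-square $\widehat{A}_{\mathrm{LVE}_{\psi}}^T$: its row and column block-levels are offset by one, precisely so that the subdiagonal pairs the level-$(k+1)$ block of $V$ with output level $k$. That the chain-rule identity for $\dot{\p{F^{\p{k}}}^T}$ exhibits the very same offset is no coincidence: both identities arise from differentiating $F^{\p{k}}\p{\bm\psi(t)}$ along the flow, which is the common source of the cancellation.
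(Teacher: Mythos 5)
Your proposal is correct and follows exactly the paper's own route: the paper's ``proof'' is the paragraph preceding the statement, which derives the identity $\dot{\overbracket[0.2pt]{F^{\p{k}}}} = F^{\p{k+1}}\p{A_0\odot\Id_n^{\odot k}}$ and then invokes Lemma \ref{lemmanewproof} together with the transpose of \eqref{Atildek}; you have simply made the block-level bookkeeping (including the output-level-$0$ row, where $F^{\p{1}}\p{\bm\psi}X\p{\bm\psi}=0$ because $F$ is a first integral) explicit. The closing observation that both identities come from differentiating $F^{\p{k}}\p{\bm\psi\p{t}}$ along the flow is accurate and a nice touch, though not needed for the argument.
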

Hence, blocks in $\bm V_1,\p{\bm V_2,\bm V_1}^T, \p{\bm V_3,\bm V_2,\bm V_1}^T, \dots$ having all entries in the base field $K$ and satisfying both equations in Proposition \ref{ABSWlemma} and \ref{lemmanewproof} are candidates for jet blocks $F^{\p{1}},F^{\p{2}},\dots$ of a formal first integral. These blocks belonging to the intersection of $\ker\widehat{A}_{\mathrm{LVE}^k_{\psi}}^T$ and the solution subspace
$\mathrm{Sol}_K \p{\mathrm{LVE}^k_{\psi}}^\star$ were called \emph{admissible} solutions of the order-$k$ adjoint system in \cite{ABSW}.

This takes us back to the end of Section \ref{powerseriessection}. Consider
\emph{gauge transformation} (\cite{Ap10a,ABSW,Au01a,Mo99a}) $\bm x= P \bm X$
transforming
linear system $\dot{\xi}=A_1{\xi}$ into equivalent
$$ \dot{{\Xi}}=P\qu{A_1} {\Xi} := \p{P^{-1}A_1P - P^{-1} \dot P}{\Xi} . $$
Using notation $Y_i=PX_i$, $J_\psi=PX$
and item (e) in Lemma \ref{AZproperties}, we recover the result already seen in previous references, summarised in the extension of gauge
transformations to higher dimensions via $P^{\odot k}$:
{\small\[ \expodot \p{X} =\expodot \p{P^{-1} J_\psi} = \expodot{P}^{-1}\expodot J_\psi = \mathrm{diag} \p{\cdots,\p{P^{-1}}^{\odot 2}, P^{-1}, 1} \expodot J_\psi, \]}and very simple application of properties seen so far extends the general structure of the gauge transformation to
$\Psi = \expodot P^{-1} \expodot J_\psi$:
{\small\begin{equation} \label{newLVE}
 \dot \Psi  = P\qu{A_{\mathrm{LVE}_\psi}} \Psi:= \p{\expodot P^{-1} A_{\mathrm{LVE}_\psi} \expodot P +\p{ \dot{\overbracket[0.5pt]{P^{-1}}}\odot \expodot P^{-1}}\expodot P} \Psi .
 \end{equation}}Second summand $\p{\dot{\overbracket[0.5pt]{P^{-1}}} \odot \expodot P^{-1}}\expodot P$ can be simplified into:
{\small\begin{equation} \label{Pdiag} \mathrm{diag}\p{\dots,k\qu{\dot{\overbracket[0.5pt]{P^{-1}}}\odot \p{P^{-1}}^{\odot k-1}}P^{\odot k},\dots,2\p{\dot{\overbracket[0.5pt]{P^{-1}}}\odot P^{-1}}P^{\odot 2},-P^{-1}\dot P,0} ,
\end{equation}}with $\dot{\overbracket[0.5pt]{P^{-1}}}=-P^{-1}\dot P P^{-1}$.
The above gauge transformation can be seen as the effect of transformation $\bm{z}=P\bm{Z}$ on the jet of \eqref{DS}. Given a first integral $F$ of the latter,
we may always assume $F\p{\bm\psi}=0$, which implies $M^{1,0}_F=0$ and,
as seen in \eqref{Fxnew} or in Lemma \ref{expprops0},
$$
F_P\p{\bm{Z}} =  J_F \p{\expodot{P}}\expodot  \bm Z .
$$
The jet of this formal series is
{\small\[ J_{F_P} =J_F \p{\expodot{P}}
= \p{\begin{tabular}{cccc|}
$\cdots$ & $F^{\p{0}}\!\p{\bm{\psi}}P^{\odot 3}$ & $F^{\p{2}}\!\p{\bm{\psi}}P^{\odot 2}$ &  $F^{\p{1}}\!\p{\bm{\psi}}P$  \\\hline
$\cdots$  & $0$ &  $0$ & $0$
\end{tabular}}\in\Mat^{1,n},
\]}and applying \eqref{newLVE}, Lemmae \ref{lemmanewproof} and \ref{ABSWlemma}, and identity
$ \p{P^{-1}}^{\odot k} \dot P^{\odot k} = -  \dot{\overbracket[0.5pt]{\p{P^{-1}}^{\odot k}}} P^{\odot k}, $
 we have just proven the following:
\begin{proposition}
The transposed jet $V_P:= J_{F_P}^T$ in the new variables must satisfy
\begin{equation} \label{finaleq}
\dot V_P = -P\qu{A_{\mathrm{LVE}_\psi}}^T V_P, \qquad \widehat{A}^T_{\mathrm{LVE}_\psi} \p{\expodot P^{-1}}^T V_P=0 . \qquad \square
\end{equation}
\end{proposition}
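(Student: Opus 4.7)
The plan is to propagate the two conditions that $V := J_F^T$ satisfies in the original coordinates — namely $\dot V = -A_{\mathrm{LVE}_\psi}^T V$ from Lemma \ref{lemmanewproof} and $\widehat{A}_{\mathrm{LVE}_\psi}^T V = 0$ from Proposition \ref{ABSWlemma} — through the substitution $V_P = (\expodot P)^T V$, itself obtained by transposing the identity $J_{F_P} = J_F \expodot P$ derived just before the Proposition statement. A preliminary observation used throughout: since taking inverses commutes with $\odot$-powers (Proposition \ref{odotproperties}(5)) and hence with $\expodot$ blockwise (Lemma \ref{expprops0}(d)), one has $\expodot P \cdot \expodot P^{-1} = \Id$, so $(\expodot P^{-1})^T V_P = V$.

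The algebraic constraint is then immediate: $\widehat{A}_{\mathrm{LVE}_\psi}^T (\expodot P^{-1})^T V_P = \widehat{A}_{\mathrm{LVE}_\psi}^T V = 0$ by Proposition \ref{ABSWlemma}. For the differential equation, I would differentiate $V_P = (\expodot P)^T V$ via the Leibniz rule and substitute $\dot V = -A_{\mathrm{LVE}_\psi}^T V$ to obtain
\[
\dot V_P = \bigl(\dot{\overbracket[0.5pt]{\expodot P}}\bigr)^T V - (\expodot P)^T A_{\mathrm{LVE}_\psi}^T V.
\]
Separately, differentiating $\expodot P^{-1} \cdot \expodot P = \Id$ yields $\dot{\overbracket[0.5pt]{\expodot P^{-1}}} \cdot \expodot P = -\expodot P^{-1} \cdot \dot{\overbracket[0.5pt]{\expodot P}}$, which is the blockwise content of the relation $(P^{-1})^{\odot k}\dot P^{\odot k} = -\dot{\overbracket[0.5pt]{(P^{-1})^{\odot k}}} P^{\odot k}$ recorded just before the Proposition. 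Combined with \eqref{Pdiag}, which rewrites the gauge-correction summand of \eqref{newLVE} as the ordinary matrix product $\dot{\overbracket[0.5pt]{\expodot P^{-1}}} \cdot \expodot P$, this collapses the gauge-transformed matrix to $P[A_{\mathrm{LVE}_\psi}] = \expodot P^{-1}\bigl(A_{\mathrm{LVE}_\psi} \expodot P - \dot{\overbracket[0.5pt]{\expodot P}}\bigr)$. Transposing, right-multiplying by $-V_P$, and applying $(\expodot P^{-1})^T V_P = V$ recovers precisely the displayed expression for $\dot V_P$.

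The hard part will be keeping the operations on the infinite block matrices of $\Mat^{n,n}$ — products, transposes, symmetric $\odot$-exponentials and their derivatives — mutually consistent. In particular, the gauge-correction term of \eqref{newLVE} is written as the mixed symmetric-and-ordinary expression $\bigl(\dot{\overbracket[0.5pt]{P^{-1}}} \odot \expodot P^{-1}\bigr)\expodot P$, and the reduction \eqref{Pdiag} is what authorises me to treat it as the plain matrix product $\dot{\overbracket[0.5pt]{\expodot P^{-1}}} \cdot \expodot P$ before differentiating $\expodot P^{-1} \cdot \expodot P = \Id$. Once this identification is secured at each finite truncation level $k$, everything else is formal differentiation and linear algebra.
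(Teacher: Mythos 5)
Your proposal is correct and follows essentially the same route as the paper, which proves the Proposition in one line by invoking exactly the ingredients you use: the identity $J_{F_P}=J_F\expodot P$, the gauge-transformed system \eqref{newLVE} with its diagonal form \eqref{Pdiag}, Lemma \ref{lemmanewproof}, Proposition \ref{ABSWlemma}, and the relation $\p{P^{-1}}^{\odot k}\dot P^{\odot k}=-\dot{\overbracket[0.5pt]{\p{P^{-1}}^{\odot k}}}P^{\odot k}$. Your write-up merely makes explicit the transposition and the cancellation $\p{\expodot P^{-1}}^T V_P=V$ that the paper leaves implicit.
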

The key importance in practical examples resides in the choice of the particular solution $\psi$ and the reduction matrix $P$,
in order to render \eqref{finaleq} easier (or more convenient) to solve than its unreduced counterparts, Lemma \ref{lemmanewproof} and
Proposition \ref{ABSWlemma}: see also \cite{AW1,AW2}.

\section{Example}
\label{samexample}

The dynamics of the \emph{Swinging Atwood Machine (SAM)}, summarised in the diagram below,
\[
\begin{tikzpicture}
\draw[very thick] (-2,0) arc (0:360:1);
\fill (-2.9,0) arc (0:360:0.1);
\fill (1.1,0) arc (0:360:0.1);
\draw[very thick] (2,0) arc (0:360:1);
\draw[thick] (-3,1) -- (1,1);
\draw[thick] (2,0) -- (2,-3);
\draw[thick] (1.8,-3.5) rectangle (2.2,-3);
\draw[dashed] (-3,0) -- (-3.9,0.45);
\draw[dashed,very thin] (-3.9,0.45) -- (-3.9,-3.5);
\draw [-to,shorten >=-1pt] (-3.9,-1.55) arc (270:248:2);
\draw[thick] (-3.9,0.45) -- (-5,-2);
\draw[thick] (-4.88,-2.2) arc (0:360:0.2);
\end{tikzpicture}
\setlength{\unitlength}{1cm}
\put (-5.65,3.9) {{\small{$R$}}}
\put (-6.7,1.6) {{\small{$q_2$}}}
\put (-7.2,2.6) {{\small{$q_1$}}}
\put (-7.8,0.8) {{ $m$}}
\put (-1.1,0.) {{ $M$}}
\]
are governed by
Hamiltonian
$$
\mathcal{H} = \frac12 \left[\frac{p_1^2}{M_t} + \frac{(p_2+Rp_1)^2}{mq_1^2}
\right] + gq_1(M-m\cos q_2)-gR(Mq_2-m\sin q_2),
$$
where $M_t=M+m+2I_p/R^2$ and $I_p$ is the pulley inertial momentum.
We know the following:
\begin{theorem}[\protect{\cite[Th. 7.5]{SAMPulleys}}]
\label{main}
For every physically consistent value of the parameters, regardless of $I_p$ and
$R$, ${\mathcal{H}}$ is meromorphically non-integrable.
\end{theorem}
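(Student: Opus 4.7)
The plan is to apply the Morales--Ramis--Sim\'o framework developed in this paper to the two-degree-of-freedom Hamiltonian $\mathcal{H}$. First I would exhibit a non-trivial particular solution $\bm\psi\p{t}$ of $X_{\mathcal{H}}$ by locating an invariant submanifold; the natural candidate for SAM is the ``non-swinging'' manifold $\brr{q_2=0,\,p_2+Rp_1=0}$, on which the dynamics reduces to a one-degree-of-freedom problem governed by the energy conservation $\frac{p_1^2}{2M_t}+gq_1\p{M-m}=h$. After verifying invariance under $X_{\mathcal{H}}$, I would solve the reduced equation so as to obtain $q_1\p{t}$ (equivalently, $t\p{q_1}$) in closed form, thereby fixing the differential field $K=\nc\p{\bm\psi}$ and the Riemann surface $\Gamma$ on which $\bm\psi$ lives.

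Next, I would compute the first-order variational equation $\mathrm{VE}_\psi:\dot Y_1=X_\mathcal{H}'\p{\bm\psi}Y_1$ and reduce it, via the symplectic block structure and a gauge transformation as in \eqref{newLVE}, to a scalar second-order linear ODE over $K$. Passing to the algebraic independent variable $q_1$, this ODE should acquire rational coefficients with singularities at the fixed points of the reduced flow and at infinity, placing it within reach of the Kovacic algorithm. The goal at this stage is to prove that $\mathrm{Gal}\p{\mathrm{VE}_\psi}^{\circ}$ is non-abelian for \emph{generic} values of $\p{m,M,I_p,R}$, which by Theorem \ref{moralesramis} already yields non-integrability on a Zariski-dense set of parameters.

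The serious work is the ``for \emph{every} physically consistent value'' clause. For the exceptional parameter locus where Kovacic returns an abelian (or even solvable) identity component at order one, I would invoke Theorem \ref{moralesramissimo} and climb one rung at a time to $\mathrm{LVE}_\psi^2$ and, if needed, $\mathrm{LVE}_\psi^3$, using the explicit block-Hessenberg matrix \eqref{ALVE} and the principal fundamental matrix \eqref{PhiLVE} furnished by Proposition \ref{LVEprop}. Choosing two loops $\gamma,\widetilde\gamma\in\pi_1\p{\Gamma,t_0}$ encircling the singularities of $\mathrm{VE}_\psi$, I would compute the quadratures $Q_{1,s,\gamma}$ in \eqref{Qs2} by residues, assemble the higher blocks via \eqref{theQs}, and then verify the non-commutation criterion of Proposition \ref{commmon}, namely
\[
\sum_{j=r}^{k} Q_{r,j,\gamma}Q_{j,k,\widetilde\gamma}\;\neq\;\sum_{j=r}^{k} Q_{r,j,\widetilde\gamma}Q_{j,k,\gamma},
\]
for some $r$ at the smallest $k\in\brr{2,3}$ at which it already fails. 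Since the monodromy group is contained in the differential Galois group, a single such non-commutation rules out integrability at the exceptional locus too.

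The main obstacle will be the symbolic evaluation and parametric control of the contour integrals \eqref{Qs2} at the resonant values of $M/m$: the integrands are products of the solutions of $\mathrm{VE}_\psi$ (typically Gauss hypergeometric functions after pullback to a rational uniformiser), and the residue bookkeeping has to be performed uniformly in $I_p$ and $R$ so that no new integrable cases sneak in. To keep the argument honest I would cross-check the symbolic output of \eqref{ALVE}--\eqref{PhiLVE} against the jet-based automatic-differentiation scheme alluded to at the end of \S\ref{structure}, taking advantage of the morphism identity $M_{k,\gamma_2^{-1}\gamma_1^{-1}\gamma_2\gamma_1}=M_{k,\gamma_2}^{-1}M_{k,\gamma_1}^{-1}M_{k,\gamma_2}M_{k,\gamma_1}$ noted there to reduce the Proposition \ref{commmon} test to a single jet commutator per parameter value.
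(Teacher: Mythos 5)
First, be aware that this paper does not prove Theorem \ref{main} at all: it is imported verbatim from \cite[Th.~7.5]{SAMPulleys}, and the present paper only uses it as background before treating the complementary limit $I_p=R=0$ in \S\ref{samexample}. So there is no internal proof to match your proposal against; what can be compared is your plan versus the proof in the cited reference. That proof works entirely at the level of the \emph{first} variational equation: along a suitable particular solution the normal variational equation reduces to a second-order linear ODE whose identity-component Galois group is shown to be non-abelian for \emph{all} physically consistent $\p{m,M,I_p,R}$, precisely because the massive-pulley terms ($I_p,R>0$) shift the local exponents and destroy the resonant family $\mu_p$ of Theorem \ref{previous}. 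No higher variational equations are needed there; the machinery of Proposition \ref{LVEprop} and Proposition \ref{commmon} is what this paper deploys for the degenerate case $I_p=R=0$, where first order genuinely fails.

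Second, as a proof your text has a genuine gap: it is a programme, not an argument. None of the steps that carry the logical weight are executed --- the Kovacic analysis that is supposed to give a non-abelian $\mathrm{Gal}\p{\mathrm{VE}_\psi}^\circ$ generically, the identification of the exceptional parameter locus, the quadratures \eqref{Qs2}, or a single non-commutation certificate --- and nothing guarantees that the exceptional locus is exhausted at $k\le 3$. Worse, the stated goal of performing the residue bookkeeping ``uniformly in $I_p$ and $R$'' cannot succeed as written: at $I_p=R=0$, $M=3m$ the system \eqref{hamsam} \emph{is} integrable (Theorem \ref{previous}, item \textbf{2}), so every commutator your criterion tests must vanish there, and any certificate uniform in $\p{I_p,R}$ would be false. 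The essential content of \cite[Th.~7.5]{SAMPulleys} is exactly the isolation of the mechanism by which $I_p,R>0$ breaks this integrable case, and that mechanism is absent from your plan. Finally, note the caveat the paper itself raises after Proposition \ref{commmon}: non-commuting monodromies only obstruct integrability once one knows they lie in $\mathrm{Gal}\p{\mathrm{LVE}^k_\psi}^\circ$, which must be argued separately (as is done in \S\ref{61} via unipotency of the $M_{1,\gamma_i}$) and is not addressed in your proposal.
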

Consider \emph{SAM}
\emph{without massive pulleys}, i.e. the limit case $I_p=R=0$ and $M_t=M+m$:
{\small\begin{equation} \label{hamsam}
\dot{\bm z}=X_{\mathcal{H}_w}\p{\bm z} := J\nabla \mathcal{H}_w\p{\bm{z}}, \qquad
\mathcal{H}_w = \frac12\left(\frac{p_1^2}{M+m}+\frac{p_2^2}{mq_1^2}\right) +
gq_1\left(M-m\cos q_2\right).
\end{equation}}
\begin{theorem}\label{previous} Define $\mu:= \frac Mm $ and $\mu_p:= \frac{p(p+1)}{p(p+1)-4}$, $p\in \nz$.
\begin{itemize}
\item[\textbf{1}.] \emph{(\cite[Th. 1]{Casasayas-1990})} If $M>m$ and $\mu \ne \mu_p$ for
every $p\geq 2$, then $X_{\mathcal{H}_w}$ is
non-integrable.
\item[\textbf{2}.] \emph{(\cite[(16)]{Tufillaro-1986})}
For $\mu=\mu_2=3$, \eqref{hamsam} is integrable with additional first integral:
\begin{equation} \label{otherintegral}
I = q_1^2\dot{q_2}\left(\dot{q_1}c-\frac{q_1\dot{q_2}}{2}s\right) + gq_1^2sc^2 =
 gq_1^2c^2s + p_2\frac{p_1q_1c - 2p_2s}{4m^2q_1}.
\end{equation}
\item[\textbf{3}.] \emph{(\cite[Theorem 4]{martsim1})} Degenerate cases
$\mu_p$, $p\geq 2$ in item \textbf{1} are non-integrable. $\hfill\square$
\end{itemize}
\end{theorem}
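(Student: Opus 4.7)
The three items have distinct flavours, so my plan is to treat them separately but within a unified variational framework, leveraging the explicit machinery developed earlier, in particular Propositions~\ref{LVEprop} and~\ref{commmon}. The natural particular solution to linearise about is the \emph{homothetic} (purely vertical) motion $\bm\psi(t)=(q_1(t),0,p_1(t),0)$, which is invariant under $q_2\mapsto -q_2$ and along which $\mathcal{H}_w$ reduces to one-dimensional free fall with effective gravity proportional to $M-m$. On this $\bm\psi$ the linear variational system splits into a trivial tangent block and a $2\times 2$ transversal block in $(q_2,p_2)$ whose entries are rational in $q_1$.

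For item~\textbf{1} (already due to Casasayas--Costa), I would change time to $\tau$ given by $dt=q_1\,d\tau$ so as to transform $\mathrm{VE}_{\bm\psi}$ into a second-order linear ODE of hypergeometric type in the reduced variable. A Kimura-type analysis of the resulting Gauss equation then detects the exceptional mass ratios: the identity component of $\mathrm{Gal}(\mathrm{VE}_{\bm\psi})$ is non-abelian unless the exponent parameters lie in a discrete list, which one checks to be exactly $\mu=\mu_p$, $p\ge 2$, together with $\mu\le 1$ (excluded by the physical condition $M>m$). Morales--Ramis (Theorem~\ref{moralesramis}) then forces meromorphic non-integrability outside this list.

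Item~\textbf{2} is the easy one: assuming~\eqref{otherintegral}, it suffices to check independence with $\mathcal{H}_w$ and the vanishing of the Poisson bracket $\{I,\mathcal{H}_w\}$ at $\mu=3$. This is a direct symbolic computation; independence is verified by evaluating the Jacobian of $(I,\mathcal{H}_w)$ at a generic point. No variational theory enters here.

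Item~\textbf{3} is where the paper's main result enters, and it will be the main obstacle. For each resonant value $\mu=\mu_p$ the first-order obstruction of item~\textbf{1} disappears, so one must climb the tower of variationals until a non-abelian identity component appears. The plan is to apply Proposition~\ref{LVEprop} to assemble $A_{\mathrm{LVE}^k_{\bm\psi}}$ block by block from the derivatives $A_j=X_{\mathcal{H}_w}^{(j)}(\bm\psi)$, and Proposition~\ref{commmon} to the two generators of $\pi_1(T,t_0)$ (coming from the two singularities of the reduced hypergeometric equation on the Riemann sphere). Concretely, I would first integrate the lowest row of $A_{\mathrm{LVE}^k_{\bm\psi}}$ order by order, producing the blocks $Q_{1,s,\gamma}$ via~\eqref{Qs2} and then, by~\eqref{theQs}, the full monodromy matrices $M_{k,\gamma}$ and $M_{k,\widetilde\gamma}$. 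At the smallest $k=k(p)$ for which the commutator condition of Proposition~\ref{commmon} fails, one reads off a non-abelian identity component, contradicting Theorem~\ref{moralesramissimo}. The hard part is twofold: (i)~controlling the quadratures symbolically — the bottom-row integrals typically involve logarithms and dilogarithm-type primitives whose monodromy must be tracked along the path order dictated by~\eqref{ordermat}; and (ii)~showing that $M_{k,\gamma}$ and $M_{k,\widetilde\gamma}$ actually belong to $\mathrm{Gal}(\mathrm{LVE}^k_{\bm\psi})^\circ$, which as noted in the remark following Proposition~\ref{commmon} is reducible to the same statement at order $1$, already granted by the analysis of item~\textbf{1} (adapted to the resonant case, where the first-order Galois group is known to be abelian). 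The value of $k$ needed will grow with $p$, reflecting the increasingly deeper resonance, and finiteness of the argument depends on showing that $k(p)$ is bounded for each $p$ — a point to be handled by an explicit computation for low $p$ and a symbolic pattern for large $p$.
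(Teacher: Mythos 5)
The first thing to note is that the paper contains no proof of Theorem~\ref{previous}: the $\square$ sits inside the statement because all three items are quoted verbatim from the cited references (Casasayas--Nunes--Tufillaro for item~\textbf{1}, Tufillaro for item~\textbf{2}, Mart\'inez--Sim\'o for item~\textbf{3}), so there is no ``paper route'' to compare yours against --- only the question of whether your programme would actually establish the claims. Your item~\textbf{2} is fine as a direct verification of $\{I,\mathcal{H}_w\}=0$ and functional independence. Your item~\textbf{1} sketch follows the standard path of the cited reference (restriction to the invariant plane $q_2=p_2=0$, algebrisation of the normal variational equation into a hypergeometric equation, Kimura-type analysis of the exceptional exponents), modulo the actual computation of the exponent list; note only that the 1990 reference predates Theorem~\ref{moralesramis} and argues via Ziglin's monodromy criterion, and that the attribution ``Casasayas--Costa'' is incorrect.

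The genuine gaps are in item~\textbf{3}, and they are not merely computational. First, your argument terminates only if for each resonant $\mu_p$ there exists a finite order $k(p)$ at which the commutator test of Proposition~\ref{commmon} fails; but that existence claim \emph{is} the theorem, and you defer it to ``an explicit computation for low $p$ and a symbolic pattern for large $p$'' without supplying either. Worse, no such $k$ can exist for $p=2$: by item~\textbf{2} the case $\mu_2=3$ is integrable, so Theorem~\ref{moralesramissimo} forces all identity-component monodromies to commute at every order and your search never halts --- the cited Theorem~4 of Mart\'inez--Sim\'o in fact excludes $\mu=3$, and any correct reading of item~\textbf{3} must do so as well; your proposal inherits this inconsistency without flagging it. Second, the inference ``non-commuting monodromies $\Rightarrow$ non-abelian identity component'' requires $M_{k,\gamma},M_{k,\tilde\gamma}\in\mathrm{Gal}(\mathrm{LVE}^k_{\bm\psi})^\circ$; the remark following Proposition~\ref{commmon} explicitly calls this an open problem in general, and in \S\ref{61} it is secured only because the order-one monodromies along the chosen solution are unipotent. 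Along your vertical solution in the resonant cases $\mu=\mu_p$ the first-order monodromy is a nontrivial finite-order rotation-type element (that is precisely what the resonance condition produces), so unipotence and hence identity-component membership cannot be assumed via ``the analysis of item~1 adapted to the resonant case''; this step needs an actual argument, e.g.\ passing to path commutators as in the second remark after Proposition~\ref{commmon}.
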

\noindent Canonical transformation
$\p{q_1,q_2,p_1,p_2} = \p{Q_1,
 \arccos Q_2,
 P_1,
 -P_2\sqrt {1 - Q_2^2}}
$
on $\mathcal{H}_w$ yields
\begin{equation}\label{samham2}
H=g Q_1 \p{M-m Q_2}+\frac{1}{2} \p{\frac{P_1^2}{M+m}-\frac{P_2^2 \left(Q_2^2-1\right)}{m Q_1^2}}.
\end{equation}
Let us apply our formulation to the variational systems for $H$. The non-meromorphic nature of the canonical transformation, along with other related issues, precludes us from extending the conclusions of \S \ref{61} to Hamiltonian \eqref{hamsam}. This will be the subject of further upcoming work.
\subsection{Monodromy matrices and integrability}\label{61}
Consider the particular solution $\bm \psi=\p{Q_1,Q_2,P_1,P_2}$ defined by
{\small\begin{equation} \label{solutionc}
\bm\psi\p{t} = \left(-\frac{g \p{t-1} t}{2} ,-1,-\frac{g \p{m+M} \p{2 t-1}}{2} ,\frac{g^2 m \p{t-1} t \left(C_1-2 C_1 t+t^2\right)}{4 \p{C_1-t}}\right).
\end{equation}}In the forthcoming calculations, any value of $C_1$ different from $0$ or $1$ will ensure the presence of logarithms in the fundamental matrix, and any value of $C_1$ different from $1/2$ will avoid division by zero. Choose, for instance, $C_1=1/3$.
Gauge transformation $\bm x = P\bm X$,
\[
P:=\frac{1}{\sqrt{M+m}}\p{
\begin{array}{cccc}
 -1 & 0 & 0 & 0 \\
 0 & \frac{2 \p{M+m} \p{1-3 t}^2}{9 g m \p{t-1}^2 t^2} & 0 & 0 \\
 0 & \frac{\p{M+m} \p{1+t} \p{3 t-1}}{9 \p{t-1}^2 t^2} & -M-m & 0 \\
 -\frac{g m \p{1+t}}{2  (3 t-1)} & -\frac{g F\p{t}}{90 (1-3 t)^2 (t-1) t} & 0 & \frac{9 g m (t-1)^2 t^2}{2\p{1-3 t}^2}
\end{array}
},
\]
and {\small$F=15 m (t-1) t (3 t-1)^3+M (1+t (16+15 t (-9+t (37+27 (-2+t) t))))$} transforms \eqref{VE} into
\begin{equation}\label{gaugeP}
\dot{\bm X}=P\qu{A_1}\bm X:=
\p{
\begin{array}{cccc}
 0 & -\frac{1}{9} \left(\frac{4}{\p{t-1}^2}-\frac{1}{t^2}\right) & 1 & 0 \\
 0 & 0 & 0 & 0 \\
 0 & 0 & 0 & 0 \\
 0 & \frac{8 M (1-3 t)}{405 m (t-1)^4 t^4} & \frac{1}{9} \left(\frac{4}{\p{t-1}^2}-\frac{1}{t^2}\right) & 0
\end{array}
}\bm X.
\end{equation}
Defining
\begin{eqnarray*}
G\p{t} &=&  \frac{2}{3} \p{2 t-1} \qu{1+5 \p{t-1} t \p{6 \p{t-1} t-1}} \\
&+& t \qu{3+5 t (3-2 t (11+3 t (2 t-5)))}-20 \p{t-1}^3 t^3 \ln\frac{t-1}{t},
\end{eqnarray*}
the principal fundamental matrix for \eqref{gaugeP} is $\Psi\p{t}:=\tilde{\Psi}\p{t}\cdot \p{\tilde{\Psi}\p{1/2}}^{-1}$, where
\[
\tilde{\Psi} :=e^{\int P\qu{A_1}}=
\p{
\begin{array}{cccc}
 1 & \frac{1+3 t}{9 \p{t-1} t} & t & 0 \\
 0 & 1 & 0 & 0 \\
 0 & 0 & 1 & 0 \\
 0 & -\frac{4 M G\p{t}}{405 m \p{t-1}^3 t^3} & -\frac{1+3 t}{9 (t-1) t} & 1
\end{array}
},
\]because $P\qu{A_1}$ commutes with $\int P\qu{A_1}$ --
an interesting topic for further study is the possible relationship between said commutation and the level of \emph{reduction} of the matrix
in the sense of \cite{AW1}, but this is not central to our present work. Our intention is to perform analytical continuation along paths homotopic to $\gamma_1,\gamma_2$ shown below:
\[
\begin{tikzpicture}
\draw[<->](-2,0) -- (4,0);
\draw[<->](0,-2) -- (0,2);
\draw [-to,shorten >=-1pt,thick] (1,0) -- (1.5,-.5);
\draw [thick] (1.5,-0.5) -- (2,-1);
\draw [-to,shorten >=-1pt,thick] (2,-1) -- (2.5,-0.5);
\draw [thick] (2.5,-0.5) -- (3,0);
\draw [-to,shorten >=-1pt,thick] (3,0) -- (2.5,.5);
\draw [thick] (2.5,0.5) -- (2,1);
\draw [-to,shorten >=-1pt,thick] (2,1) -- (1.5,0.5);
\draw [thick] (1.5,0.5) -- (1,0);
\draw [-to,shorten >=-1pt,thick] (1,0) -- (.5,.5);
\draw [thick] (.5,.5) -- (0,1);
\draw [-to,shorten >=-1pt,thick] (0,1) -- (-.5,0.5);
\draw [thick] (-.5,0.5) -- (-1,0);
\draw [-to,shorten >=-1pt,thick] (-1,0) -- (-.5,-.5);
\draw [thick] (-.5,-.5) -- (0,-1);
\draw [-to,shorten >=-1pt,thick] (0,-1) -- (.5,-0.5);
\draw [thick] (.5,-0.5) -- (1,0);
\draw (2.06,0) arc (0:360:0.06);
\draw (0.06,0) arc (0:360:0.06);
\end{tikzpicture}
\setlength{\unitlength}{1cm}
\put (-5.5,1.5) {{\small{$-\frac 12$}}}
\put (-3.1,1.5) {{\small{$\frac 12$}}}
\put (-1,1.52) {{\small{$\frac 32$}}}
\put (-1,1.52) {{\small{$\frac 32$}}}
\put (-4.2,3) {{\small{$\ri$}}}
\put (-4.5,0.77) {{\small{$-\ri$}}}
\put (-1.4,2.8) {{$\gamma_1$}}
\put (-5,2.8) {{$\gamma_2$}}
\]
Along $\gamma_i$, all terms outside of the diagonal in $\Psi$ vanish except for term $\p{M_{1,\gamma_i}}_{4,2}$ which can be obtained from
\eqref{gaugeP} by assuming $X_2=1,X_3=0$, which yields $\frac{8 M (3t-1)}{405 m \p{t-1}^4 t^4}+\dot{X_4}=0$
and thus
{\small
\begin{equation}\label{monvalues}
 \p{M_{1,\gamma_1}}_{4,2} = - \int_{\gamma_1} \frac{8 M (3t-1)}{405 m \p{t-1}^4 t^4} = \frac{32 \ri M \pi }{81 m}, \quad \p{M_{1,\gamma_2}}_{4,2} =  -\frac{32 \ri M \pi }{81 m}.
 \end{equation}
 }Order-$2,3$ monodromies are given by \eqref{ve2mono}, \eqref{ve3mono} respectively. The only change therein is replacing
$A_1$, $A_2$, $A_3$ by their gauge transforms given by \eqref{newLVE}, \eqref{Pdiag} i.e. the lower row block in $P\qu{A_{\mathrm{LVE}^3_\psi}}$, equal to
{\small\[
\p{\begin{array}{ccc}
 3 \qu{Q^{\odot 3} \p{A_1\odot \Id_n^{\odot 2}}+\dot Q\odot Q^{\odot 2}}P^{\odot 3} & & \\
3 Q^{\odot 2} \p{A_2 \odot \Id_n} P^{\odot 3} & 2 \qu{Q^{\odot 2} \p{A_1 \odot \Id_n}  +\dot Q\odot Q}P^{\odot 2} & \\
P^{-1}A_3P^{\odot 3} & P^{-1}A_2P^{\odot 2} & P\qu{A_1}
\end{array}},
\]
}where $Q:=P^{-1}$. Let $Y_1=\Psi$. Following order \eqref{ordermat}, we compute \eqref{ve2mono}, \eqref{ve3mono} and $M_{3,\gamma_1},M_{3,\gamma_2}$. Let us check whether our third-order monodromy matrices commute. We have
$C:=M_{3,\gamma_1}M_{3,\gamma_2}-M_{3,\gamma_2}M_{3,\gamma_1}$ equal to zero except for the following terms:
\begin{equation} \label{m3values}
 C_{34,5}=-\frac{\left(\frac{25600}{85293}-\frac{8960 }{85293}\ri\right) M^2 \pi ^2}{g^2 m^2 (M+m)}, \qquad
C_{34,11} =\frac{6815744  M^3 \pi ^3\ri}{2187 g^2 m^3 (M+m)}.
\end{equation}
This, coupled with the fact that $M_{k,\gamma_i}\in\mathrm{Gal}\p{\mathrm{LVE}_\psi^k}^\circ$ for every $k$ (since
$M_{1,\gamma_i}$, being unipotent, belong to $\mathrm{Gal}\p{\mathrm{VE}_\psi}^\circ$ and fundamental matrices $\Psi_k$ are
obtained from quadratures), allows us to complement the non-integrability proof
in \cite{martsim1} using linearised variational equations on \eqref{samham2}.
\begin{remarks}
\begin{enumerate}
\item[\phantom{ }]
\item[\textbf{1.}]  Same reasoning can be applied to pre-gauge monodromies, although calculations are more cumbersome; defining $P_k:=\mathrm{diag}\p{P^{\odot k},\dots,P}$, we have
\[ \Psi_k \xrightarrow[\mathrm{cont}]{\gamma_i} \Psi_k M_{k,\gamma_i} \Rightarrow \tilde{\Psi}_k:= P_k\Psi_k\p{P_k\Psi_k}_{t=1/2}^{-1} \xrightarrow[\mathrm{cont}]{\gamma_i} P_k\Psi_k M_{k,\gamma_i} \p{P_k\Psi_k}_{t=1/2}^{-1} ,\]
thereby rendering monodromy $\tilde{M}_{k,\gamma_i}=\p{P_k\Psi_k}_{t=1/2}M_{k,\gamma_i} \p{P_k\Psi_k}_{t=1/2}^{-1} $. The reader may check that $\tilde{M}_{1,\gamma_i}$ have the same structure as $M_{1,\gamma_i}$, albeit with $\pm\frac{\ri g^2 m M \pi }{2 (M+m)}$ in lieu of \eqref{monvalues}, and  the non-zero terms in $\tilde{C}:=\tilde{M}_{3,\gamma_1}\tilde{M}_{3,\gamma_2}-\tilde{M}_{3,\gamma_2}\tilde{M}_{3,\gamma_1}$ share the same indices as \eqref{m3values}:
\begin{equation} \label{m3values2}
\tilde{C}_{34,5} = \frac{\left(\frac{50}{117}-\frac{35 \ri}{234}\right) g m M^2 \pi ^2}{(M+m)^2}, \qquad \tilde{C}_{34,11}=\frac{4992  g^2 m M^3 \pi^3\ri}{(m+M)^3}.
\end{equation}
\item[\textbf{2.}] Although we used monodromies, we can also use the presence of $\ln\p{3t-1}$ in $Y_3$ and
Picard-Vessiot extension {\small$\nc\p{t}\p{\ln t, \ln\p{t-1},\ln\p{3t-1}}\mid \nc\p{t}\p{\ln t, \ln\p{t-1}}$} to glean the structure of a generic Galois group matrix on the fundamental matrix $\Psi_3$ or its pre-gauge counterpart $P_3\Psi_3$; see
e.g. \cite{MSS}.
\end{enumerate}
\end{remarks}
We therefore have the following result:
\begin{theorem} $H$ in \eqref{samham2} is not meromorphically integrable for any $M,m>0$.$\hfill\square$
\end{theorem}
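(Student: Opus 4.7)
The plan is to apply the Morales-Ruiz--Ramis--Sim\'o Theorem \ref{moralesramissimo} in its contrapositive form: if the identity component $\mathrm{Gal}\p{\mathrm{LVE}_\psi^k}^\circ$ of the Galois group of some higher variational system along a particular integral curve $\bm\psi$ fails to be abelian, then $H$ cannot be meromorphically integrable in the Liouville--Arnold sense. Concretely, we exhibit $k=3$ together with the particular solution $\bm\psi$ in \eqref{solutionc} and the two loops $\gamma_1,\gamma_2$ pictured in \S \ref{61}, and we show that the associated monodromies do not commute modulo the identity component.

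First I would set up the monodromy representation along $\bm\psi$ exactly as in \S \ref{61}: the principal fundamental matrix $\Psi$ of the first-order (gauge-reduced) system is computed explicitly, then $\Psi_2,\Psi_3$ are assembled via Proposition \ref{LVEprop}, with the off-diagonal blocks $Q_{1,s,\gamma_i}$ obtained by the quadratures \eqref{Qs2}. This is precisely the procedure outlined in \eqref{ordermat}. At this stage, the monodromies $M_{3,\gamma_1},M_{3,\gamma_2}$ are the matrices already displayed implicitly by the calculation leading to \eqref{m3values}.

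Second, I would argue that both $M_{3,\gamma_i}$ belong to $\mathrm{Gal}\p{\mathrm{LVE}_\psi^3}^\circ$. At order one, the entries of $M_{1,\gamma_i}$ computed in \eqref{monvalues} show that $M_{1,\gamma_i}$ is unipotent (it is upper-triangular with ones on the diagonal after the gauge transformation), hence lies in every Zariski-closed subgroup containing it that is the smallest such; in particular $M_{1,\gamma_i}\in\mathrm{Gal}\p{\mathrm{VE}_\psi}^\circ$ because the only algebraic subgroup of $\mathrm{GL}_n$ containing a given unipotent element and contained in the connected closure of the one-parameter group it generates is already connected. At higher orders, Proposition \ref{LVEprop} and the recursion \eqref{Qs}--\eqref{Qs2} show that $\Psi_k$ is obtained from $\Psi_{k-1}$ by adjoining primitives (quadratures) of elements in the Picard--Vessiot field of $\mathrm{LVE}_\psi^{k-1}$; such extensions have connected Galois group over the previous stage, and the monodromies $M_{k,\gamma_i}$ consequently lie in $\mathrm{Gal}\p{\mathrm{LVE}_\psi^k}^\circ$ for every $k\ge 1$.

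Third, the explicit non-vanishing entries \eqref{m3values} of the commutator $C=M_{3,\gamma_1}M_{3,\gamma_2}-M_{3,\gamma_2}M_{3,\gamma_1}$ show that $\mathrm{Gal}\p{\mathrm{LVE}_\psi^3}^\circ$ contains two non-commuting elements, hence is non-abelian. Morales--Ramis--Sim\'o (Theorem \ref{moralesramissimo}) then forces the conclusion that $H$ in \eqref{samham2} admits no complete set of meromorphic first integrals in involution for any $M,m>0$, since both entries in \eqref{m3values} are non-zero precisely when $M,m>0$. The main obstacle I foresee is the justification that $M_{k,\gamma_i}\in\mathrm{Gal}^\circ$: it must be handled cleanly using the unipotence of $M_{1,\gamma_i}$ at the base and the iterated-primitive structure of the tower $\Psi_1\subset \Psi_2\subset\Psi_3$, rather than merely invoking ``fundamental matrices obtained from quadratures'' as a slogan. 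The rest is symbolic bookkeeping already carried out in the preceding subsection.
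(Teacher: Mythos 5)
Your proposal is correct and follows essentially the same route as the paper: compute $M_{3,\gamma_1}$ and $M_{3,\gamma_2}$ along the solution \eqref{solutionc} via the quadrature scheme of \eqref{ordermat}--\eqref{Qs2}, verify the non-vanishing commutator entries \eqref{m3values}, place the monodromies in $\mathrm{Gal}\p{\mathrm{LVE}_\psi^3}^\circ$ using the unipotence of $M_{1,\gamma_i}$ together with the quadrature structure of $\Psi_2,\Psi_3$, and conclude by Theorem \ref{moralesramissimo}. Your added care about why the higher-order monodromies remain in the identity component is a welcome sharpening of what the paper states parenthetically, but it is not a different proof.
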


\subsection{Formal first integrals and admissible solutions}
Let us now try to apply gauge transforms to the adjoint system for the same Hamiltonian. We have a particular solution
$
\bm \psi=\left(-\frac{g}{2}  \p{t-1} t,-1,-\frac{g}{2}  \p{M+m} \p{2 t-1},-\frac{g^2m}{4}   \p{t-1} t^2\right),
$
which corresponds to special case $C_1=0$ in \eqref{solutionc}, and gauge transformation $\bm x = P\bm X$ with
{\small\[
P:=\p{
\begin{array}{cccc}
\frac{1-t}{\sqrt{M+m}}& 0 & 0 & 0 \\
0 & \frac{\sqrt{M+m}}{g m \p{t-1}^2} & 0 & 0 \\
-\sqrt{M+m} & \frac{\sqrt{M+m}}{\p{t-1}^2} & \frac{\sqrt{M+m}}{1-t} & 0 \\
\frac{g m \p{1-t}}{\sqrt{M+m}} & \frac{g \p{3 m-M}}{12 \p{t-1}\sqrt{M+m} } -\frac{g t}{4} \sqrt{M+m}  & 0 & \frac{g m \p{t-1}^2}{\sqrt{M+m}}
\end{array}
},
\]}transforms \eqref{VE} into the parameter-free, simplified system $\dot{\bm X}=P\qu{A_1}\bm X$ where
\[
P\qu{A_1} =
\p{
\begin{array}{cccc}
0 & -\frac{1}{\p{t-1}^3} & \frac{1}{\p{t-1}^2} & 0 \\
0 & 0 & 0 & 0 \\
0 & 0 & 0 & 0 \\
0 & 0 & \frac{1}{\p{t-1}^3} & 0
\end{array}
}.
\]
The principal fundamental matrix for this system is
\[
\Psi :=
\p{
\begin{array}{cccc}
1 & \frac{1}{2}\p{\frac{1}{\p{t-1}^2}-1} & \frac{t}{\p{1-t}} & 0 \\
0 & 1 & 0 & 0 \\
0 & 0 & 1 & 0 \\
0 & 0 & \frac{\p{t-2}t}{2\p{t-1}^2} & 1
\end{array}
},
\]
and $P\qu{A_{\mathrm{LVE}^3_\psi}}$ is computed as in \S \ref{61}.
In the following table, the first two columns
correspond to order $k$ and total solution space dimension $D_{4,k}$. The latter two
display the dimension of the subspace of rational solutions (i.e. those having their entries in the base field $K=\nc\p{t}$), and the
subspace of those among the former satisfying \eqref{finaleq}, respectively:
{\small\[
\begin{tabular}{c||c|cc}
  \toprule[1.5pt]
  {$k$} &  {$\dim_\nc \mathrm{Sol}P\qu{\mathrm{LVE}_\psi^k}^\star$}&  {$\dim_\nc \mathrm{Sol}_{K}P\qu{\mathrm{LVE}_\psi^k}^\star$} & {$\dim_\nc \mathrm{Sol}_{\mathrm{adm}}\p{P\qu{\mathrm{LVE}_\psi^k}^\star}$} \\
  \midrule
  $1$ &  $4$ &  $4$  & $\mathbf{3}$\\
  $2$ &  $14$ &  $14$  & $\mathbf{9}$ \\
  $3$ &  $34$ &  $32$  & $\mathbf{17}$ \\
  \bottomrule[1.5pt]
\end{tabular}
\]}Using solution \eqref{solutionc} in \S \ref{61} and  the other $P$ given in \eqref{gaugeP}, however, \allowbreak$\dim_\nc \mathrm{Sol}_{K}\break P\qu{\mathrm{LVE}_\psi^k}^\star$ is considerably reduced and we obtain the following table, displaying lower bounds on the amount of admissible solutions:
{\small\[
\begin{tabular}{c||c|cc}
  \toprule[1.5pt]
  {$k$} &  {$\dim_\nc \mathrm{Sol}P\qu{\mathrm{LVE}_\psi^k}^\star$}&  {$\dim_\nc \mathrm{Sol}_{K}P\qu{\mathrm{LVE}_\psi^k}^\star$} & {$\dim_\nc
 \mathrm{Sol}_{\mathrm{adm}}\p{P\qu{\mathrm{LVE}_\psi^k}^\star}$} \\
  \midrule
  $1$ &  $4$ &  $3$  & $\mathbf{2}$\\
  $2$ &  $14$ &  $9$  & $\mathbf{5}$ \\
  $3$ &  $34$ &  $19$  & $\mathbf{9}$ \\
  \bottomrule[1.5pt]
\end{tabular}
\]}

\section*{Acknowledgments}
The author is supported by Grants \textsc{MTM2006-05849/\break Consolider} and \textsc{MTM2010-16425}. The author is also indebted to
A. Aparicio-Monforte, J. J. Morales-Ruiz, J.-P. Ramis, C. Sim\'o and J.-A. Weil for useful suggestions and discussions.

\renewcommand{\refname}{REFERENCES}

\medskip
Received  March  2013; revised April 2014.
\medskip
\end{document}